\DeclareMathOperator{\qdim}{qdim}
\DeclareMathOperator{\Span}{Span}
\begin{document}
\input amssym.def
\setcounter{equation}{0}
\newcommand{\wt}{{\rm wt}}
\newcommand{\spa}{\mbox{span}}
\newcommand{\Res}{\mbox{Res}}
\newcommand{\End}{\mbox{End}}
\newcommand{\Ind}{\mbox{Ind}}
\newcommand{\Hom}{\mbox{Hom}}
\newcommand{\Mod}{\mbox{Mod}}
\newcommand{\m}{\mbox{mod}\ }
\renewcommand{\theequation}{\thesection.\arabic{equation}}
\numberwithin{equation}{section}

\newtheorem{th1}{Theorem}
\newtheorem{ree}[th1]{Remark}
\newtheorem{thm}{Theorem}[section]
\newtheorem{prop}[thm]{Proposition}
\newtheorem{coro}[thm]{Corollary}
\newtheorem{lem}[thm]{Lemma}
\newtheorem{rem}[thm]{Remark}
\newtheorem{de}[thm]{Definition}
\newtheorem{hy}[thm]{Hypothesis}
\newtheorem{conj}[thm]{Conjecture}
\newtheorem{ex}[thm]{Example}
\newtheorem{theorem}{Theorem}[section]
\newtheorem{corollary}[theorem]{Corollary}
\newtheorem{lemma}[theorem]{Lemma}
\newtheorem{proposition}[theorem]{Proposition}
\theoremstyle{definition}
\newtheorem{definition}[theorem]{Definition}
\theoremstyle{remark}
\newtheorem{remark}[theorem]{Remark}
\newtheorem{example}[theorem]{Example}
\numberwithin{equation}{section}

\begin{center}
{\Large {\bf On \(S\)-matrix, and fusion rules for irreducible \(V^G\)-modules}}\\
\vspace{0.5cm}

Liuyi Zhang\\
Department of Mathematics, \\
University of
California, Santa Cruz, \\
CA 95064 USA \\
zhang.liuyi@qq.com\\
Li Wu\\
Chern Institute of Mathematics, \\
Key Lab of Pure Mathematics and Combinatorics
of Ministry of Education, \\
Nankai University, Tianjin 300071, People's Republic of China\\
nankai.wuli@gmail.com

\end{center}

\begin{abstract}
    \noindent Let \(V\) be a simple vertex operator algebra, and \(G\) a finite automorphism group of \(V\) such that \(V^G\) is regular. The definition of entries in \(S\)-matrix on \(V^G\) is discussed, and then is extended. The set of \(V^G\)-modules can be considered as a unitary space. In this paper, we obtain some connections between \(V\)-modules and \(V^G\)-modules over that unitary space. As an application, we determine the fusion rules for irreducible \(V^G\)-modules which occur as submodules of irreducible \(V\)-modules by the fusion rules for irreducible \(V\)-modules and by the structure of \(G\).
\end{abstract}

\section{Introduction}
\noindent This paper deals with fusion rules for irreducible \(V^G\)-modules which occur as submodules of irreducible \(V\)-modules. In \cite[Theorem 2]{tanabe2005intertwining}, a lower bound of those fusion rules is introduced. In the recent study of the orbifold theory, Dong-Ren-Xu \cite{2015arXiv150703306DOrbifold} prove that, assuming certain conditions, every irreducible \(V^G\)-module is a \(V^G\)-submodule of irreducible \(g\)-twisted \(V\)-modules for some \(g
\in G\). Motivated by the decomposition \cite[Equation 3.1]{2015arXiv150703306DOrbifold}, we show that the lower bound given by Tanabe is the desired fusion rules. \\
\\
\noindent Using the extended definition of entries in \(S\)-matrix, we establish a unitary space on the set of \(V^G\)-modules.
The inner product on the unitary space is derived from the entries \(S_{M_i,M_j}\). We show that the linear subspace spanned by \(g\)-twisted \(V\)-modules, \(\mathcal{S}_{V}(G)=\bigcup_{g\in G}\mathscr{M}(g)\), consists of exactly the annihilators of the linear subspace spanned by irreducible \(V^G\)-modules which occur as submodules of irreducible \(g\neq id\) twisted \(V\)-modules. \\
\\
\noindent Finding all desired entries \(S\)-matrix is a practical way to compute the fusion rules of irreducible modules of \(V^G\). Using the extended definition of entries in \(S\)-matrix, we show that the entries \(S_{M_{\lambda},E}\) are distributed proportionally to the quantum dimension of \(M_{\lambda}\).
The concept of quantum dimension in vertex operator algebra is first introduced in the paper \cite{dong2013quantum}.
As an application, we obtain the following main theorem.
\begin{theorem}
Let $M,N,F\in\mathscr{M}_V$. The notations are defined in Remark \ref{remarkInequality}.
Then, we have
\[
 N^{F_{\xi}}_{M_{\lambda_1},N_{\chi_1}}=
 \dim_{\mathbb{C}}\Hom_{\mathcal{A}_{\alpha_{\mathscr{O}(F)}}(G,\mathscr{O}(F))}(\mathrm{Ind}^{D(F)}_{S(F)}V_{\xi},\mathcal{I}_{\lambda_1,\chi_1}(v^1,v^2)).
\].
 \end{theorem}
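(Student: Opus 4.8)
The plan is to route the computation through the Verlinde formula for the regular vertex operator algebra $V^{G}$ and then convert the resulting $S$-matrix sum into finite-group data using the proportionality of $S$-entries established earlier in the paper. First I would recall the classification of irreducible $V^{G}$-modules: by the decomposition quoted from \cite{2015arXiv150703306DOrbifold} in the introduction, every irreducible $V^{G}$-module is labelled by a $G$-orbit $\mathscr{O}$ of irreducible (possibly twisted) $V$-modules together with an irreducible module over the twisted group algebra $\mathcal{A}_{\alpha_{\mathscr{O}}}(G,\mathscr{O})$ attached to that orbit. Applying the Verlinde formula for $V^{G}$, the fusion number $N^{F_{\xi}}_{M_{\lambda_1},N_{\chi_1}}$ becomes a sum over all irreducible $V^{G}$-modules $E$ of $S_{M_{\lambda_1},E}\,S_{N_{\chi_1},E}\,\overline{S_{F_{\xi},E}}/S_{V^{G},E}$; this is the "practical" reformulation advertised in the introduction.

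Next I would substitute the extended $S$-entries. Using the result that $S_{M_{\lambda},E}$ is distributed proportionally to $\qdim M_{\lambda}$ — more precisely, that each $V^{G}$-entry factors as an $S$-entry of the parent $V$-module times a (twisted) character value of the stabilizer coming from $\lambda$ and from the orbit label of $E$ — the Verlinde sum splits into a "$V$-part" and a "group-theoretic part". Regrouping the sum over $E$ according to the $V$-module in which $E$ sits, the $V$-part collapses, via the Verlinde formula for $V$, into the $V$-fusion multiplicities $N^{W}_{M,N}$, while the remaining sum over the three orbits becomes a weighted sum of products of twisted characters of $\mathcal{A}_{\alpha_{\mathscr{O}(M)}},\mathcal{A}_{\alpha_{\mathscr{O}(N)}},\mathcal{A}_{\alpha_{\mathscr{O}(F)}}$. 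The crucial step is to recognise this weighted character sum, via the orthogonality relations for twisted group algebras together with Frobenius reciprocity for the induction $\mathrm{Ind}^{D(F)}_{S(F)}$, as the dimension of $\Hom_{\mathcal{A}_{\alpha_{\mathscr{O}(F)}}(G,\mathscr{O}(F))}(\mathrm{Ind}^{D(F)}_{S(F)}V_{\xi},\mathcal{I}_{\lambda_1,\chi_1}(v^1,v^2))$, where $\mathcal{I}_{\lambda_1,\chi_1}(v^1,v^2)$ is exactly the $\mathcal{A}_{\alpha_{\mathscr{O}(F)}}(G,\mathscr{O}(F))$-module assembled in Remark \ref{remarkInequality} from the $V$-intertwining spaces along the orbit $\mathscr{O}(F)$ twisted by the $\lambda_1$- and $\chi_1$-data. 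As a by-product this identification forces the Verlinde number to be a non-negative integer and shows that it coincides with Tanabe's lower bound from \cite[Theorem 2]{tanabe2005intertwining}, so that the inequality there is in fact an equality.

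The main obstacle I expect is the cocycle and orbit bookkeeping in this last identification. One has to verify that the three 2-cocycles $\alpha_{\mathscr{O}(M)},\alpha_{\mathscr{O}(N)},\alpha_{\mathscr{O}(F)}$ combine consistently so that $\mathcal{I}_{\lambda_1,\chi_1}(v^1,v^2)$ is a genuine module over $\mathcal{A}_{\alpha_{\mathscr{O}(F)}}(G,\mathscr{O}(F))$ and not merely a projective one, and that the induction from $S(F)$ up to $D(F)$ precisely accounts for running over the orbit $\mathscr{O}(F)$ of the target module; this is where the associativity of intertwining operators and the $G$-equivariance of fusion must be used in the exact form needed to pass from Tanabe's construction to a spanning statement. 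Once the twisted structures are correctly in place, the remaining manipulation is a finite-group computation, and the quantum-dimension and $S$-matrix inputs from the earlier sections make that computation routine.
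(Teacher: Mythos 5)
Your route is genuinely different from the paper's, and as written it has a gap at its central step. You propose to evaluate $N^{F_{\xi}}_{M_{\lambda_1},N_{\chi_1}}$ directly from the Verlinde formula for $V^G$ and then to factor each entry $S^{V^G}_{M_{\lambda_1},E}$ as a parent $S$-entry of $V$ times a (twisted) character value. But the Verlinde sum runs over \emph{all} of $\mathscr{M}_{V^G}$, including the type-two (twisted-sector) modules $E$, and for those the paper controls no such factorization: Lemma \ref{lemmaSII=0} only says that $S^{V^G}_{M,E}=0$ for the whole $V$-module $M$, i.e.\ the $\dim W_{\lambda}$-weighted sum of the entries $S^{V^G}_{M_{\lambda},E}$ vanishes, while the individual entries $S^{V^G}_{M_{\lambda_1},E}$ are in general nonzero and are governed by $S$-transformations of the twisted trace functions $Z_M(v,(1,h),\tau)$ with $h\neq 1$. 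The proportionality results you invoke (Theorem \ref{theoremMainConstant}, and the identity $S^{V^G}_{M_{\lambda},N_{\mu}}=\frac{1}{|G_M|}S^V_{M,N}\,\mu(1)\overline{\lambda(1)}$ inside its proof) are extracted by comparing only the $h=1$ coefficient, so they say nothing about rows or columns indexed by type-two modules. Hence the step where ``the $V$-part collapses, via the Verlinde formula for $V$, into the $V$-fusion multiplicities'' does not go through: the twisted-sector contributions do not reduce to $V$-fusion data unless one already has the full orbifold modular data (an $S$-matrix indexed by commuting pairs, factorizing through the twisted group algebras $\mathcal{A}_{\alpha}$), which is precisely the statement you would be assuming rather than proving. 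You flag the cocycle and orbit bookkeeping as the main obstacle, but that bookkeeping \emph{is} the theorem in your approach, and nothing in the paper supplies it.

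The paper avoids all twisted-sector $S$-entries by a squeeze argument and never computes any individual $N^{F_{\xi}}$ from Verlinde. Writing $H^{F_{\xi}}_{M_{\lambda_1},N_{\chi_1}}$ for Tanabe's lower bound, it evaluates the single aggregate $\sum_{\xi}H^{F_{\xi}}_{M_{\lambda_1},N_{\chi_1}}\qdim_{V^G}F_{\xi}$ by a pure dimension count of $\mathcal{I}_{\lambda_1,\chi_1}(v^1,v^2)$, using semisimplicity of $\mathcal{A}_{\alpha_{\mathscr{O}(F)}}(G,\mathscr{O}(F))$ whose simple modules are the $\mathrm{Ind}^{D(F)}_{S(F)}V_{\xi}$, and shows it equals $\sum_{\xi}N^{F_{\xi}}_{M_{\lambda_1},N_{\chi_1}}\qdim_{V^G}F_{\xi}$, the latter having been computed in Corollary \ref{coro:qdim_partial} from Theorem \ref{theoremMainProportional} (proportional distribution), Theorem \ref{theoremMainFusion} ($M\boxtimes_{V^G}N=\bigoplus_{g\in G}M\boxtimes_V(N\circ g)$) and the inner product of Remark \ref{rm:length_M}. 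Since $\qdim_{V^G}F_{\xi}>0$ and $N^{F_{\xi}}_{M_{\lambda_1},N_{\chi_1}}\geq H^{F_{\xi}}_{M_{\lambda_1},N_{\chi_1}}$ termwise by (\ref{eq:inequality_fusion}), equality of the weighted sums forces termwise equality. Note that Tanabe's inequality is the engine of the paper's proof, whereas in your sketch it appears only as a by-product. To salvage your approach you would either have to establish the twisted-sector factorization of $S^{V^G}$ (essentially the full orbifold Verlinde statement), or reorganize your computation into an aggregate identity and then invoke the termwise inequality, which is what the paper does.
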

\noindent The paper is organized as follows.
In Section 2, we review some results used in this paper for \(g\)-rational vertex operator algebras, the modular invariance of trace functions, Verlinde formula, and orbifold theory from \cite{dong1998twisted}, \cite{dong2000modular}, \cite{2015arXiv150703306DOrbifold}, \cite{huang2008vertex}.
In Section 3, we extend the definition of the entries in the \(S\) matrix, and then derive some properties based on the extended definition.
In Section 4, we establish a unitary space on the set of \(V^G\)-modules, and then analyze the structure of the unitary space.
This structure is useful in discussing the action of a group \(G\) on \(g\)-twisted modules. As an application, we show that \(G_M=G\) under certain assumptions.
In Section 5, we show the "evenly distributive property" of certain entries in the \(S\)-matrix, and then give a general formula of the fusion rules for irreducible \(V^G\)-modules which occur as submodules of irreducible \(V\)-modules by the fusion rules for irreducible \(V\)-modules and by the structure of \(G\).

\section{Preliminaries}
\subsection{Basics}
\noindent The definition of vertex operator algebra \(V=(V,Y,\mathbf{1},\omega)\) is introduced and is developed in \cite{borcherds1986vertex}, \cite{frenkel1989vertex}.
The definition of module (including weak, admissible and ordinary modules) are given in \cite{dong1997regularity}, \cite {dong1998twisted}. \\
\\
According to \cite{dong1998twisted}, a vertex operator algebra is called rational if any admissible module is a direct sum of irreducible admissible modules.
According to \cite{zhu1996modular}, a vertex operator algebra \(V\) is called \(C_2\)-cofinite if the subspace \(C_2(V)\) is spanned by \(u_{-2}v\) for all \(u,v\in V\) has finite codimension in \(V\).
A \(C_2\)-cofinite vertex operator algebra is finitely generated with a PBW-like spanning set \cite{Gaberdiel2003}. \\
\\
According to \cite{dong1997regularity}, a vertex operator algebra is called regular if any weak module is a direct sum of irreducible ordinary modules. In \cite{abe2004rationality}, \cite{LI1999495}, the authors show that regularity is equivalent to the combination of rationality and \(C_2\)-cofiniteness if \(V\) is of CFT type (\(V_0=\mathbb{C}\mathbf{1}\) and \(V_n=0\) for \(n<0\)). Regular vertex operator algebras include many important vertex operator algebras such as the lattice vertex operator algebras, vertex operator algebras associated to the integrable highest weight modules for affine Kac-Moody algebras, vertex operator algebras associated to the discrete series for the Virasoro algebra, the framed vertex operator algebras. \\
\\
Let \(V\) be a vertex operator algebra and \(g\) an automorphism of \(V\) of finite order \(T\). Then \(V\) is a direct sum of eigenspaces of \(g\):
\begin{equation*}
    V=\bigoplus_{r\in \mathbb{Z}/T\mathbb{Z}}V^r,
\end{equation*}
where \(V^r={v\in V| gv=e^{-2\pi ir/T}v}\).
Use \(r\) to denote both an integer between \(0\) and \(T-1\) and its residue class mod \(T\) in this situation.
The definitions of \(g\)-twisted \(V\)-module (including weak, admissible, ordinary modules) and \(g\)-rational vertex operator algebra are given in \cite[Definitions 2.1, 2.2, 2.3, 2.4]{2015arXiv150703306DOrbifold}.
\subsection{Modular Invariance}
\noindent Let \(V\) be a vertex operator algebra, \(g\) an automorphism of \(V\) of order \(T\) and \(M=\bigoplus_{n\in\frac{1}{T}\mathbb{Z}_+}M_{\lambda+n}\) a \(g\)-twisted \(V\)-module. \\
\\
For any homogeneous element \(v\in V\), the trace function associated to \(v\) is defined to be
\begin{equation*}
    Z_M(v,q)=\mathrm{tr}_Mo(v)q^{L(0)-\frac{c}{24}}=q^{\lambda-\frac{c}{24}}\Sigma_{n\in \frac{1}{T}\mathbb{Z}_+}\mathrm{tr}_{M_{\lambda+n}}o(v)q^n,
\end{equation*}
where \(o(v)=v(\mathrm{wt}v-1)\) is the degree zero operator of \(v\).
According to \cite{zhu1996modular}, \cite{dong2000modular}, the trace function \(Z_M(v,q\) converges to a holomorphic function on the domain \(|q|<1\) if \(V\) is \(C_2\)-cofinite.
Let \(\tau\) be in the complex upper half-plane \(\mathbb{H}\) and \(q=e^{2\pi i\tau}\). Then, the holomorphic function \(Z_M(v,q)\) becomes \(Z_M(v,\tau)\). \\
\\
Let \(v=1\) be the vacuum vector. Then, the holomorphic function \(Z_M(\mathbf{1},q)\) becomes the formal character of \(M\).
Denote \(Z_M(1,q)\) and \(Z_M(1,\tau)\) by \(\chi_M(\tau)\) and \(\chi_M(\tau)\), respectively.
We call \(\chi_M(q)\) the character of \(M\)\\
\\
In \cite{dong2000modular}, the authors discuss the action of \(\mathrm{Aut}(V)\) on twisted modules.
Let \(g,h\in \mathrm{Aut}(V)\) with \(g\) of finite order.
If \(M,Y_M\) is a weak \(g\)-twisted \(V\)-module, there is a weak \(h^{-1}gh\)-twisted \(V\)-module \(M\circ h,Y_{M\circ h}\), where \(M\circ h \cong M\) as vector spaces and
\begin{equation*}
    Y_{M\circ h}(v,z)=Y_M(hv,z),
\end{equation*}
for \(v\in V\).
This defines a left action of \(\mathrm{Aut}(V)\) on weak twisted \(V\)-modules and on isomorphism classes of weak twisted \(V\)-modules. Symbolically, write
\begin{equation*}
    (M,Y_M)\circ h= (M\circ h, Y_{M\circ h})=M\circ h,
\end{equation*}
where we sometimes abuse notation slightly by identifying \(M,Y_M\) with the isomorphism class that it defines. \\
\\
If \(g,h\) commute, \(h\) acts on the \(g\)-twisted modules.
Denote by \(\mathscr{M}_V(g)\) the equivalence classes of irreducible \(g\)-twisted \(V\)-modules and set \(\mathscr{M}_V(g,h)=\{M\in \mathscr{M}_V(g)|h \circ M \cong M\}\).
It is well known \cite{dong1998twisted}, \cite{dong2000modular} that if \(V\) is \(g\)-rational, both \(\mathscr{M}_V(g)\) and \(\mathscr{M}_V(g,h)\) are finite sets.
For any \(M\in\mathscr{M}_V(g,h)\), there is a \(g\)-twisted \(V\)-modules isomorphism
\begin{equation*}
    \varphi(h): M\circ h \rightarrow M.
\end{equation*}
The linear map \(\varphi(h)\) is unique up to a nonzero scalar. If \(h=1\), we simply take \(\varphi(1)=1\).
For \(v\in V\), set
\begin{equation*}
    Z_M(v,(g,h),\tau)=\mathrm{tr}_M o(v)\varphi(h)q^{L(0)-\frac{c}{24}}=q^{\lambda-\frac{c}{24}}\Sigma_{n\in \frac{1}{T}\mathbb{Z}_+}\mathrm{tr}_{M_{\lambda+n}}o(v)\varphi(h)q^n,
\end{equation*}
which is a holomorphic function on \(\mathbb{H}\) (see \cite{dong2000modular}).
Note that \(Z_M(v,(g,h),\tau)\) is defined up a nonzero scalar.
If \(h=1\), then \(Z_M(v,(g,1),\tau)=Z_M(v,\tau)\). \\
\\
The assumptions made in \cite{2015arXiv150703306DOrbifold} are
\begin{itemize}
    \item (V1) \(V=\bigoplus_{n\geq 0}V_n\) is a simple vertex operator algebra of CFT type,
    \item (V2) \(G\) is a finite automorphism group of \(V\) and \(V^G\) is a vertex operator algebra of CFT type,
    \item (V3) \(V^G\) is \(C_2\)-cofinite and rational,
    \item (V4) The conformal weight of any irreducible \(g\) twisted \(V\)-module for \(g\in G\) except \(V\) itself is positive.
\end{itemize}
In the rest of this paper, we also assume these four conditions. \\\
\\
The following results are obtained in \cite{abe2004rationality}, \cite{arXiv2014irreducible}, \cite{huang2015braided}.
\begin{lemma}
    Let \(V\) and \(G\) be as before. Then, \(V\) is \(C_2\)-cofinite, and \(V\) is \(g\)-rational for all \(g\in G\).
\end{lemma}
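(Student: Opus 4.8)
The plan is to separate the statement into two parts: that $V$ is $C_2$-cofinite, and that $V$ is $g$-rational for each $g\in G$. Both are extracted from hypotheses (V1)--(V4), and in particular from the regularity of $V^G$. The common first step is the decomposition of $V$ as a $V^G$-module. Since $\omega\in V^G$, the $L(0)$-grading makes $V=\bigoplus_{n\ge 0}V_n$ an ordinary (hence admissible) $V^G$-module, so by rationality of $V^G$ it is a direct sum of irreducible $V^G$-modules. Because $V^G$ is also $C_2$-cofinite it has only finitely many inequivalent irreducibles, and by the quantum Galois theory of Dong--Li--Mason the $\Hom$-multiplicity space of each irreducible $V^G$-module in $V$ is an irreducible $\mathbb{C}[G]$-module, hence finite-dimensional. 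Therefore $V=\bigoplus_{i=1}^{k}M_i$ is a \emph{finite} direct sum of irreducible $V^G$-modules; the same holds for every intermediate vertex operator algebra $U$ with $V^G\subseteq U\subseteq V$, such as $U=V^{\langle g\rangle}$, since $U$ is a $V^G$-submodule of $V$.

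For $C_2$-cofiniteness: every irreducible module over the $C_2$-cofinite algebra $V^G$ is itself $C_2$-cofinite, so by the previous paragraph $V$ is finitely generated as a $V^G$-module and $V/\spa\{a_{-2}v\mid a\in V^G,\ v\in V\}$ is finite-dimensional. Since $\spa\{a_{-2}v\mid a\in V^G,\ v\in V\}\subseteq C_2(V)$, the quotient $V/C_2(V)$ is also finite-dimensional, i.e. $V$ is $C_2$-cofinite. Applied to $U=V^{\langle g\rangle}$ in place of $V$, the same computation shows each cyclic orbifold $V^{\langle g\rangle}$ is $C_2$-cofinite.

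For rationality: since $V^G$ is simple, rational, $C_2$-cofinite and of CFT type, the category of $V^G$-modules is a modular tensor category (Huang), and $V$ --- an ordinary $V^G$-module carrying a compatible vertex-algebra structure with $V$ simple --- is a commutative haploid (Frobenius) algebra object in it. By the tensor-categorical extension theory (Huang--Kirillov--Lepowsky, Kirillov--Ostrik) the category of ordinary $V$-modules is equivalent to the category of local $V$-modules over this algebra, which is again a finite semisimple category; alternatively one may cite \cite{abe2004rationality} directly. Hence $V$ is rational (and $C_2$-cofinite), and so is $V^{\langle g\rangle}$ for every $g\in G$.

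It remains to upgrade rationality of $U:=V^{\langle g\rangle}$ to $g$-rationality of $V$. Every admissible $g$-twisted $V$-module restricts to an admissible $U$-module, and conversely the canonical induction functor sends admissible $U$-modules to direct sums of $g^{i}$-twisted $V$-modules; feeding the complete reducibility and finiteness of the $U$-module category through this adjunction shows that every admissible $g$-twisted $V$-module is a direct sum of irreducible ones, i.e. $V$ is $g$-rational. The main obstacle will be precisely this last step: transporting semisimplicity from $U$-modules to the category of $g$-twisted $V$-modules is not formal and needs the twisted analogue of the extension machinery (the twisted-module versions of Huang--Kirillov--Lepowsky developed in \cite{arXiv2014irreducible}, \cite{huang2015braided}). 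The bookkeeping checks along the way --- that $V$ is an ordinary, not merely weak, $V^G$-module, and that the algebra object attached to $V$ is genuinely commutative, haploid and Frobenius --- are routine under (V1)--(V4).
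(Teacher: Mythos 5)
Your sketch is essentially the paper's own treatment: the paper gives no proof of this lemma, simply quoting it from \cite{abe2004rationality}, \cite{arXiv2014irreducible} and \cite{huang2015braided}, and your route --- the finite semisimple decomposition of $V$ over $V^G$ via quantum Galois theory, the $a_{-2}$-span argument for $C_2$-cofiniteness, and the Huang--Kirillov--Lepowsky/Kirillov--Ostrik extension theory together with its twisted analogue for $g$-rationality --- is exactly the standard assembly of those cited results. The step you flag as non-formal (transporting semisimplicity to admissible $g$-twisted $V$-modules) is precisely what the cited twisted extension machinery provides, so your proposal is, if anything, more explicit than the paper's citation-only argument.
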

\noindent In Reference \cite{zhu1996modular}, Zhu introduced a second vertex operator algebra \((V,Y[\ ],\mathbf{1}, \tilde{\omega})\) associated to \(V\), where \(\tilde{\omega}=\omega-\frac{c}{24}\) and
\[
    Y[v,z]=Y(v,e^{z}-1)e^{z.\mathrm(wt)(v)}=\sum_{n\in \mathbb{Z}}v[n]z^{n-1}
\]
for homogeneous \(v\).
Write
\[
    Y[\tilde{\omega}, z]=\sum_{n\in \mathbb{Z}}L[n]z^{-n-2}.
\]
Carefully distinguish between the notion of conformal weight in the original vertex operator algebra and in the second vertex operator algebra \((V,Y[\ ],\mathbf{1}, \tilde{\omega})\). If \(v\in V\) is homogenous in the second vertex operator algebra, denote its weight by wt\([v]\). For such \(v\), define an action of the modular group \(\Gamma\) on \(T_M\) in a familiar way, namely
\begin{equation}
    \label{equationGeneralModularInvarianceFormula}
    Z_{M|\gamma}(v,(g,h),\tau)=(c\tau+d)^{-\mathrm{wt}[v]}Z_M(v,(g,h),\gamma\tau),
\end{equation}
where \(\gamma\tau\) is the Mobius transformation; that is,
\[
    \gamma: \tau \mapsto \frac{a\tau+b}{c\tau+d}, \ \gamma=\left(\begin{array}{cc}
a & b\\
c & d
\end{array}\right)\in \Gamma = SL\left(2,\mathbb{Z}\right).
\]
\noindent Let \(P(G)\) denote the commuting pairs of elements in a group \(G\). Let \(\gamma\in \Gamma\) act on the right of \(P(G)\) via
\[
    (g,h)\gamma=(g^ah^c,g^bh^d).
\]
The following results are from \cite{zhu1996modular}, \cite{dong2010characterization}.
\begin{theorem}\label{theoremModularInvariance}
    Assume \((g,h)\in P(\mathrm{Aut}(V))\) such that the orders of \(g\) and \(h\) are finite.
    Let \(\gamma=\left(\begin{array}{cc}
    a & b\\
    c & d
    \end{array}\right)\in \Gamma\). Also assume that \(V\) is \(g^ah^c\)-rational and \(C_2\)-cofinite.
    If \(M^i\) is an irreducible \(h\)-stable \(g\)-twisted \(V\)-module, then
    \[
        T_M|_{\gamma}(v,g,h,\tau)=\sum_{N_{j}\in \mathscr{M}(g^ah^c,g^bh^d)}\gamma_{i,j}(g,h)T_{N_{j}}(v,(g,h)\gamma,\tau),
    \]
    where \(\gamma_{i,j}(g,h)\) are some complex numbers independent of the choice of \(v\in V\).
    
    (2) The cardinalities $|\mathscr{M}(g,h)|$ and $|\mathscr{M}(g^a h^c,g^b h^d)|$ are equal for any $(g,h)\in P(G)$ and $\gamma\in \Gamma$.
 In particular, the number of inequivalent irreducible $g$-twisted $V$-modules  is exactly the number of  irreducible $V$-modules which are $g$-stable.
\end{theorem}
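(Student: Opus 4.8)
This is the twisted, $(g,h)$-equivariant form of Zhu's modular-invariance theorem, and the plan is to reconstruct its proof along the lines of \cite{zhu1996modular}, \cite{dong2000modular}, \cite{dong2010characterization}. Throughout I write $(g',h'):=(g,h)\gamma=(g^ah^c,g^bh^d)$, and use the second vertex operator algebra $(V,Y[\,],\mathbf{1},\tilde{\omega})$, whose grading $\mathrm{wt}[\,\cdot\,]$ is the one controlling modular behaviour. The first step is to introduce, for each commuting pair of finite-order automorphisms $(g,h)$, the space $\mathscr{B}(g,h)$ of formal $1$-point functions on the torus: the linear span of the $q$-series $Z_M(v,(g,h),\tau)$ over $v\in V$ and $M\in\mathscr{M}(g,h)$, enlarged by the auxiliary series produced by Zhu's reduction procedure. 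Two facts about this space are needed. (i) Since $V$ is $C_2$-cofinite (the Lemma above), every element of $\mathscr{B}(g,h)$ converges on $\mathbb{H}$ and $\mathscr{B}(g,h)$ is finite-dimensional. (ii) When $V$ is $g$-rational, $\mathscr{B}(g,h)$ is spanned precisely by the trace functions $Z_{N}(\,\cdot\,,(g,h),\tau)$ with $N\in\mathscr{M}(g,h)$, because any $g$-twisted module is a direct sum of irreducibles and an irreducible that is not $h$-stable makes no contribution to a $\varphi(h)$-twisted trace.

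Next I would establish the twisted Zhu recursion formulas, expressing an $n$-point twisted trace function in terms of $1$-point functions with coefficients assembled from twisted Weierstrass/Eisenstein functions $P_k\!\left[\begin{smallmatrix}\theta\\\varphi\end{smallmatrix}\right](z,\tau)$ whose quasi-periodicities encode the eigenvalue data of $(g,h)$. The decisive computation is the $SL_2(\mathbb Z)$-transformation of these functions: under $\tau\mapsto\gamma\tau,\ z\mapsto z/(c\tau+d)$ one gets $(c\tau+d)^k$ times the same function attached to the characters transformed by $(\theta,\varphi)\mapsto(\theta,\varphi)\gamma$, which is exactly the action $(g,h)\mapsto(g,h)\gamma$ on $P(\mathrm{Aut}(V))$. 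Keeping track of weights via $\mathrm{wt}[\,\cdot\,]$, this shows that $v\mapsto(c\tau+d)^{-\mathrm{wt}[v]}Z_{M^i}(v,(g,h),\gamma\tau)$ satisfies all the defining relations of $\mathscr{B}(g',h')$, hence lies in that space. By fact (ii), applied under the hypothesis that $V$ is $g^ah^c$-rational, $\mathscr{B}(g',h')$ is spanned by $\{Z_{N_j}(\,\cdot\,,(g,h)\gamma,\tau):N_j\in\mathscr{M}(g',h')\}$; expanding $T_M|_\gamma$ in this spanning set yields part (1), the coefficients $\gamma_{i,j}(g,h)$ being independent of $v$ because the expansion holds for all $v$ against a fixed basis of trace functions.

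For part (2) I would apply part (1) to $\gamma$ and to $\gamma^{-1}$: the linear map induced by $\gamma$ then carries $\mathrm{span}\{T_M:M\in\mathscr{M}(g,h)\}$ invertibly onto $\mathrm{span}\{T_N:N\in\mathscr{M}(g',h')\}$. Since the trace functions of pairwise inequivalent irreducible twisted modules are linearly independent (they are separated by the $q$-grading together with the action of the Zhu algebra on lowest-weight spaces), these two spans have dimensions $|\mathscr{M}(g,h)|$ and $|\mathscr{M}(g',h')|$, which are therefore equal; for $(g,h)\in P(G)$ the required rationality is automatic by the Lemma. Taking $h=1$ and $\gamma=S=\left(\begin{smallmatrix}0&-1\\1&0\end{smallmatrix}\right)$ gives $(g,1)\gamma=(1,g^{-1})$, so $|\mathscr{M}(g)|=|\mathscr{M}(1,g^{-1})|$, the number of irreducible $V$-modules $M$ with $g^{-1}\circ M\cong M$, i.e.\ the number of $g$-stable irreducible $V$-modules.

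I expect the main obstacle to be the first step: simultaneously deducing the finite-dimensionality of $\mathscr{B}(g,h)$ from $C_2$-cofiniteness and the precise spanning statement (ii) from rationality — this is where the substance of Zhu's argument (the $C_2$-spanning set, the recursion, and convergence) and of its twisted generalization resides. Once that machinery is in place, Step 2 is bookkeeping with twisted elliptic functions and Step 3 is elementary linear algebra.
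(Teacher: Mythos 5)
The paper does not prove this theorem at all---it is quoted verbatim from \cite{zhu1996modular} and \cite{dong2010characterization} (see also \cite{dong2000modular})---and your outline is a faithful reconstruction of the standard argument in those references: the finite-dimensional space of $(g,h)$ one-point functions controlled by $C_2$-cofiniteness, the spanning of that space by traces of irreducible $h$-stable $g$-twisted modules under $g^ah^c$-rationality, the $SL_2(\mathbb{Z})$-transformation of the twisted Eisenstein/Weierstrass kernels realizing $(g,h)\mapsto(g,h)\gamma$, and, for part (2), invertibility of the transformation together with linear independence of the trace functions, with the specialization $h=1$, $\gamma=S$ giving $|\mathscr{M}(g)|=|\mathscr{M}(1,g^{-1})|$. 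So your proposal is correct and takes essentially the same route as the source on which the paper relies.
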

\noindent For convention, use \(\mathscr{M}_V\) for \(\mathscr{M}_V(1)\), the set of all irreducible \(V\)-modules.
\subsection{Fusion rules and Verlinde Formula}
\noindent Let \(V\) be as before and \(M,N,W\in \mathscr{M}_V\). The fusion rule \(N_{M,N}^W\) is \(\mathrm{dim}I_V
\begin{pmatrix}
    W\\
    \begin{matrix} M & N \end{matrix}
\end{pmatrix}\), where \(I_V\begin{pmatrix}
    W\\
    \begin{matrix} M & N \end{matrix}
\end{pmatrix}\) is the space of intertwining operators of type \(\begin{pmatrix}
    W\\
    \begin{matrix} M & N \end{matrix}
\end{pmatrix}\). \\
\\
Since \(V\) is rational, there is a tensor product \(\boxtimes\) of two \(V\)-modules (see \cite{huang1995theoryI}, \cite{huang1995theoryII}, \cite{huang1995theoryIII}) such that if \(M,N\) are irreducible then \(M\boxtimes N=\sum_{W\in \mathscr{M}_V}N_{M,N}^W W\). The irreducible \(V\)-module is called a simple current if \(M\boxtimes N\) is irreducible again for any irreducible module \(N\). \\
\\
The following Verlinde formula (see \cite{verlinde1988fusion}) is proved in \cite{huang2008vertex}.
\begin{theorem}
    \label{theoremVerlindeHuang}
    Let \(V\) be a rational and \(C_2\)-cofinite simple vertex operator algebra of CFT type and assume \(V\cong V'\). Let \(S=(S_{i,j})\) be the \(S\)-matrix. Then,
    \begin{itemize}
        \item (1) \((S^{-1})_{i,j}=S_{i,j'}=S_{i',j}\), and \(S_{i',j'}=S_{i,j}\).
        \item (2) \(S\) is symmetric and \(S^2=(\delta_{i,j'})\).
        \item (3) \(N_{i,j}^{k}=\sum_{s=0}^{d}\frac{1}{S_{0,s}}{S_{j,s}S_{i,s}S_{k',s}}\).
        \item (4) The \(S\)-matrix diagonalizes the fusion matrix \(N(i)=(N_{i,j}^{k})_{j,k=0}^{d}\) with diagonal entries \(\frac{S_{i,s}}{S_{0,s}}\), for \(i,s\in \{0,1,\cdots, d\}\). More explicitly, \(S^{-1}N(i)S=\mathrm{diag}(\frac{S_{i,s}}{S_{0,s}})_{s=0}^{d}\). In particular, \(S_{0,s}\neq 0\), for \(s\in \{0,1,\cdots, d\}\).
    \end{itemize}
\end{theorem}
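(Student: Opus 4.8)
\noindent The plan is to obtain the stated identity by squeezing the fusion number $N^{F_{\xi}}_{M_{\lambda_1},N_{\chi_1}}$ between two matching bounds. The lower bound is already available: by Tanabe's construction \cite[Theorem 2]{tanabe2005intertwining}, each element of the Hom-space on the right-hand side produces an intertwining operator of the corresponding type for $V^G$-modules, and these intertwining operators are linearly independent, so that
\[
 N^{F_{\xi}}_{M_{\lambda_1},N_{\chi_1}}\ \ge\ \dim_{\mathbb{C}}\Hom_{\mathcal{A}_{\alpha_{\mathscr{O}(F)}}(G,\mathscr{O}(F))}\bigl(\mathrm{Ind}^{D(F)}_{S(F)}V_{\xi},\mathcal{I}_{\lambda_1,\chi_1}(v^1,v^2)\bigr).
\]
The real content is therefore the reverse inequality, and for this I would exploit the machinery of Sections 3--5, namely the extended $S$-matrix entries, the unitary-space structure, and the evenly distributive property.

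\noindent First I would apply the Verlinde formula. Since $V^G$ meets the hypotheses of Theorem \ref{theoremVerlindeHuang} (it is rational and $C_2$-cofinite by (V3), of CFT type by (V2), and self-dual), part (3) of that theorem gives
\[
 N^{F_{\xi}}_{M_{\lambda_1},N_{\chi_1}}=\sum_{E}\frac{S_{M_{\lambda_1},E}\,S_{N_{\chi_1},E}\,S_{(F_{\xi})',E}}{S_{V^G,E}},
\]
the sum running over all irreducible $V^G$-modules $E$, with $V^G$ the unit object. By the Dong--Ren--Xu decomposition \cite[Equation 3.1]{2015arXiv150703306DOrbifold} every such $E$ occurs inside some irreducible $g$-twisted $V$-module, so the index set is partitioned by the sectors making up $\mathcal{S}_V(G)=\bigcup_{g\in G}\mathscr{M}(g)$, and I would regroup the sum along this partition.

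\noindent Next I would collapse the regrouped sum using the evenly distributive property established in Section 5: the extended entry $S_{M_{\lambda},E}$ is proportional to $\qdim M_{\lambda}$ with a proportionality constant that depends only on the sector and orbit of $E$, not on $\lambda$. Feeding this into the three $S$-factors in the numerator, and using the annihilator description of the unitary space from Section 4 --- that $\mathcal{S}_V(G)$ is exactly the annihilator of the span of the irreducible $V^G$-modules coming from nontrivial twisted sectors --- the Verlinde sum reorganizes into an orbit sum over $\mathscr{O}(F)$ weighted by quantum dimensions and by the scalars $\varphi(h)$ that implement the isomorphisms $M\circ h\cong M$. Tracking these scalars is what turns the $S$-matrix products into the structure constants of the twisted algebra $\mathcal{A}_{\alpha_{\mathscr{O}(F)}}(G,\mathscr{O}(F))$, the cocycle $\alpha_{\mathscr{O}(F)}$ being precisely the obstruction to choosing the $\varphi(h)$ coherently over the orbit.

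\noindent Finally I would recognize the resulting expression as the multiplicity of $\mathrm{Ind}^{D(F)}_{S(F)}V_{\xi}$ inside the $\mathcal{A}_{\alpha_{\mathscr{O}(F)}}(G,\mathscr{O}(F))$-module $\mathcal{I}_{\lambda_1,\chi_1}(v^1,v^2)$, via orthogonality of (projective) characters and Frobenius reciprocity for the pair $S(F)\subseteq D(F)$; combined with Tanabe's lower bound this forces equality. The main obstacle I anticipate is the cocycle bookkeeping: one must verify that the evenly distributive constants can be evaluated \emph{exactly} on each orbit (not merely up to proportionality), and that the resulting cocycle matches the $\alpha_{\mathscr{O}(F)}$ governing $\mathcal{I}_{\lambda_1,\chi_1}$, so that the orbit sum is genuinely a projective Frobenius reciprocity. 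Once this cocycle compatibility is pinned down, the identification of the two sides is forced.
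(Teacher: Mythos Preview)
Your proposal is addressed to the wrong statement. The theorem you were asked to prove is Theorem~\ref{theoremVerlindeHuang}, the classical Verlinde formula for a rational $C_2$-cofinite simple vertex operator algebra of CFT type with $V\cong V'$. The paper does not prove this result at all; it is quoted in the preliminaries and attributed to Huang \cite{huang2008vertex}. There is nothing to compare your argument against because the paper offers no proof of its own here.

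What you have actually sketched is an approach to the paper's main theorem (the final theorem of Section~6), the identity
\[
 N^{F_{\xi}}_{M_{\lambda_1},N_{\chi_1}}=\dim_{\mathbb{C}}\Hom_{\mathcal{A}_{\alpha_{\mathscr{O}(F)}}(G,\mathscr{O}(F))}\bigl(\mathrm{Ind}^{D(F)}_{S(F)}V_{\xi},\mathcal{I}_{\lambda_1,\chi_1}(v^1,v^2)\bigr).
\]
Even viewed as an attempt at that theorem, your route differs substantially from the paper's and carries a real gap. The paper never computes $N^{F_{\xi}}_{M_{\lambda_1},N_{\chi_1}}$ individually from the Verlinde sum; instead it compares the two sides only after taking the $\qdim$-weighted sum $\sum_{\xi}(\,\cdot\,)\qdim_{V^G}F_{\xi}$. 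Corollary~\ref{coro:qdim_partial} evaluates this weighted sum for the fusion numbers, a direct dimension count evaluates it for the Hom-numbers, the two agree, and since each weight $\qdim_{V^G}F_{\xi}$ is positive and Tanabe's inequality gives a termwise bound, equality is forced. This bypasses entirely the cocycle bookkeeping you flag as the ``main obstacle.'' Your plan, by contrast, requires evaluating the $S$-entries on type-two modules and matching cocycles exactly to perform a projective Frobenius reciprocity for each $\xi$ separately; the paper's results (Theorem~\ref{theoremMainConstant}, Equation~\eqref{equationComputationMi}) only control $S_{M_{\lambda},E}$ for $E$ of type one, so the information you would need for the twisted-sector summands is not supplied anywhere in the paper, and your sketch does not indicate how you would obtain it.
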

We also have \cite{dong2015congruence}:
\begin{proposition}
    The \(S\)-matrix is unitary and \(S_{V,M}=S_{M,V}\) is positive for any irreducible \(V\)-module \(M\).
\end{proposition}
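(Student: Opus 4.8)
The plan is as follows; write $\mathscr{M}_V=\{M_0=V,M_1,\dots,M_d\}$ with indices $i,j,s,t$ ranging over it. The equality $S_{V,M}=S_{M,V}$ is immediate from the symmetry of $S$ (Theorem~\ref{theoremVerlindeHuang}(2)), so it remains to prove that $S$ is unitary and that $S_{M,V}>0$. For the unitarity I would argue purely formally from the Verlinde data. Since $S$ is invertible and $S^{-1}N(i)S$ is diagonal for every $i$ (Theorem~\ref{theoremVerlindeHuang}(2),(4)), the commuting regular--representation operators $N(i)$ of the fusion algebra $R=\bigoplus_i\mathbb{C}[M_i]$ of $V$ are simultaneously diagonalizable; hence $R$ is commutative semisimple, $R\cong\mathbb{C}^{d+1}$ (via the Verlinde diagonalization, using $S_{0,s}\neq0$ from Theorem~\ref{theoremVerlindeHuang}(4)), and the $d+1$ algebra projections are given on the basis by $\phi_s\colon[M_i]\mapsto S_{i,s}/S_{0,s}$, $s=0,\dots,d$ — these are $d+1$ distinct algebra maps $R\to\mathbb{C}$ (distinct since distinct columns of $S$ are linearly independent), hence all of them. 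Coordinatewise complex conjugation in the basis $\{[M_i]\}$ is a ring automorphism of $R$ (the structure constants $N_{i,j}^{k}$ are integers), so $\phi\mapsto\phi^{\mathrm{conj}}$, $\phi^{\mathrm{conj}}(x)=\overline{\phi(\bar x)}$, permutes the $\phi_s$: there is a permutation $\mu$ with $\overline{\phi_s([M_i])}=\phi_{\mu(s)}([M_i])$, i.e.\ $\overline{S_{i,s}}=\bigl(\overline{S_{0,s}}/S_{0,\mu(s)}\bigr)S_{i,\mu(s)}$ for all $i$. Substituting into $(S^{*}S)_{s,t}=\sum_i\overline{S_{i,s}}\,S_{i,t}$ and using $S^{T}=S$ and $S^2=(\delta_{i,j'})$ gives
\[
(S^{*}S)_{s,t}=\frac{\overline{S_{0,s}}}{S_{0,\mu(s)}}\sum_i S_{\mu(s),i}\,S_{i,t}=\frac{\overline{S_{0,s}}}{S_{0,\mu(s)}}\,(S^2)_{\mu(s),t}=\frac{\overline{S_{0,s}}}{S_{0,\mu(s)}}\,\delta_{\mu(s),t'}.
\]
Because $S^{*}S$ is positive definite, its diagonal entries are strictly positive; as $(S^{*}S)_{s,s}$ vanishes unless $\mu(s)=s'$, this forces $\mu(s)=s'$ for every $s$. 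Finally $S_{0,s'}=S_{0',s}=S_{0,s}$ by Theorem~\ref{theoremVerlindeHuang}(1) together with $V\cong V'$ (so $0'=0$), hence $(S^{*}S)_{s,s}=\overline{S_{0,s}}/S_{0,s}$; being a positive real number this forces $S_{0,s}\in\mathbb{R}$ and $(S^{*}S)_{s,s}=1$, so $S^{*}S=I$ and $S$ is unitary.

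For the positivity of $S_{M,V}$ I would run the asymptotic argument underlying the theory of quantum dimensions (cf.\ \cite{dong2013quantum}). On the imaginary axis $\tau=it$, $t>0$, the character $\chi_M(it)=\sum_n\dim(M_{h_M+n})\,e^{-2\pi t(h_M+n-c/24)}$ is a strictly positive function of $t$. Writing $\chi_M(it)=\chi_M\bigl(-1/(i/t)\bigr)=\sum_{Q\in\mathscr{M}_V}S_{M,Q}\,\chi_Q(i/t)$ and letting $t\to0^{+}$ (so $i/t\to i\infty$), each $\chi_Q(i/t)$ behaves like $\dim(Q_{h_Q})\,e^{-2\pi(h_Q-c/24)/t}$; by hypothesis (V1) the vacuum module $V$ is the unique irreducible $V$-module of conformal weight $0$ while all others have positive conformal weight, so the summand $Q=V$ strictly dominates and $\chi_M(it)\sim S_{M,V}\,e^{\pi c/(12t)}$ as $t\to0^{+}$. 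Since the left side is positive and $S_{M,V}=S_{V,M}\neq0$ by Theorem~\ref{theoremVerlindeHuang}(4), the leading coefficient $S_{M,V}$ must be strictly positive.

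The only genuinely non-formal step is this last one: the unitarity of $S$ and the reality of the entries $S_{0,s}$ drop out mechanically from Theorem~\ref{theoremVerlindeHuang} and the self-duality $V\cong V'$, but the \emph{strict} positivity of $S_{M,V}$ rests on the small-$t$ asymptotics of the graded dimensions of irreducible modules (a Cardy/Tauberian-type estimate) and on identifying the dominant term with the vacuum, both of which are the analytic input packaged in \cite{dong2013quantum} and which use (V1) essentially. This is where I expect the real work to sit; everything else is bookkeeping with the Verlinde formula.
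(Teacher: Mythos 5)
Your argument is essentially correct, but note that the paper itself offers no proof of this proposition: it is quoted from \cite{dong2015congruence}, so there is no internal argument to compare against. Your unitarity step is the standard Galois-type argument: the columns of $S$ are, up to normalization, the characters of the fusion algebra, coordinatewise complex conjugation permutes these characters because the structure constants are real integers, and feeding the resulting permutation $\mu$ into $S^{\mathrm{T}}=S$, $S^2=(\delta_{i,j'})$ and $(S^{-1})_{i,j}=S_{i',j}$ (with $0'=0$ from $V\cong V'$) forces $\mu(s)=s'$, $S_{0,s}\in\mathbb{R}$ and $S^{*}S=I$; this is in substance how the cited reference and the categorical literature obtain unitarity, so you have reconstructed that proof from Theorem \ref{theoremVerlindeHuang} alone. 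Two caveats. First, you should state explicitly that the fusion algebra is associative and commutative (Huang's tensor theory), so that the functionals $\phi_s$ really exhaust its algebra characters; only then does the conjugation argument produce the permutation $\mu$. Second, the positivity step does not follow from (V1) as you wrote: CFT type says nothing about the weights of the other irreducible modules. What you need is (V4), i.e.\ that every irreducible ($g$-twisted) module other than $V$ has positive conformal weight; that is exactly what makes the vacuum term dominate as $t\to0^{+}$ and gives $\chi_M(it)\sim S_{M,V}\,e^{\pi c/(12t)}$, after which $S_{M,V}=S_{V,M}\neq0$ from Theorem \ref{theoremVerlindeHuang}(4) yields $S_{M,V}>0$; this is the quantum-dimension argument of \cite{dong2013quantum}. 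With these two points repaired, your proposal is a sound, self-contained derivation under the paper's standing assumptions (V1)--(V4), whereas the paper simply imports the statement.
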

\subsection{On Orbifold Theory}
\noindent The quantum Galois theory was introduced in \cite{dong1997},\cite{hanaki1999quantum},\cite{dong2013quantum}.
\begin{theorem}
    (Quantum Galois Theory)Let \(V\) be a simple vertex operator algebra, \(G\) a compact subgroup of \(\mathrm{Aut}(V)\) acting continuously on \(V\). Then, as a \(G,V^G\)-module,
    \begin{equation}
        \label{equationQuantumGaloisTheory}
        V=\bigoplus_{\chi\in \mathrm{Irr}(G)}(W_{\chi}\otimes V_{\chi}),
    \end{equation}
    where
    \begin{itemize}
        \item \(V_{\chi}\neq 0\), \(\forall \chi \in \mathrm{Irr}(G)\),
        \item \(V_{\chi}\) is an irreducible \(V^G\)-module, \(\chi\in \mathrm{Irr}(G)\),
        \item \(V_{\chi}\cong V_{\lambda}\) as \(V^G\)-modules if and only if \(\chi=\lambda\).
    \end{itemize}
\end{theorem}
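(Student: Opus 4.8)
The plan is to prove this by the Schur--Weyl / double commutant method, taking as the two commuting algebras the image of $\mathbb C[G]$ in $\End(V)$ and the associative algebra generated by the modes of $V^G$. First I would record the $G$-module decomposition. Since $G$ is compact and acts continuously, $V$ is a direct sum of finite-dimensional irreducible $G$-modules; its grading by finite-dimensional homogeneous pieces is $G$-stable, so collecting isotypic components gives a graded decomposition $V=\bigoplus_{\chi\in\Irr(G)}W_\chi\otimes V_\chi$ with $V_\chi=\Hom_G(W_\chi,V)$. Because every element of $V^G$ is fixed by $G$, the modes $a_n$ with $a\in V^G$ commute with the $G$-action, so $V^G$ preserves each multiplicity space $V_\chi$ and makes it a $V^G$-module. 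Writing $\mathcal B\subseteq\End(V)$ for the algebra generated by $\{a_n:a\in V^G,\ n\in\mathbb Z\}$ and $\mathcal G$ for the image of $\mathbb C[G]$, these commute, and the whole theorem becomes equivalent to the degreewise double commutant statement $\mathcal B'=\mathcal G$ (commutant in $\End(V)$, where everything is finite-dimensional in each degree): granting it, $\mathcal G\cong\bigoplus_\chi\End(W_\chi)$ is semisimple, and Wedderburn theory identifies the $V_\chi$ as irreducible, pairwise inequivalent $\mathcal B$-modules, which is exactly the assertion that each $V_\chi$ is an irreducible $V^G$-module with $V_\chi\cong V_\lambda$ iff $\chi=\lambda$.

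The inclusion $\mathcal G\subseteq\mathcal B'$ is immediate. For the reverse inclusion I would invoke simplicity of $V$. Let $\mathcal A$ be the algebra generated by all modes $\{a_n:a\in V\}$. Simplicity says $V$ has no proper nonzero ideal, i.e. $V$ is irreducible as an $\mathcal A$-module; since $V$ has countable dimension over the algebraically closed field $\mathbb C$, the Dixmier form of Schur's lemma gives $\End_{\mathcal A}(V)=\mathbb C$. The group $G$ normalizes $\mathcal A$ via $g\,a_n\,g^{-1}=(ga)_n$, and the action is outer: an invertible operator implementing some $g$ and centralizing $\mathcal A$ would be scalar, forcing $g$ to fix $V$ and hence $g=1$. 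Outerness together with irreducibility is precisely the input of the noncommutative Galois correspondence, which yields $\End_{\mathcal A^G}(V)=\mathcal G$, where $\mathcal A^G$ is the algebra of $G$-invariants in $\mathcal A$.

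The main obstacle is then to replace $\mathcal A^G$ by $\mathcal B$, so that $\mathcal B'=(\mathcal A^G)'=\mathcal G$. A priori $\mathcal B\subsetneq\mathcal A^G$: averaging a product $a_mb_k$ over $G$ produces a $G$-invariant operator that is not visibly a polynomial in $V^G$-modes. This is where the vertex algebra structure, rather than pure ring theory, is essential. The Borcherds commutator formula $[a_m,b_k]=\sum_{i\ge 0}\binom{m}{i}(a_ib)_{m+k-i}$ rewrites products of modes as single modes of the products $a_ib\in V$, whose $G$-invariant combinations are governed by the isotypic projections of these products into $V^G$. Iterating this, I would show that the degreewise closure of $\mathcal B$ already exhausts $\mathcal A^G$, so the two algebras share a commutant and $\End_{V^G}(V)=\mathcal G$. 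I expect this bookkeeping---matching the invariant part of the mode algebra with the subalgebra actually generated by $V^G$, using simplicity and associativity---to be the delicate heart of the argument.

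Finally, the nonvanishing $V_\chi\neq0$ for every $\chi$ is a separate point that again rests on simplicity rather than faithfulness alone. Since $G\subseteq\mathrm{Aut}(V)$ acts faithfully, $V$ contains a finite-dimensional faithful $G$-submodule $W$; by the Brauer--Burnside theorem every irreducible $G$-module occurs in some tensor power $W^{\otimes k}$, and the vertex algebra products $a_n b$ realize the components of these tensor powers inside $V$, with simplicity ensuring nothing degenerates. Hence every $\chi\in\Irr(G)$ appears and each $V_\chi$ is nonzero, completing the three claims.
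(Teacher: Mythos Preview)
The paper does not prove this theorem; it is quoted in the preliminaries section as a known result, with attribution to \cite{dong1997}, \cite{hanaki1999quantum}, \cite{dong2013quantum}, and no proof is given. So there is no ``paper's own proof'' to compare against.

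On the substance of your proposal: the overall architecture---decompose $V$ into $G$-isotypics, let the $V^G$-mode algebra $\mathcal B$ and the image $\mathcal G$ of $\mathbb C[G]$ act as commuting algebras, and establish a double-commutant statement $\mathcal B'=\mathcal G$---is indeed the strategy behind the original proofs of Dong--Mason and Dong--Li--Mason. However, two steps in your plan are genuine gaps rather than routine bookkeeping. First, the ``noncommutative Galois correspondence'' you invoke, namely that outerness of $G$ on an irreducibly acting algebra $\mathcal A$ forces $\End_{\mathcal A^G}(V)=\mathcal G$, is not an off-the-shelf theorem in this purely algebraic, infinite-dimensional setting (the von Neumann algebra version does not transfer automatically, and for compact infinite $G$ even the formulation needs care). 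Second, and more seriously, your proposed passage from $\mathcal A^G$ to $\mathcal B$ does not follow from the Borcherds commutator formula as you describe: that formula expresses $[a_m,b_k]$ as a sum of modes of elements $a_ib\in V$, but it does not show that a $G$-invariant polynomial in modes of arbitrary $V$-elements lies in the algebra generated by modes of $G$-invariant elements. The actual argument in the literature bypasses the equality $\mathcal A^G=\mathcal B$ and instead works degreewise, showing directly that the $V^G$-modes act densely enough on each finite-dimensional weight space $V_n$ to separate the $G$-isotypic components there; this uses associativity (the Jacobi identity) together with simplicity of $V$ in a sharper way than your sketch indicates. Your plan is in the right spirit, but as written it defers precisely the hard step.
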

\noindent Some conjectures on orbifold theory are proved in \cite{2015arXiv150703306DOrbifold}.
\begin{theorem}
    \label{theoremGeneralizedQuantumGaloisTheory}
    (orbifold theory)Let \(V\), \(G\) be defined in the reference, and \(M\) be an irreducible \(g\)-twisted \(V\)-module, \(N\) an irreducible \(h\)-twisted \(V\)-module. Assume that \(M\) and \(N\) are not in the same orbit of \(\mathcal{S}\) under the action of \(G\). Then, as a \(\mathbb{C}^{\alpha_M}[G_M],V^{G_M}\)-module,
    \begin{equation}
        \label{equationGeneralizedQuantumGaloisTheory}
        M=\bigoplus_{\lambda\in \Lambda_{G_M,\alpha_M}}W_{\lambda}\otimes M_{\lambda},
    \end{equation}
    where
    \begin{itemize}
        \item \(W_{\lambda}\otimes M_{\lambda}\) is nonzero for any \(\lambda \in \Lambda_{G_M,\alpha_M}\),
        \item each \(M_{\lambda}\) is an irreducible \(V^{G_M}\)-module,
        \item \(M_{\lambda}\cong M_{\gamma}\), as \(V^{G_M}\)-modules, if and only if \(\lambda=\gamma\),
        \item each \(M_{\lambda}\) is an irreducible \(V^{G}\)-module,
        \item \(M_{\lambda}\ncong N_{\mu}\), as \(V^{G}\)-modules,
        \item any irreducible \(V^G\)-module is isomorphic to \(M_{\lambda}\) for some irreducible \(g\)-twisted \(V\)-module \(M\) and some \(\lambda\in \Lambda_{G_M,\alpha_M}\).
    \end{itemize}
\end{theorem}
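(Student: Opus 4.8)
\noindent\emph{Sketch of proof.}
The plan is to establish the statement in three stages: (A) the bimodule decomposition \eqref{equationGeneralizedQuantumGaloisTheory}, the irreducibility of the multiplicity spaces $M_\lambda$ over $V^{G_M}$, and their pairwise inequivalence; (B) the upgrade to $V^{G}$-irreducibility and the non-isomorphism $M_\lambda\ncong N_\mu$ for $M,N$ in distinct $G$-orbits; (C) completeness, that every irreducible $V^{G}$-module occurs. For (A) I would begin with the projective action: since $M\circ h\cong M$ for $h\in G_M$ and $\varphi(h)\colon M\circ h\to M$ is unique up to a scalar, $h\mapsto\varphi(h)$ is a projective representation of $G_M$ on $M$; normalizing it produces the $2$-cocycle $\alpha_M$ and a genuine action of the twisted group algebra $\mathbb{C}^{\alpha_M}[G_M]$, which commutes with each zero-mode $o(v)$, $v\in V^{G_M}$, because $Y_{M\circ h}(v,z)=Y_M(hv,z)$. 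Since $G_M\leq G$ is finite, the orbifold results of the reference apply equally to $G_M$, so $V^{G_M}$ is regular, $M$ is a direct sum of irreducible $V^{G_M}$-modules, and each graded piece of $M$ is finite dimensional.

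The core of (A) is a double-commutant statement on each graded subspace of $M$: the associative algebra generated by $\{o(v):v\in V^{G_M}\}$ should equal the centralizer of $\mathbb{C}^{\alpha_M}[G_M]$ there. I would prove this by a twisted version of the quantum Galois argument, feeding in the simplicity of $V$ and the untwisted decomposition $V=\bigoplus_\chi W_\chi\otimes V_\chi$ to show that the $V^{G_M}$-submodule generated by any nonzero vector of $M$, together with its $G_M$-translates, is all of $M$; this, with semisimplicity, forces the two algebras to be mutual centralizers. Wedderburn theory then gives $M=\bigoplus_\lambda W_\lambda\otimes M_\lambda$ with $W_\lambda$ ranging over the irreducible $\mathbb{C}^{\alpha_M}[G_M]$-modules (indexed by $\Lambda_{G_M,\alpha_M}$) and $M_\lambda$ the isotypic multiplicity spaces, each an irreducible $V^{G_M}$-module, with $M_\lambda\cong M_\gamma$ if and only if $\lambda=\gamma$.

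For (B): $G$ permutes the finite set $\mathcal{S}=\bigcup_{g\in G}\mathscr{M}_V(g)$, and one first checks that $G_M$ is exactly the stabilizer of $[M]$. For the inclusion $V^{G}\subseteq V^{G_M}$ I would use a Clifford-theoretic step: decomposing $V^{G_M}$ as a $V^{G}$-module via quantum Galois for $(V,V^{G})$ restricted along $G_M\leq G$, and realizing $M$ over $V^{G}$ as an object induced from its $V^{G_M}$-structure, one obtains that each $M_\lambda$ is still irreducible over $V^{G}$. If $M_\lambda\cong N_\mu$ as $V^{G}$-modules with $M,N$ in different $G$-orbits of $\mathcal{S}$, then the $V$-modules induced from $M_\lambda$ and from $N_\mu$ coincide, and their decompositions into twisted modules force $M$ and $N$ into the same orbit, a contradiction; equivalently, the trace functions $Z_M(v,(g,h),\tau)$ attached to the two sectors are linearly independent by Theorem~\ref{theoremModularInvariance}, so a common $V^{G}$-character cannot be assembled from both.

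For (C): $V^{G}$ is rational and $C_2$-cofinite, so $\Irr(V^{G})$ is finite. The crux is that the $SL(2,\mathbb{Z})$-module spanned by all $Z_M(v,(g,h),\tau)$ — over $g,h\in G$ and $M\in\mathscr{M}_V(g,h)$ — has dimension exactly $|\Irr(V^{G})|$: each such function restricts to a sum of irreducible $V^{G}$-characters, while modular invariance and the Verlinde formula for the rational vertex operator algebra $V^{G}$ (Theorems~\ref{theoremModularInvariance} and \ref{theoremVerlindeHuang}) guarantee conversely that every $V^{G}$-character lies in this span. A Burnside-type count over the $G$-action on $\mathcal{S}$, using that $|\mathscr{M}_V(g)|$ equals the number of $g$-stable irreducible $V$-modules (Theorem~\ref{theoremModularInvariance}(2)) and the multiplicities $|\Lambda_{G_M,\alpha_M}|$ from (A), then shows that running $M$ over $G$-orbit representatives and $\lambda$ over $\Lambda_{G_M,\alpha_M}$ produces exactly $|\Irr(V^{G})|$ modules; by (A)--(B) they are irreducible and pairwise distinct, hence exhaust $\Irr(V^{G})$. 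The hardest step will be the double-commutant property in (A) — that the zero-mode algebra of $V^{G_M}$ fills the whole commutant of $\mathbb{C}^{\alpha_M}[G_M]$ on each graded subspace — which demands the full strength of the simplicity of $V$ and the regularity of $V^{G_M}$; the completeness count in (C) is the other sensitive point, as it rests on the $S$-matrix and Verlinde machinery of Section~2 to ensure that no irreducible $V^{G}$-character escapes the span of the twisted trace functions.
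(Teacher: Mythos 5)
You cannot compare this against ``the paper's own proof'' because the paper offers none: Theorem \ref{theoremGeneralizedQuantumGaloisTheory} is imported verbatim from \cite{2015arXiv150703306DOrbifold} (``Some conjectures on orbifold theory are proved in \ldots''), and everything downstream in Sections 3--5 simply uses it. Judged on its own terms, your stages (A) and (B) follow the standard route to the decomposition part of the theorem: the projective $G_M$-action $h\mapsto\varphi(h)$, the twisted group algebra $\mathbb{C}^{\alpha_M}[G_M]$ commuting with the $V^{G_M}$-action, a Schur--Weyl/double-commutant duality giving $M=\bigoplus_\lambda W_\lambda\otimes M_\lambda$, and a Clifford-theoretic step for $V^G\subseteq V^{G_M}$. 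These are acceptable as a sketch, but be aware that the duality and the $V^G$-irreducibility/inequivalence statements are themselves substantial theorems (they predate the cited reference and are not recovered by a one-line ``twisted version of the quantum Galois argument''), so (A)--(B) is an outline of known results rather than a proof.

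The genuine gap is stage (C), the last bullet (completeness), which is precisely the new content of the cited theorem. Your argument rests on the claim that every irreducible $V^G$-character lies in the span of the twisted trace functions $Z_M(v,(g,h),\tau)$, justified only by ``modular invariance and the Verlinde formula.'' That claim is equivalent to the statement being proved, not a consequence of modular invariance: Theorem \ref{theoremModularInvariance} only shows that this span is an $SL(2,\mathbb{Z})$-stable subspace of the span of irreducible $V^G$-characters, and invariance alone does not rule out a proper stable subspace. Likewise the Burnside-type count presupposes that $|\Irr(V^G)|$ equals the orbit-wise count $\sum_{\mathscr{O}(M)}|\Lambda_{G_M,\alpha_M}|$, which is again equivalent to completeness (it is derived in \cite{2015arXiv150703306DOrbifold} as a corollary of it). The step that actually closes the argument is a positivity argument, not a dimension count: for $v\in V^G$ one has $\frac{1}{|G|}\sum_{g\in G}Z_V(v,(1,g),\tau)=Z_{V^G}(v,\tau)$; applying the $S$-transformation, the left side expands, by twisted modular invariance and the decomposition \eqref{equationGeneralizedQuantumGaloisTheory} applied to $g$-twisted modules, solely into characters of irreducible $V^G$-modules that occur inside twisted $V$-modules, while the right side is $\sum_{W}S^{V^G}_{V^G,W}Z_W(v,\tau)$ with every entry $S^{V^G}_{V^G,W}$ strictly positive (the positivity statement from \cite{dong2015congruence} quoted in Section 2.3, applied to the rational, $C_2$-cofinite vertex operator algebra $V^G$). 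Linear independence of the irreducible $V^G$-trace functions then forces every irreducible $V^G$-module to appear as some $M_\lambda$. Without this positivity (or an equivalent quantum-dimension/global-dimension input such as Theorems \ref{Qd1} and \ref{Qd2}), your stage (C) does not go through.
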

\noindent The following results are from \cite{2015arXiv150703306DOrbifold}.
\begin{theorem}
    \label{Qd1}
    We have
    $$\mathrm{qdim}_{V^G}M=|G|\mathrm{qdim}_VM.$$
    for \(M\in \mathscr{M}_{V}(g)\), and \(g\in G\).
\end{theorem}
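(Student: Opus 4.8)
\medskip
\noindent\textbf{Proof proposal.} The plan is to read the identity off from the behaviour of graded characters as $\tau\to 0$, reducing everything to the single statement $\qdim_{V^G}V=|G|$. Recall that for a regular vertex operator algebra $U$ and a (possibly twisted) $U$-module $W$ one sets $\qdim_U W=\lim_{y\to 0^{+}}\chi_W(iy)/\chi_U(iy)$; in the present setting these limits exist by the rationality and $C_2$-cofiniteness in force (the twisted case being part of what is established in \cite{2015arXiv150703306DOrbifold}). The vector space underlying an irreducible $g$-twisted $V$-module $M$ is simultaneously a $V^G$-module with the identical graded character $\chi_M$; writing $\chi_M/\chi_{V^G}=(\chi_M/\chi_V)\cdot(\chi_V/\chi_{V^G})$ and using that the limit of a product equals the product of the limits once both exist, we get
\[
\qdim_{V^G}M=\lim_{y\to0^{+}}\frac{\chi_M(iy)}{\chi_{V^G}(iy)}
=\Bigl(\lim_{y\to0^{+}}\frac{\chi_M(iy)}{\chi_V(iy)}\Bigr)\Bigl(\lim_{y\to0^{+}}\frac{\chi_V(iy)}{\chi_{V^G}(iy)}\Bigr)
=\qdim_V M\cdot\qdim_{V^G}V .
\]
So it suffices to prove $\qdim_{V^G}V=|G|$.

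\noindent For this I would invoke the modular invariance of Theorem~\ref{theoremModularInvariance}. Since $\chi_{V^G}(\tau)=\frac1{|G|}\sum_{h\in G}\tr_V h\,q^{L(0)-c/24}=\frac1{|G|}\sum_{h\in G}Z_V(\mathbf1,(1,h),\tau)$, and since for $S=\left(\begin{smallmatrix}0&-1\\1&0\end{smallmatrix}\right)$ one has $(1,h)S=(h,1)$ and $\wt[\mathbf1]=0$, Theorem~\ref{theoremModularInvariance} gives, for $s\in\mathbb{H}$ and $\tau=-1/s$,
\[
Z_V(\mathbf1,(1,h),\tau)=\sum_{N\in\mathscr{M}_V(h)}\gamma_{V,N}(1,h)\,\chi_N(s),\qquad
\chi_V(\tau)=\sum_{N\in\mathscr{M}_V}S_{V,N}\,\chi_N(s),
\]
the second relation being the case $h=1$, which in particular exhibits $\gamma_{V,N}(1,1)=S_{V,N}$. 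Now let $\tau=iy$ with $y\to 0^{+}$, so $s=-1/\tau\to i\infty$ and $\tilde q:=e^{2\pi i s}\to 0$; each character expands as $\chi_N(s)=\tilde q^{\,h_N-c/24}\bigl(\dim N_{h_N}+\cdots\bigr)$. In both sums above the summand of the vacuum module $V$ (conformal weight $0$, one-dimensional top piece) contributes $\tilde q^{-c/24}(1+o(1))$, while every other summand comes from an irreducible $g$-twisted $V$-module, $g\in G$, different from $V$, whose conformal weight is positive by hypothesis (V4), hence that summand is $o\bigl(\tilde q^{-c/24}\bigr)$. Therefore, as $y\to0^{+}$,
\[
\chi_V(\tau)=S_{V,V}\,\tilde q^{-c/24}\bigl(1+o(1)\bigr),\qquad
|G|\,\chi_{V^G}(\tau)=\gamma_{V,V}(1,1)\,\tilde q^{-c/24}\bigl(1+o(1)\bigr)=S_{V,V}\,\tilde q^{-c/24}\bigl(1+o(1)\bigr),
\]
and since $S_{V,V}>0$ (Theorem~\ref{theoremVerlindeHuang}(4) and the Proposition following it) we may cancel the common factor and conclude $\chi_V(\tau)/\chi_{V^G}(\tau)\to|G|$, i.e.\ $\qdim_{V^G}V=|G|$. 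Combined with the first paragraph this yields $\qdim_{V^G}M=|G|\,\qdim_V M$. (One could instead derive $\qdim_{V^G}V=|G|$ from the Quantum Galois Theorem $V=\bigoplus_{\chi}W_\chi\otimes V_\chi$ together with $\qdim_{V^G}V_\chi=\dim W_\chi$ and $\sum_\chi(\dim W_\chi)^2=|G|$; the modular argument above avoids importing the first of these facts.)

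\noindent The step I expect to be the genuine obstacle is the asymptotic analysis rather than the bookkeeping: one must justify that, uniformly as $\tau\to 0^{+}$ along the imaginary axis, each character sum may be replaced by its vacuum term and that the remaining contributions are of strictly smaller order. This is the familiar Cardy-type estimate; it rests on the quantitative convergence and growth bounds for characters of $C_2$-cofinite vertex operator algebras combined with the positivity of conformal weights (V4), and the very same analysis carried out with $\chi_M$ in place of $|G|\chi_{V^G}$ is what underlies the existence of $\qdim_V M$ for twisted $M$ invoked in the first paragraph.
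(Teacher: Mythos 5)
Your argument is essentially sound, but note first that the paper itself contains no proof of this statement: Theorem \ref{Qd1} is imported verbatim from \cite{2015arXiv150703306DOrbifold} (it is listed under ``The following results are from \cite{2015arXiv150703306DOrbifold}''), so there is no in-paper proof to compare against. What you have written is a correct reconstruction of the argument in that reference. The factorization $\chi_M/\chi_{V^G}=(\chi_M/\chi_V)(\chi_V/\chi_{V^G})$, giving $\qdim_{V^G}M=\qdim_V M\cdot\qdim_{V^G}V$ once both limits exist, is exactly the reduction used there, with the existence of $\qdim_V M$ for twisted $M$ being the genuinely nontrivial input you correctly attribute to \cite{2015arXiv150703306DOrbifold}. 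Your evaluation of $\qdim_{V^G}V=|G|$ is also fine: $|G|\chi_{V^G}(\tau)=\sum_{h\in G}\mathrm{tr}_V\,h\,q^{L(0)-c/24}$ is the standard fixed-point projection (and for these traces the normalization ambiguity in $\varphi(h)$ is irrelevant, as you implicitly use the canonical choice $\varphi(h)=h$ on $V$), the $S$-transform lands in finitely many irreducible $h$-twisted modules by $g$-rationality and $C_2$-cofiniteness, and (V4) kills every term except the vacuum one at leading order, with $S_{V,V}>0$ allowing the cancellation; the scalars $\gamma_{V,N}(1,h)$ for $h\neq 1$ never need to be controlled because those terms are subleading. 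The reference obtains $\qdim_{V^G}V=|G|$ slightly differently, via the quantum Galois decomposition $V=\bigoplus_\chi W_\chi\otimes V_\chi$ together with $\qdim_{V^G}V_\chi=\dim W_\chi$ and $\sum_\chi(\dim W_\chi)^2=|G|$ -- the alternative you mention parenthetically -- whereas your Cardy-type asymptotic argument trades that input for the modular invariance of Theorem \ref{theoremModularInvariance} plus (V4); both are legitimate, and your version uses only facts already quoted in Section 2 of this paper.
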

\begin{theorem} Use the same notations in Theorem \ref{theoremGeneralizedQuantumGaloisTheory}. We have
\begin{equation}
    S_{V_{\lambda},V^G}=\frac{\mathrm{dim}W_{\lambda}}{|G_M|}S_{M,V}
\end{equation}
\begin{equation}
    \label{equationOrbifoldProportional}
    \mathrm{qdim}_{V^G}M_{\lambda}=|G:G_M|(\mathrm{dim}W_{\lambda})(\mathrm{qdim}_{V}M).
\end{equation}
\end{theorem}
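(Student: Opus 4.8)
The plan is to prove the first identity by transporting the $S$-transform of the $V^G$-characters $\chi_{M_\lambda}$ onto the $S$-transform of the twisted trace functions of $M$, via the generalized quantum Galois decomposition, and then to read off the quantum-dimension identity. I write $M_\lambda$ for the component in Theorem~\ref{theoremGeneralizedQuantumGaloisTheory} (the symbol $V_\lambda$ in the first display being the same object). The first step is an averaging formula. By Theorem~\ref{theoremGeneralizedQuantumGaloisTheory}, $M=\bigoplus_{\lambda\in\Lambda_{G_M,\alpha_M}}W_\lambda\otimes M_\lambda$ as a $\mathbb{C}^{\alpha_M}[G_M]\times V^{G_M}$-module; one checks that for $h\in G_M$ the isomorphism $\varphi(h)\colon M\circ h\to M$ acts on the $\lambda$-summand as $\rho_\lambda(h)\otimes 1$, where $(\rho_\lambda,W_\lambda)$ is the $\alpha_M$-projective representation indexed by $\lambda$, and that the scalar ambiguity of $\varphi(h)$ is normalized compatibly with that of $Z_M(\mathbf 1,(g,h),\tau)$. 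Taking graded traces gives
\[
 Z_M(\mathbf 1,(g,h),\tau)=\sum_{\lambda\in\Lambda_{G_M,\alpha_M}}\chi_\lambda(h)\,\chi_{M_\lambda}(\tau),\qquad h\in G_M,
\]
with $\chi_\lambda(h)=\mathrm{tr}_{W_\lambda}\rho_\lambda(h)$; the same formula holds for every $g^{-1}$-stable irreducible (possibly twisted) module in place of $M$.

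Next, take $\gamma=\left(\begin{smallmatrix}0&-1\\1&0\end{smallmatrix}\right)$, so that $\gamma\tau=-1/\tau$, $\mathrm{wt}[\mathbf{1}]=0$, and $(g,h)\gamma=(h,g^{-1})$. Since $M$ is $h$-stable for $h\in G_M$ and $V$ is $h$-rational, Theorem~\ref{theoremModularInvariance} applies, and comparing the definition $\chi_{M_\lambda}(-1/\tau)=\sum_{E}S_{M_\lambda,E}\chi_E(\tau)$ with the right-hand side of Theorem~\ref{theoremModularInvariance} yields
\[
 \sum_{\lambda}\chi_\lambda(h)\sum_{E\in\mathscr{M}_{V^G}}S_{M_\lambda,E}\chi_E(\tau)=\sum_{N\in\mathscr{M}(h,g^{-1})}\gamma_{M,N}(g,h)\,Z_N(\mathbf 1,(h,g^{-1}),\tau).
\]
Expanding each $Z_N$ by the averaging formula and comparing the coefficient of $\chi_{V^G}(\tau)$ — legitimate because characters of distinct irreducible $V^G$-modules are linearly independent — one uses that, by Theorem~\ref{theoremGeneralizedQuantumGaloisTheory}, $V^G=V_{\mathrm{triv}}$ occurs only inside the untwisted module $V$; hence $\chi_{V^G}$ contributes only when $h=1$ and $N=V$, with multiplicity $\chi_{\mathrm{triv}}(g^{-1})=1$. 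Writing $S_{M,V}:=\gamma_{M,V}(g,1)$ for the twisted entry (in agreement with the extended definition in the paper), we get, for all $h\in G_M$,
\[
 \sum_{\lambda\in\Lambda_{G_M,\alpha_M}}\chi_\lambda(h)\,S_{M_\lambda,V^G}=S_{M,V}\,\delta_{h,1}=\frac{S_{M,V}}{|G_M|}\sum_{\lambda}(\dim W_\lambda)\,\chi_\lambda(h),
\]
the last equality because $\mathbb{C}^{\alpha_M}[G_M]=\bigoplus_\lambda(\dim W_\lambda)W_\lambda$ has character $h\mapsto|G_M|\,\delta_{h,1}$.

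Linear independence of the $\alpha_M$-twisted irreducible characters $\{\chi_\lambda\}_{\lambda\in\Lambda_{G_M,\alpha_M}}$ now lets us equate coefficients, giving $S_{M_\lambda,V^G}=\frac{\dim W_\lambda}{|G_M|}S_{M,V}$. Applying this with $M=V$ (so $G_V=G$, $W_{\mathrm{triv}}=\mathbb{C}$, $V_{\mathrm{triv}}=V^G$) gives $S_{V^G,V^G}=\frac1{|G|}S_{V,V}$, whence
\[
 \qdim_{V^G}M_\lambda=\frac{S_{M_\lambda,V^G}}{S_{V^G,V^G}}=\frac{(\dim W_\lambda/|G_M|)\,S_{M,V}}{(1/|G|)\,S_{V,V}}=|G:G_M|(\dim W_\lambda)\,\qdim_V M,
\]
using $\qdim_V M=S_{M,V}/S_{V,V}$; summing over $\lambda$ recovers Theorem~\ref{Qd1} and formula~\eqref{equationOrbifoldProportional} as a consistency check.

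The technical heart is the second paragraph: turning the averaging formula into an exact identity requires pinning down the scalar indeterminacy of $\varphi(h)$ against that of $Z_M(\mathbf 1,(g,h),\tau)$, and isolating the $\chi_{V^G}$-coefficient relies on linear independence of irreducible $V^G$-characters together with the fact (Theorem~\ref{theoremGeneralizedQuantumGaloisTheory}) that $V^G$ never occurs inside a genuinely twisted module. Once these are in place, the remainder is character theory of the twisted group algebra $\mathbb{C}^{\alpha_M}[G_M]$.
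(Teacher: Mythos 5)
Your proposal is correct in substance, but note that the paper does not prove this theorem at all: it is imported verbatim from \cite{2015arXiv150703306DOrbifold} ("The following results are from..."), so there is no internal proof to compare against. What you wrote is essentially a reconstruction of that reference's argument, and it is the same circle of ideas the paper itself uses later: compare the proof of Theorem \ref{theoremMainConstant}, which likewise expands \(Z_{M_\lambda}(v,-\frac{1}{\tau})\) once through the \(S\)-matrix of \(V^G\) and once through the modular transform of the twisted traces \(Z_M(v,(g,h),\tau)=\sum_\lambda \chi_\lambda(h)Z_{M_\lambda}(v,\tau)\), and then compares coefficients; your handling of the scalar ambiguity (everything is pinned down at \(h=1\), where \(\varphi(1)=1\), and for \(h\neq 1\) only the absence of \(V^G\)-components of genuinely twisted modules is needed, which is exactly the non-isomorphism clause of Theorem \ref{theoremGeneralizedQuantumGaloisTheory}) is sound. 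Your reading of \(S_{M,V}\) for twisted \(M\) as the coefficient of the \(V\)-term in the \(S\)-transform of \(\chi_M\) agrees with the paper's usage in its Equation (\ref{equationDim2}), even though Definition \ref{definitionExtension} does not formally cover this case.

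Two points should be tightened. First, you justify extracting the coefficient of \(\chi_{V^G}(\tau)\) by "linear independence of characters"; at \(v=\mathbf 1\) this is not guaranteed in general (inequivalent irreducible modules can have coinciding \(q\)-characters). The statement actually available, and the one the paper invokes in Lemma \ref{lemmaSameModule}, is linear independence of the trace functions \(Z_W(v,\tau)\) with \(v\) ranging over \(V^G\); your argument goes through verbatim if you carry a general homogeneous \(v\in V^G\) (the factors \(\tau^{\mathrm{wt}[v]}\) cancel, and the averaging formula holds for all \(v\in V^G\) since \(o(v)\) commutes with the \(\varphi(h)\)). Second, the last step uses \(\qdim_{V^G}M_\lambda=S_{M_\lambda,V^G}/S_{V^G,V^G}\) and \(\qdim_V M=S_{M,V}/S_{V,V}\) for a \emph{twisted} module \(M\); these are themselves nontrivial inputs requiring (V4) and the quantum-dimension results of \cite{dong2013quantum}, \cite{2015arXiv150703306DOrbifold}, so they should be cited as such rather than treated as definitions. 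With these repairs your derivation is a valid proof of both displayed formulas, and the consistency check against Theorem \ref{Qd1} is correct.
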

\begin{theorem}\label{Qd2} we have the following relation,
    \[
        \mathrm{glob}(V^G)=|G|^2\mathrm{glob}(V).
    \]
\end{theorem}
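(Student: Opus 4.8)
The plan is to deduce the identity from the relation between the $(V^{G},V^{G})$- and $(V,V)$-entries of the $S$-matrix recorded in the theorem immediately preceding, together with the fact that for a rational, $C_{2}$-cofinite, self-dual vertex operator algebra $W$ of CFT type the global dimension is $\glob(W)=S_{W,W}^{-2}$. To see this last fact, recall $\qdim_{W}X=S_{W,X}/S_{W,W}$, so that
\[
\glob(W)=\sum_{X}\bigl(\qdim_{W}X\bigr)^{2}=\frac{1}{S_{W,W}^{2}}\sum_{X}S_{W,X}^{2};
\]
since $S$ is unitary (the Proposition after Theorem~\ref{theoremVerlindeHuang}) and each $S_{W,X}=S_{X,W}$ is positive, the inner sum is the $(W,W)$-entry of $SS^{*}$, namely $1$. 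I would invoke this for $W=V$ and for $W=V^{G}$; the hypotheses of Theorem~\ref{theoremVerlindeHuang} and of that Proposition hold for both under the standing assumptions (in particular $V\cong V'$ and $V^{G}\cong(V^{G})'$, the latter inherited from the former).

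The second step is to evaluate $S_{V^{G},V^{G}}$ by specializing the preceding theorem, $S_{V_{\lambda},V^{G}}=\tfrac{\dim W_{\lambda}}{|G_{M}|}S_{M,V}$, to the untwisted adjoint module $M=V$. Here $G_{V}=G$, since every automorphism of $V$ fixes $V$, and the relevant cocycle is trivial, so \eqref{equationGeneralizedQuantumGaloisTheory} becomes the ordinary quantum Galois decomposition \eqref{equationQuantumGaloisTheory}, $V=\bigoplus_{\chi\in\Irr(G)}W_{\chi}\otimes V_{\chi}$. Taking $\lambda$ to be the trivial character gives $V_{\lambda}=V^{G}$ and $\dim W_{\lambda}=1$, hence
\[
S_{V^{G},V^{G}}=\frac{1}{|G_{V}|}\,S_{V,V}=\frac{1}{|G|}\,S_{V,V},
\]
and therefore
\[
\glob(V^{G})=\frac{1}{S_{V^{G},V^{G}}^{2}}=\frac{|G|^{2}}{S_{V,V}^{2}}=|G|^{2}\,\glob(V).
\]

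Along this route the proof is essentially bookkeeping with the cited results; the only point requiring a little care is the identity $\glob(W)=S_{W,W}^{-2}$, which leans on the full Verlinde package and on self-duality of $V^{G}$. If instead one wanted a proof independent of the preceding $S$-matrix theorem, I would parametrize the irreducible $V^{G}$-modules by the pairs $(M,\lambda)$ of Theorem~\ref{theoremGeneralizedQuantumGaloisTheory}, use \eqref{equationOrbifoldProportional} to write $\qdim_{V^{G}}M_{\lambda}=|G:G_{M}|(\dim W_{\lambda})(\qdim_{V}M)$, apply the semisimplicity of the twisted group algebra $\mathbb{C}^{\alpha_{M}}[G_{M}]$ in the form $\sum_{\lambda\in\Lambda_{G_{M},\alpha_{M}}}(\dim W_{\lambda})^{2}=|G_{M}|$, and sum over the $G$-orbits in $\mathcal{S}_{V}(G)=\bigcup_{g\in G}\mathscr{M}_{V}(g)$, each orbit of $M$ having size $|G:G_{M}|$ and constant quantum dimension. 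This collapses $\glob(V^{G})$ to $|G|\sum_{g\in G}\sum_{N\in\mathscr{M}_{V}(g)}(\qdim_{V}N)^{2}$, and the remaining---and genuinely nontrivial---step is the identity $\sum_{N\in\mathscr{M}_{V}(g)}(\qdim_{V}N)^{2}=\glob(V)$ for every $g\in G$, i.e.\ that the global dimension computed in any twisted sector equals that of the untwisted sector; this in turn follows from the unitarity of the transformation matrices appearing in Theorem~\ref{theoremModularInvariance}.
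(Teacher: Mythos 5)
Your argument is correct, but note that the paper itself offers no proof of this statement to compare against: Theorem \ref{Qd2} is simply quoted from \cite{2015arXiv150703306DOrbifold}, together with Theorem \ref{Qd1} and the $S$-matrix/quantum-dimension relations preceding it. What you supply is therefore a genuine (and valid) derivation rather than a reproduction of an in-paper argument. Your main route -- combining $\glob(W)=S_{W,W}^{-2}$ (from $\qdim_W X=S_{W,X}/S_{W,W}$, unitarity of $S$, and positivity of the vacuum row) with the specialization $M=V$, $\lambda$ trivial, of the formula $S_{V_{\lambda},V^G}=\frac{\dim W_{\lambda}}{|G_M|}S_{M,V}$ to get $S^{V^G}_{V^G,V^G}=S^V_{V,V}/|G|$ -- is sound under the standing hypotheses the paper uses throughout; the only points needing the care you already flag are that the quantum-dimension formula and positivity must hold for $V^G$ as well as for $V$, which follows from (V2)--(V4) and the fact that every non-vacuum irreducible $V^G$-module has positive conformal weight (those inside $V$ have weight at least $1$ since $V_0=\mathbb{C}\mathbf{1}\subset V^G$, and those in twisted sectors inherit positivity from (V4)), and that self-duality of $V^G$ is inherited from $V$ (e.g.\ via $L(1)(V^G)_1\subset L(1)V_1=0$ for a simple CFT-type algebra). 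Your alternative sketch -- summing $(\qdim_{V^G}M_{\lambda})^2$ over the pairs $(M,\lambda)$ of Theorem \ref{theoremGeneralizedQuantumGaloisTheory} using \eqref{equationOrbifoldProportional} and $\sum_{\lambda}(\dim W_{\lambda})^2=|G_M|$, and then invoking that each twisted sector has total squared quantum dimension $\glob(V)$ -- is essentially the argument given in the cited reference, so the two routes differ mainly in that the first hides the twisted-sector count inside the single entry $S^{V^G}_{V^G,V^G}$, while the second makes the orbit/stabilizer bookkeeping explicit and isolates the genuinely nontrivial input $\sum_{N\in\mathscr{M}_V(g)}(\qdim_V N)^2=\glob(V)$.
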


\newpage
\section{Definition extension of the entries in the \(S\)-matrix}
\noindent In this paper, \(V^G\) is a rational vertex operator algebra satisfying the four conditions (V1)--(V4) in Section 4, where \(G\) is a finite automorphism group of \(V\).
We first introduce some notations used in this paper.
\begin{remark}
    \label{remarkType}
    The last statement in Theorem \ref{theoremGeneralizedQuantumGaloisTheory} shows that there are two types of irreducible \(V^{G}\) modules modules.
    \begin{itemize}
        \item An irreducible \(V^{G}\) module \(M\) is of \emph{type one} if \(M\) occurs in the decomposition of irreducible \(V\) modules, as \(V^{G}\) modules.
        \item An irreducible \(V^{G}\) module \(M\) is of \emph{type two} if \(M\) does not occur in the decomposition of irreducible \(V\) modules, as \(V^{G}\) modules. That is, \(M\) occurs in a \(g\) twisted \(V^{G}\) module for some \(g\in G\) and \(g\neq 1\).
    \end{itemize}
\end{remark}
\begin{remark}
    \label{lemmaSMFi=0}
    Denote by \(\mathscr{M}_{V^G,\mathrm{I}}\) the set of irreducible \(V^G\)-modules of type one, by \(\mathscr{M}_{V^G,\mathrm{II}}\) the set of irreducible \(V^G\)-modules of type two. Set \(\mathcal{S}_{V}(G)=\bigcup_{g\in G}\mathscr{M}(g)\). Let \(\mathscr{O}(W)=\{M\in \mathscr{M}_V|M\circ g\cong W, \ \mathrm{for}\ \mathrm{some} \  g\in G\}\). Set \(\mathscr{S}_V(G)=\mathcal{S}_{V}(G)/G=\{\mathscr{O}(M)|M\in \mathcal{S}_{V}(G)\}\).
\end{remark}
\noindent The following definition of the \(S\)-matrix is well known (see \cite{zhu1996modular}).
\begin{definition}
    In Equation \ref{equationGeneralModularInvarianceFormula}, let \(g=h=1\), and \(\gamma=S=
    \begin{pmatrix}
        0&-1\\
        1&0
    \end{pmatrix}\). Then,
    \begin{equation}
        Z_{M^i}(v,-\frac{1}{\tau})=\tau^{\mathrm{Wt}[v]}\sum_{j=0}^{d}S_{i,j} Z_{M^j}(v,\tau).
    \end{equation}
    The matrix \(S=(S_{i,j})\) is called an \(S\)-matrix which is independent of the choice of \(v\).
\end{definition}
\noindent Then, we can extend the definition of the entries \(S_{U,W}\).
\begin{definition}
    \label{definitionExtension}
    Extend the notation of the entry\(S_{U,W}\), where \(U,W\in \mathscr{M}_V\).
    Add to \(S\) a superscript \(V\), where \(V\) is the associated vertex operator algebra.
    That is, \(S^{V}\) represents the \(S\)-matrix of \(V\), whereas \(S^{V^G}\) represents the \(S\)-matrix of \(V^G\).
    The entries \(S_{U,W}^{V}\), \(S_{U,W}^{V^G}\) are defined in the same way.
    Denote \(U=\bigoplus_{i=1}^{n}U_i\) and \(W=\bigoplus_{j=1}^{n}W_j\), where \(U_i\) and \(W_j\) are irreducible \(V^G\)-modules, for \(i \in \{1,2,\cdots,n\}\) and \(j \in \{1,2,\cdots,m\}\). Note that \(U_{i_1}\) and \(U_{i_2}\) might be isomorphic \(V^G\)-modules for \(i_1\neq i_2\).
    Then, define
    \begin{equation}
        S_{U,W}^{V^G}=\sum_{i=1}^{n}\sum_{j=1}^{m}S_{U_i,W_j}^{V^G}.
    \end{equation}
    This extension is well defined, because the module decomposition \(U=\bigoplus_{i=1}^{n}U_i\) is unique. The superscript \(V\) can be omitted, if no confusion follows.
\end{definition}
\noindent The following lemma shows that two \(V\)-modules can exactly be distinguished by the associated entries in the \(S\)-matrix.
\begin{lemma}
    \label{lemmaSameModule}
    Let \(U,W\) be two \(V\)-modules. Then, \(U\cong W\) if and only if \(S_{U,M}^V=S_{W,M}^V\) for every \(M\in \mathscr{M}_V\).
\end{lemma}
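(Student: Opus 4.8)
The forward direction is trivial: if $U\cong W$ as $V$-modules, then by the extended Definition \ref{definitionExtension} the associated entries agree, since the (unique) decomposition into irreducibles is the same. The plan is therefore to prove the converse. Decompose $U=\bigoplus_{k}a_k M^k$ and $W=\bigoplus_{k}b_k M^k$ as direct sums of irreducible $V$-modules $M^k\in\mathscr{M}_V$ with multiplicities $a_k,b_k\in\mathbb{Z}_{\ge 0}$; by the uniqueness of such a decomposition it suffices to show $a_k=b_k$ for all $k$. Using the additivity of the extended $S$-entry, the hypothesis $S_{U,M}^V=S_{W,M}^V$ for all $M\in\mathscr{M}_V$ becomes the linear system $\sum_k (a_k-b_k) S_{M^k,M}^V=0$ for every $M\in\mathscr{M}_V$; in matrix form, $(a-b)^{\mathsf T}S^V=0$, where $S^V=(S^V_{i,j})$ is the $S$-matrix of $V$ and $a-b$ is the integer vector of multiplicity differences.

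The key step is to invoke invertibility of $S^V$. By hypothesis $V$ is rational, $C_2$-cofinite, simple and of CFT type, and $V\cong V'$ (the latter follows from simplicity under these standing assumptions, or may be taken as part of the setup); hence Theorem \ref{theoremVerlindeHuang} applies and tells us that $S^V$ is a well-defined invertible matrix — indeed $(S^V)^2=(\delta_{i,j'})$ is a permutation matrix, so $S^V$ is nonsingular. Alternatively one may cite the Proposition stating that $S^V$ is unitary. From $(a-b)^{\mathsf T}S^V=0$ and the invertibility of $S^V$ we conclude $a-b=0$, i.e. $a_k=b_k$ for every $k$, whence $U\cong W$.

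I do not anticipate a genuine obstacle here; the only point requiring a little care is bookkeeping — making sure the extended Definition \ref{definitionExtension} is being used consistently so that the passage from ``$S_{U,M}^V=S_{W,M}^V$ for all $M$'' to the homogeneous linear system is exactly the additivity built into that definition, and making sure that the finite index set $\{k\}$ ranges over all of $\mathscr{M}_V$ so that the coefficient matrix of the system is precisely the full (square) matrix $S^V$ rather than a submatrix. Once that is set up, invertibility of $S^V$ from Theorem \ref{theoremVerlindeHuang} closes the argument immediately.
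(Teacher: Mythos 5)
Your proof is correct, but it takes a genuinely different route from the paper. The paper never touches invertibility of the $S$-matrix: it argues through the trace functions themselves, using modular invariance to rewrite the hypothesis $S_{U,M}^V=S_{W,M}^V$ (for all $M\in\mathscr{M}_V$) as the equality of holomorphic functions $Z_U(v,-\tfrac{1}{\tau})=Z_W(v,-\tfrac{1}{\tau})$, and then invoking Zhu's linear independence of the trace functions $Z_M(v,\tau)$ of the irreducible modules to match multiplicities and conclude $U\cong W$. You instead work directly with the multiplicity vectors: writing $U=\bigoplus_k a_kM^k$, $W=\bigoplus_k b_kM^k$, the hypothesis becomes $(a-b)^{\mathsf T}S^V=0$, and nonsingularity of $S^V$ (from Theorem \ref{theoremVerlindeHuang}, $(S^V)^2=(\delta_{i,j'})$, or from unitarity) forces $a=b$. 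Both arguments are valid; the paper's version needs only the weaker input of linear independence of irreducible characters and stays closer to the definition of the extended entries, while yours is a cleaner one-line linear-algebra argument but leans on the full Verlinde/unitarity machinery. One small caution: your parenthetical claim that $V\cong V'$ ``follows from simplicity'' is not accurate in general (self-duality of a simple CFT-type VOA requires an extra condition such as $L(1)V_1=0$), so if you go this route you should either assume self-duality explicitly or, as you suggest, cite the proposition that the $S$-matrix is unitary, which the paper itself uses elsewhere (e.g., in Remark \ref{remarkVectorSpace}); with that fallback there is no genuine gap.
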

\begin{proof}
      Since \(V\) is a rational vertex operator algebra, \(U,W\) are completely reducible and the module decomposition is unique. For \(v\in V^G\) and \(W\in\mathscr{M}_{V^G}\), and , \(Z_W(v,\tau)\) are linearly independent (see \cite{zhu1996modular}).
      Thus, \(U\cong W\), if and only if \(Z_U(v,-\frac{1}{\tau})=Z_W(v,-\frac{1}{\tau})\), if only if  \(S_{U,M}^V=S_{W,M}^V\) for every \(M\in \mathscr{M}_V\).
\end{proof}
\noindent The formula in the next lemma is a variation of the third statement in Theorem \ref{theoremVerlindeHuang}.
\begin{lemma}
    \label{lemmaBasicS-Fusion}
    Let \(U\) and \(W\) be two \(V\)-modules. Let \(M^k\) be an irreducible \(V\)-module, where \(k\in \{0,1,\cdots,d\}\). Then, \(S_{U\boxtimes W, M^k}=\frac{1}{S_{V,M^k}}S_{U,M^k}S_{W,M^k}\).
\end{lemma}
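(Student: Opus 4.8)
The plan is to reduce the statement about general $V$-modules to the Verlinde formula for irreducible modules, using bilinearity of the fusion product and the fact that all the quantities involved are additive over direct sum decompositions. First I would write $U = \bigoplus_{a} P^a$ and $W = \bigoplus_{b} Q^b$ as finite direct sums of irreducible $V$-modules, which is legitimate since $V$ is rational. Because the tensor product $\boxtimes$ is bilinear (additive in each slot) on the Grothendieck ring, $U \boxtimes W = \bigoplus_{a,b} (P^a \boxtimes Q^b)$, and each $P^a \boxtimes Q^b = \sum_{\ell} N_{P^a,Q^b}^{M^\ell} M^\ell$. By Definition \ref{definitionExtension}, $S_{U\boxtimes W, M^k} = \sum_{a,b} \sum_{\ell} N_{P^a,Q^b}^{M^\ell} S_{M^\ell, M^k}$.

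Next I would apply statement (3) of Theorem \ref{theoremVerlindeHuang} to each irreducible fusion coefficient: $N_{P^a,Q^b}^{M^\ell} = \sum_{s=0}^{d} \frac{1}{S_{V,M^s}} S_{P^a,M^s} S_{Q^b,M^s} S_{(M^\ell)',M^s}$, where I use that $M^0 = V$ so $S_{0,s} = S_{V,M^s}$. Substituting and interchanging the (finite) sums, the inner sum over $\ell$ becomes $\sum_{\ell} S_{(M^\ell)',M^s} S_{M^\ell,M^k}$. Here I would invoke part (1)–(2) of Theorem \ref{theoremVerlindeHuang}: $S_{(M^\ell)',M^s} = (S^{-1})_{M^\ell,M^s}$ (using symmetry of $S$ and $S_{i',j} = (S^{-1})_{i,j}$), so $\sum_{\ell} (S^{-1})_{M^\ell,M^s} S_{M^\ell,M^k} = \sum_\ell (S^{-1})_{s,\ell} S_{\ell,k} = \delta_{s,k}$, since $S^{-1} S = I$ and $S$ is symmetric. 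This collapses the sum over $s$ to the single term $s = k$.

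After the collapse we are left with $S_{U\boxtimes W, M^k} = \frac{1}{S_{V,M^k}} \big(\sum_{a} S_{P^a,M^k}\big)\big(\sum_{b} S_{Q^b,M^k}\big) = \frac{1}{S_{V,M^k}} S_{U,M^k} S_{W,M^k}$, where the last equality is again Definition \ref{definitionExtension} applied to $U$ and $W$ separately; note $S_{V,M^k} \neq 0$ by part (4) of Theorem \ref{theoremVerlindeHuang}, so the division is legitimate. The main obstacle, such as it is, is bookkeeping: making sure the direct-sum decompositions are handled consistently (multiplicities included, as flagged in Definition \ref{definitionExtension}) and that the index conventions $M^0 = V$, $S_{0,s} = S_{V,M^s}$, and the primed-index identities are applied correctly when passing between $N^k_{i,j}$ in the $(i,j,k)$ notation and the $S_{U,M}$ notation. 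No genuinely new idea is needed beyond recognizing that the extended $S$-entries are linear functionals on the Grothendieck group, so the identity for irreducibles propagates by bilinearity.
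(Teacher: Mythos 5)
Your proof is correct and takes essentially the same route as the paper: both arguments rest on Verlinde formula (3) of Theorem \ref{theoremVerlindeHuang} together with the relations of parts (1)--(2) to collapse the double sum via $\sum_{\ell}S_{(M^\ell)',M^s}S_{M^\ell,M^k}=\delta_{s,k}$, with $S_{V,M^k}\neq 0$ from part (4). The only difference is packaging: the paper runs the same computation inside the trace functions, expanding $Z_{U\boxtimes W}(v,-\frac{1}{\tau})$ in two ways and comparing coefficients of $Z_{M^l}(v,\tau)$ (implicitly using linear independence of the characters), whereas you work directly with the extended $S$-entries as linear functionals on the Grothendieck group and make the reduction from reducible to irreducible modules explicit via bilinearity of $\boxtimes$ --- the same argument with slightly more careful bookkeeping.
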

\begin{proof}
    It is sufficient to show the equation is true when \(U\) and \(W\) are irreducible \(V\)-modules. Assume that \(U\) and \(W\) are irreducible \(V\)-modules. Let \(v\in V\) be a homogeneous vector in the second vertex operator algebra. Recall the fusion product relation, \(U\boxtimes W=\sum_{k=0}^{d}N_{U,M}^{M^k}M^k\).
    Then, by Verlinde formula,
    \begin{align*}
        Z_{U\boxtimes W}(v,-\frac{1}{\tau})=&Z_{\{\sum_{k=0}^{d}N_{U,M}^{M^k}M^\}}(v,-\frac{1}{\tau})\\
        =&\sum_{k=0}^{d}N_{U,M}^{M^k}Z_{M^k}(v,-\frac{1}{\tau})\\
        =&\sum_{k=0}^{d}(\sum_{l=0}^{d}\frac{1}{S_{V,M^l}}S_{U,M^l}S_{W,M^l} S_{(M^k)',M^l})Z_{M^k}(v,-\frac{1}{\tau})\\
        =&\sum_{k=0}^{d}(\sum_{l=0}^{d}\frac{1}{S_{V,M^l}}S_{U,M^l}S_{W,M^l} S_{\{(M^k)',M^l\}})(\tau^{\mathrm{Wt}[v]}\sum_{r=0}^{d}S_{M^k,M^r}Z_{M^r}(v,\tau))\\
        =&\tau^{\mathrm{Wt}[v]}(\sum_{l=0}^{d}\frac{1}{S_{V,M^l}}S_{U,M^l}S_{W,M^l} )(\sum_{r=0}^{d}\delta_{M^l,M^r}Z_{M^r}(v,\tau))\\
        =&\tau^{\mathrm{Wt}[v]}\sum_{l=0}^{d}\frac{1}{S_{V,M^l}}S_{U,M^l}S_{W,M^l}Z_{M^l}(v,\tau)
    \end{align*}
    On the other hand, \(Z_{U\boxtimes W}(v,-\frac{1}{\tau})=\tau^{\mathrm{Wt}[v]}\sum_{l=0}^{d}S_{U\boxtimes W,M^l}Z_{M^l}(v,\tau)\). A comparison of the coefficients of \(Z_{M^l}(v,\tau)\) yields \(S_{U\boxtimes W, M^l}=\frac{1}{S_{V,M^l}}S_{U,M^l}S_{W,M^l}\).
\end{proof}
\noindent Next, we need to find some methods to compute the entries \(S_{M,W_j}^{V^G}\) based on entries in the \(S\)-matrix of \(V\).
\begin{lemma}
    \label{lemmaOrbits}
    Let \(M,W\in\mathscr{M}_{V}\).
    By orbifold theory, denote \(W=\bigoplus_{j=1}^{n}W_{j}\otimes U_j\), where \(W_{j}\in \mathscr{M}_{V^G}\), and \(W_{j_1}\ncong W_{j_2}\), for \(j_1\neq j_2\).
    Then,
    \begin{equation}
        \label{equationSEntry}
        S_{M,W_j}^{V^G}=(\mathrm{dim}U_j)\sum_{N\in \mathscr{O}(W)}S_{M,N}^V.
    \end{equation}
\end{lemma}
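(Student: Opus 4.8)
The plan is to expand the modular $S$-transform of the $V^G$-character $Z_M(v,\tau)$, for $v\in V^G$, in two different ways — regarding $M$ first as a $V$-module and then as a (generally reducible) $V^G$-module — and to compare. First I would record the preliminary facts: since $V^G$ is rational, every $V$-module restricts to a finite direct sum of irreducible $V^G$-modules; the decomposition in the statement gives $W|_{V^G}\cong\bigoplus_{j=1}^n(\dim U_j)\,W_j$ with the $W_j$ pairwise non-isomorphic; and, because $\omega\in V^G$, the central charge and the second vertex operator algebra structure $Y[\,\cdot\,]$ of $V^G$ are the restrictions of those of $V$. Consequently a vector $v\in V^G$ that is $L[0]$-homogeneous in $V$ is homogeneous of the same weight $\mathrm{wt}[v]$ in $V^G$, such vectors span $V^G$, and the operators $o(v)$ and $L(0)$ act on any $V$-module exactly as they do on its $V^G$-restriction. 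Getting this compatibility straight — in particular that the automorphy factors $\tau^{\mathrm{wt}[v]}$ arising from $V$ and from $V^G$ agree — is something I would nail down at the start, so that the coefficient comparison below is legitimate.

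Fix such a $v$. Using the definition of the $S$-matrix of $V$ and then restricting each $V$-module character to $V^G$,
\[
Z_M\!\left(v,-\tfrac1\tau\right)=\tau^{\mathrm{wt}[v]}\sum_{N'\in\mathscr{M}_V}S^V_{M,N'}\,Z_{N'}(v,\tau)=\tau^{\mathrm{wt}[v]}\sum_{N'\in\mathscr{M}_V}\sum_{X}S^V_{M,N'}\,[\,N'|_{V^G}:X\,]\,Z_X(v,\tau),
\]
where the inner sum runs over all irreducible $V^G$-modules $X$ and $[\,N'|_{V^G}:X\,]$ denotes the multiplicity of $X$ in the $V^G$-decomposition of $N'$. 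On the other hand, applying the $S$-matrix of $V^G$ to each irreducible $V^G$-constituent of $M$ and summing — which is exactly what the extended Definition \ref{definitionExtension} records — gives $Z_M(v,-\tfrac1\tau)=\tau^{\mathrm{wt}[v]}\sum_X S^{V^G}_{M,X}\,Z_X(v,\tau)$. Since the characters of inequivalent irreducible $V^G$-modules are linearly independent, comparing the coefficients of $Z_X(v,\tau)$ yields
\[
S^{V^G}_{M,X}=\sum_{N'\in\mathscr{M}_V}[\,N'|_{V^G}:X\,]\,S^V_{M,N'}.
\]

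It remains to take $X=W_j$ and compute the multiplicities $[\,N'|_{V^G}:W_j\,]$, and this is where the real work is. For $v\in V^G$ and $g\in G$ we have $gv=v$, so $Y_{N'\circ g}(v,z)=Y_{N'}(gv,z)=Y_{N'}(v,z)$; hence $N'\circ g\cong N'$ as $V^G$-modules, and every $N'\in\mathscr{O}(W)$ satisfies $N'|_{V^G}\cong W|_{V^G}$, contributing multiplicity $[\,N'|_{V^G}:W_j\,]=\dim U_j$. Conversely, if $N'\notin\mathscr{O}(W)$ then $N'$ and $W$ lie in distinct $G$-orbits, so the disjointness clause of Theorem \ref{theoremGeneralizedQuantumGaloisTheory} (namely that $M_\lambda\ncong N_\mu$ whenever $M,N$ lie in different orbits) shows $W_j$ is not a $V^G$-constituent of $N'$, i.e.\ the multiplicity is $0$. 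Substituting these values into the displayed identity gives $S^{V^G}_{M,W_j}=(\dim U_j)\sum_{N\in\mathscr{O}(W)}S^V_{M,N}$, as asserted. The main obstacle is precisely this identification of which irreducible $V$-modules can contribute the constituent $W_j$ and with what multiplicity; the rest is bookkeeping with modular transformations together with the compatibility check mentioned above.
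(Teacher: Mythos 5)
Your proposal is correct and follows essentially the same route as the paper: expand $Z_M(v,-\tfrac{1}{\tau})$ once via the $S$-matrix of $V$ (restricting each $Z_{N'}$ to irreducible $V^G$-characters) and once via the extended $S$-matrix of $V^G$, then compare coefficients of $Z_{W_j}(v,\tau)$ using linear independence of the irreducible $V^G$-characters. Your explicit computation of the multiplicities $[\,N'|_{V^G}:W_j\,]$ --- $\dim U_j$ for $N'\in\mathscr{O}(W)$ via $N'\circ g\cong N'$ as $V^G$-modules, and $0$ otherwise by the disjointness clause of Theorem \ref{theoremGeneralizedQuantumGaloisTheory} --- just spells out the step the paper leaves implicit when it drops the $N\notin\mathscr{O}(W)$ terms from the coefficient comparison.
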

\begin{proof}
    Denote \(M=\bigoplus_{i=1}^mM_i\), where \(M_i\in \mathscr{M}_{V^G}\), for \(i\in \{1,2,\cdots,m\}\). Note that \(M_{i_1}\) and \(M_{i_2}\) might be isomorphic, for \(i_1\neq i_2\).
    Definition \ref{definitionExtension} shows that
    \begin{align*}
        Z_M(v,-\frac{1}{\tau})=&\tau^{\mathrm{Wt}[v]}\sum_{N\in \mathscr{M}_V}S_{M,N}^{V}Z_N(v,\tau)\\
        =&\tau^{\mathrm{Wt}[v]}(\sum_{N\notin \mathscr{O}(W)}S_{M,N}^{V}Z_N(v,\tau)+\sum_{N\in \mathscr{O}(W)}S_{M,N}^{V}Z_N(v,\tau))\\
        =&\tau^{\mathrm{Wt}[v]}(\sum_{N\notin \mathscr{O}(W)}S_{M,N}^{V}Z_N(v,\tau)+\sum_{N\in \mathscr{O}(W)}S_{M,N}^{V}\sum_{j=1}^{n}(\mathrm{dim}\ U_j)Z_{W_j}(v,\tau))\\
        =&\tau^{\mathrm{Wt}[v]}(\sum_{N\notin \mathscr{O}(W)}S_{M,N}^{V}Z_N(v,\tau)+\sum_{j=1}^{n}((\mathrm{dim}\ U_j)\sum_{N\in \mathscr{O}(W)}S_{M,N}^{V}Z_{W_j}(v,\tau))).\\
    \end{align*}
    On the other hand,
    \begin{align*}
        Z_M(v,-\frac{1}{\tau})=&\sum_{i=1}^{m}Z_{M_i}(v,-\frac{1}{\tau})\\
        =&\sum_{i=1}^{m}\tau^{\mathrm{Wt}[v]}((\sum_{N\in \mathscr{M}_{V^G},\ N\neq W_j}S_{M_i,N}^{V^G}Z_N(v,\tau))+S_{M_i,W_j}^{V^G}Z_{W_j}(v,\tau))\\
        =&\tau^{\mathrm{Wt}[v]}((\sum_{i=1}^{m}\sum_{N\in \mathscr{M}_{V^G},\ N\neq W_j}S_{M_i,N}^{V^G}Z_N(v,\tau))+\sum_{i=1}^{m}S_{M_i,W_j}^{V^G}Z_{W_j}(v,\tau))\\
        =&\tau^{\mathrm{Wt}[v]}((\sum_{N\in \mathscr{M}_{V^G},\ N\neq W_j}\sum_{i=1}^{m}S_{M_i,N}^{V^G}Z_N(v,\tau))+S_{M,W_j}^{V^G}Z_{W_j}(v,\tau))\\
        =&\tau^{\mathrm{Wt}[v]}((\sum_{N\in \mathscr{M}_{V^G},\ N\neq W_j}S_{M,N}^{V^G}Z_N(v,\tau))+S_{M,W_j}^{V^G}Z_{W_j}(v,\tau)).
    \end{align*}
    A comparison of the coefficients of \(Z_{W_j}(v,\tau)\) shows that \(S_{M,W_j}^{V^G}=(\mathrm{dim}\ U_j)\sum_{N\in \mathscr{O}(W)}S_{M,N}^V\).
\end{proof}
\noindent The next corollary follows directly from the preceding lemma.
\begin{corollary}
    \label{corollaryCircg}
    Use the same notation in the preceding lemma. Then,
    \begin{equation*}
        S_{M,W_j}^{V^G}=(\mathrm{dim}U_j)\frac{\sum_{g\in G}S_{M,W\circ g}^V}{|G_W|}.
    \end{equation*}
\end{corollary}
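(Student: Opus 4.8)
The statement to prove is Corollary \ref{corollaryCircg}, which asserts
\[
S_{M,W_j}^{V^G}=(\dim U_j)\frac{\sum_{g\in G}S_{M,W\circ g}^V}{|G_W|},
\]
given that Lemma \ref{lemmaOrbits} already provides
\[
S_{M,W_j}^{V^G}=(\dim U_j)\sum_{N\in\mathscr{O}(W)}S_{M,N}^V.
\]

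The plan is to deduce the corollary from Lemma \ref{lemmaOrbits} by rewriting the sum over the orbit $\mathscr{O}(W)$ as a sum over the group $G$. First I would recall that $\mathscr{O}(W)=\{M\in\mathscr{M}_V\mid M\circ g\cong W\text{ for some }g\in G\}$, which is precisely the $G$-orbit of $W$ under the action $W\mapsto W\circ g$ described in the Modular Invariance subsection. By the orbit–stabilizer theorem, the map $g\mapsto W\circ g$ from $G$ onto $\mathscr{O}(W)$ is exactly $|G_W|$-to-one, where $G_W=\{g\in G\mid W\circ g\cong W\}$ is the stabilizer of (the isomorphism class of) $W$. Hence every element $N\in\mathscr{O}(W)$ is hit by precisely $|G_W|$ group elements, so
\[
\sum_{g\in G}S_{M,W\circ g}^V=|G_W|\sum_{N\in\mathscr{O}(W)}S_{M,N}^V.
\]
Substituting this identity into the formula from Lemma \ref{lemmaOrbits} and dividing by $|G_W|$ gives the claimed expression.

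The one point requiring a small justification is that $S_{M,W\circ g}^V$ depends only on the isomorphism class of $W\circ g$, so that the sum over $g\in G$ is well-defined and the counting argument applies; this is immediate from Lemma \ref{lemmaSameModule}, since isomorphic $V$-modules have identical $S$-matrix entries against every irreducible $V$-module. I expect no real obstacle here: the corollary is a purely combinatorial repackaging of Lemma \ref{lemmaOrbits} using the orbit–stabilizer theorem, and the only thing to be careful about is that the action $W\mapsto W\circ g$ is indeed a group action on isomorphism classes (a left action, as noted in the Preliminaries), so that orbits and stabilizers behave as expected. Thus the proof is short: invoke Lemma \ref{lemmaOrbits}, apply orbit–stabilizer to replace $\sum_{N\in\mathscr{O}(W)}$ by $\frac{1}{|G_W|}\sum_{g\in G}$, and conclude.
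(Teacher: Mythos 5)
Your proposal is correct and follows the same route as the paper: both deduce the corollary directly from Lemma \ref{lemmaOrbits} by converting the sum over the orbit $\mathscr{O}(W)$ into the sum over $g\in G$ divided by $|G_W|$ via the orbit--stabilizer count. The paper simply states this conversion without comment, whereas you spell out the counting and the (harmless) dependence only on isomorphism classes.
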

\begin{proof}
    Equation \ref{equationSEntry} shows that
    \begin{align*}
        S_{M,W_j}^{V^G}=&(\mathrm{dim}U_j)\sum_{N\in \mathscr{O}(W)}S_{M,N}^V\\
        =&(\mathrm{dim}U_j)\frac{\sum_{g\in G}S_{M,W\circ g}^V}{|G_W|}.
    \end{align*}
\end{proof}
\noindent The formula in the next lemma is useful in our further discussions.
\begin{lemma}
    \label{lemmaCircCirc}
    Let \(M,W\in \mathscr{M}_V\), and \(g\in G\). Then,
    \begin{equation}
        S_{M\circ g, W\circ g}^{V}=S_{M, W}^{V}.
    \end{equation}
\end{lemma}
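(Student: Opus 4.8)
\noindent The plan is to reduce the identity to the defining modular transformation of the $S$-matrix of $V$, together with the fact that $N\mapsto N\circ g$ merely permutes $\mathscr{M}_V$. The three ingredients I would set up first are: (i) $g$ is compatible with both vertex operator algebra structures on the underlying space; (ii) a trace-function identity $Z_{M\circ g}(v,\tau)=Z_M(gv,\tau)$; (iii) bijectivity of $N\mapsto N\circ g$ on $\mathscr{M}_V$.

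For (i): being an automorphism of $V$, the element $g$ fixes $\mathbf{1}$ and $\omega$, hence also fixes $\tilde{\omega}=\omega-\frac{c}{24}$ and commutes with $L(0)$ and $L[0]$; since $Y[v,z]$ is built from $Y(v,z)$ by a substitution involving only $\mathrm{wt}(v)$, it follows that $g$ is an automorphism of $(V,Y[\ ],\mathbf{1},\tilde{\omega})$ as well, and that $\mathrm{wt}(gv)=\mathrm{wt}(v)$ and $\mathrm{Wt}[gv]=\mathrm{Wt}[v]$ for homogeneous $v$. For (ii): from $Y_{M\circ g}(v,z)=Y_M(gv,z)$ (and $\mathrm{wt}(gv)=\mathrm{wt}(v)$) we get $o_{M\circ g}(v)=o_M(gv)$, while $Y_{M\circ g}(\omega,z)=Y_M(\omega,z)$ shows $L(0)$ acts on $M\circ g$ exactly as on $M$; hence the two traces coincide, $Z_{M\circ g}(v,\tau)=Z_M(gv,\tau)$ for all $v\in V$. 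For (iii): since $g$ fixes $V$, $N\circ g$ is again an ordinary irreducible $V$-module for $N\in\mathscr{M}_V$, and $N\mapsto N\circ g$ is a bijection of $\mathscr{M}_V$ with inverse $N\mapsto N\circ g^{-1}$.

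With these in place the computation is short. Let $v$ be homogeneous in the second vertex operator algebra. Applying the definition of the $S$-matrix of $V$ to the module $M$ and the vector $gv$, then substituting $Z_N(gv,\tau)=Z_{N\circ g}(v,\tau)$ and $\mathrm{Wt}[gv]=\mathrm{Wt}[v]$, and finally reindexing the sum by $P=N\circ g$, I obtain
\[
Z_{M\circ g}(v,-\tfrac{1}{\tau})=Z_M(gv,-\tfrac{1}{\tau})=\tau^{\mathrm{Wt}[v]}\sum_{N\in\mathscr{M}_V}S^V_{M,N}Z_{N\circ g}(v,\tau)=\tau^{\mathrm{Wt}[v]}\sum_{P\in\mathscr{M}_V}S^V_{M,P\circ g^{-1}}Z_P(v,\tau).
\]
On the other hand the definition of the $S$-matrix applied directly to $M\circ g$ gives $Z_{M\circ g}(v,-\tfrac{1}{\tau})=\tau^{\mathrm{Wt}[v]}\sum_{P\in\mathscr{M}_V}S^V_{M\circ g,P}Z_P(v,\tau)$. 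Comparing the coefficients of $Z_P(v,\tau)$ and invoking the linear independence of the trace functions $\{Z_P(v,\tau):v\in V\}$ over $P\in\mathscr{M}_V$ — the same input used in Lemma~\ref{lemmaSameModule} — I conclude $S^V_{M\circ g,P}=S^V_{M,P\circ g^{-1}}$ for all $P$; specializing $P=W\circ g$ yields $S^V_{M\circ g,W\circ g}=S^V_{M,W}$.

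The only points that require care are the compatibility of $g$ with the $Y[\ ]$-structure — needed so that the automorphic prefactors $\tau^{\mathrm{Wt}[\cdot]}$ on the two sides agree — and keeping the bijection $N\mapsto N\circ g$, rather than its inverse, on the correct side when reindexing; neither is a genuine obstacle. A more conceptual alternative would be to note that $g$ induces a braided tensor autoequivalence of the $V$-module category and hence preserves the modular data, but the trace-function argument above stays entirely within the apparatus already developed in Sections~2--3.
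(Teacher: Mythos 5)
Your proof is correct and follows essentially the same route as the paper: both expand $Z_{M\circ g}(v,-\tfrac{1}{\tau})$ in two ways --- once directly via the $S$-matrix attached to $M\circ g$, and once via the identity $Z_{M\circ g}(v,\cdot)=Z_M(gv,\cdot)$ combined with the relabelling $N\mapsto N\circ g$ of $\mathscr{M}_V$ --- and then compare coefficients using the linear independence of the trace functions. Your additional checks (that $\mathrm{Wt}[gv]=\mathrm{Wt}[v]$ and that $N\mapsto N\circ g$ is a bijection) simply make explicit what the paper's shorter proof leaves implicit.
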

\begin{proof}
    Definition \ref{definitionExtension} shows that
    \begin{align*}
        Z_{M\circ g}(v,-\frac{1}{\tau})=&\tau^{\mathrm{Wt}[v]}\sum_{N\circ g\in \mathscr{M}_V}S_{M\circ g,N\circ g}^{V}Z_{N\circ g}(v,\tau).
    \end{align*}
    On the other hand,
    \begin{align*}
        Z_{M\circ g}(v,-\frac{1}{\tau})=&Z_M(gv,-\frac{1}{\tau})\\
        =&\tau^{\mathrm{Wt}[v]}\sum_{N\in \mathscr{M}_V}S_{M,N}^{V}Z_N(gv,\tau)\\
        =&\tau^{\mathrm{Wt}[v]}\sum_{N\in \mathscr{M}_V}S_{M,N}^{V}Z_{N\circ g}(v,\tau).\\
    \end{align*}
    A comparison of the coefficients of \(Z_{N\circ g}(v,\tau)\) shows that \(S_{M\circ g, N\circ g}^{V}=S_{M, N}^{V}\).
\end{proof}
\noindent The next corollary follows directly from the preceding lemma.
\begin{corollary}
    \label{corollaryCircCirc}
    Let \(W\in \mathscr{M}_V\), and \(g\in G\). Then,
    \begin{equation}
        S_{V, W\circ g}^{V}=S_{V, W}^{V}.
    \end{equation}
\end{corollary}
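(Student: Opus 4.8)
The plan is simply to specialize Lemma \ref{lemmaCircCirc} to the case $M = V$, so the only thing that needs to be checked is that the adjoint module is fixed by the $G$-action, i.e. $V \circ g \cong V$ as $V$-modules for every $g \in G$. This is immediate from the definition of the $\mathrm{Aut}(V)$-action recalled in Section 2: by construction $Y_{V \circ g}(v,z) = Y_V(gv,z)$, and since $g$ is an algebra automorphism of $V$ we have $Y_V(gv,z) = g\, Y_V(v,z)\, g^{-1}$ as formal series of operators on $V$; hence the linear isomorphism $g \colon V \circ g \to V$ intertwines the two vertex operator maps and is therefore an isomorphism of (untwisted) $V$-modules.

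With this identification in hand I would argue as follows. Since $V \circ g \cong V$ and the entries $S^V_{U,N}$ depend only on the isomorphism classes of $U$ and $N$ (they are determined by the modular transformation law of $Z_U$, which in turn depends only on the isomorphism class of $U$; this is the content underlying Lemma \ref{lemmaSameModule}), we get
\[
S^V_{V, W\circ g} = S^V_{V\circ g,\, W\circ g}.
\]
Now apply Lemma \ref{lemmaCircCirc} with $M = V$ to obtain $S^V_{V\circ g,\, W\circ g} = S^V_{V, W}$, and combining the two displays yields the claim $S^V_{V, W\circ g} = S^V_{V, W}$.

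There is essentially no obstacle here: the corollary is a direct substitution into the preceding lemma. The only point worth stating explicitly — and the one I would make sure to include in the write-up — is the identification $V \circ g \cong V$, since the statement of Lemma \ref{lemmaCircCirc} is phrased for $M \circ g$ rather than for the adjoint module directly. If one wanted to avoid even invoking Lemma \ref{lemmaSameModule}, one could alternatively just repeat the short computation in the proof of Lemma \ref{lemmaCircCirc} verbatim with $M = V$, using $Z_V(gv,\tau) = Z_{V\circ g}(v,\tau) = Z_V(v,\tau)$ (the last equality because $g$ fixes the conformal vector and acts trivially on the relevant graded pieces of $V$ through $o(\cdot)$), and then comparing coefficients of $Z_{N\circ g}(v,\tau)$.
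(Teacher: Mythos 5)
Your proposal is correct and follows essentially the same route as the paper: the paper also just notes $V\circ g\cong V$ (since $g$ is an automorphism) and then specializes Lemma \ref{lemmaCircCirc} to $M=V$, writing $S^V_{V,W\circ g}=S^V_{V\circ g,W\circ g}=S^V_{V,W}$. One tiny nit in your extra justification: with the convention $Y_{V\circ g}(v,z)=Y_V(gv,z)$ and $gY_V(v,z)g^{-1}=Y_V(gv,z)$, the intertwining isomorphism $V\circ g\to V$ is $g^{-1}$ rather than $g$, but this does not affect the argument.
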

\begin{proof}
    Since \(g\) is an automorphism of \(V\), we have \(V\circ g\cong V\). Lemma \ref{lemmaCircCirc} shows that
    \begin{align*}
        S_{V, W\circ g}^{V}=&S_{V\circ g, W\circ g}^{V}\\
        =&S_{V, W}^{V}.
    \end{align*}
\end{proof}
\begin{corollary}
    \label{ccorollary2}
    Use the same notation in Corollary \ref{corollaryCircg}. Then,
    \begin{equation*}
        S_{V,W_j}^{V^G}=(\mathrm{dim}U_j)|G|\frac{S_{V,W}^V}{|G_W|}.
    \end{equation*}
\end{corollary}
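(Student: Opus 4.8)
The plan is to obtain this identity directly from Corollary~\ref{corollaryCircg} by taking the module $M$ there to be the adjoint module $V$ itself, and then to eliminate the $G$-orbit sum using Corollary~\ref{corollaryCircCirc}. Concretely, first I would specialize Corollary~\ref{corollaryCircg} to $M=V$, which is legitimate since $V\in\mathscr{M}_V$ and the decomposition hypothesis on $W$ (namely $W=\bigoplus_{j}W_j\otimes U_j$ with the $W_j$ pairwise non-isomorphic irreducible $V^G$-modules) is precisely the standing assumption of that corollary. This yields
\[
    S_{V,W_j}^{V^G}=(\dim U_j)\,\frac{\sum_{g\in G}S_{V,W\circ g}^{V}}{|G_W|}.
\]

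Next I would apply Corollary~\ref{corollaryCircCirc}: since each $g\in G$ is an automorphism of $V$, we have $V\circ g\cong V$, and Lemma~\ref{lemmaCircCirc} gives $S_{V,W\circ g}^{V}=S_{V\circ g,W\circ g}^{V}=S_{V,W}^{V}$ for every $g\in G$. Hence the numerator is a sum of $|G|$ copies of the same term, $\sum_{g\in G}S_{V,W\circ g}^{V}=|G|\,S_{V,W}^{V}$, and substituting this back gives
\[
    S_{V,W_j}^{V^G}=(\dim U_j)\,\frac{|G|\,S_{V,W}^{V}}{|G_W|}=(\dim U_j)\,|G|\,\frac{S_{V,W}^{V}}{|G_W|},
\]
which is the assertion.

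There is no real obstacle: this is a bookkeeping consequence of the two preceding corollaries, and the only point requiring a moment's care is that the notational setup ($W$, its decomposition, the $U_j$, and the stabilizer $G_W$) is inherited verbatim from Corollary~\ref{corollaryCircg}, so no independent verification of those hypotheses is needed. If desired, one could append the remark that this formula is consistent with the proportionality between $S_{V,W_j}^{V^G}$ and $\mathrm{qdim}_{V^G}W_j$ recorded in the earlier orbifold results, but that is not part of the proof.
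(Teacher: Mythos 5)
Your proposal is correct and matches the paper's own proof: both specialize Corollary~\ref{corollaryCircg} to $M=V$ and then use Corollary~\ref{corollaryCircCirc} (i.e.\ $S_{V,W\circ g}^{V}=S_{V,W}^{V}$ from Lemma~\ref{lemmaCircCirc} with $V\circ g\cong V$) to collapse the orbit sum to $|G|\,S_{V,W}^{V}$. No gaps; the argument is the same bookkeeping computation the paper gives.
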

\begin{proof}
    By Corollaries, \ref{corollaryCircg}, \ref{corollaryCircCirc}, we have
    \begin{align*}
        S_{V,W_j}^{V^G}=&(\mathrm{dim}U_j)\frac{\sum_{g\in G}S_{V,W\circ g}^V}{|G_W|} \\
        =&(\mathrm{dim}U_j)\frac{\sum_{g\in G}S_{V,W}^V}{|G_W|} \\
        =&(\mathrm{dim}U_j)|G|\frac{S_{M,W}^V}{|G_W|}.
    \end{align*}
\end{proof}
\noindent Formulas in the next lemma play an important role in our further discussions.
\begin{lemma}
    Let \(M,N,W\in \mathscr{M}_V\), such that \(W\) is not an irreducible \(V^{G}\)-module. By orbifold theory, denote \(W=\bigoplus_{i=1}^{n}W_{i}\otimes P_i\), where \(W_i\in \mathscr{M}_{V^G}\). Then,
    \begin{equation}
        \label{equationWu2}
        S_{\{(M\boxtimes_{V^G}N),W_i\}}^{V^G}=\frac{1}{S_{V^G,W_i}^{V^G}}\frac{(\mathrm{dim}\ P_i)^2}{|G_W|^2} (\sum_{g\in G}S_{M,W\circ g}^V)(\sum_{g\in G}S_{N,W\circ g}^V),
    \end{equation}
and
    \begin{equation}
        \label{equationWu3}
        S_{\{(\bigoplus_{g\in G} M\boxtimes_{V}(N\circ g)),W_i\}}^{V^G}=\frac{1}{S_{V,W}^{V^G}}\frac{\mathrm{dim}\ P_i}{|G_W|}(\sum_{g\in G}S_{M,W\circ g}^V)(\sum_{g\in G}S_{N,W\circ g}^V).
    \end{equation}
\end{lemma}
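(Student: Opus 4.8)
The plan is to establish the two identities separately, each by the same two-step template: first reduce the left-hand side to a Verlinde-type product via Theorem~\ref{theoremVerlindeHuang}, then substitute the orbit expressions for the mixed $S^{V^{G}}$-entries supplied by Lemma~\ref{lemmaOrbits} and its corollaries. The hypothesis that $W$ is not an irreducible $V^{G}$-module serves only to place us in the setting of Lemma~\ref{lemmaOrbits}, so that $W=\bigoplus_{i}W_{i}\otimes P_{i}$ is the decomposition provided by orbifold theory.

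For Equation~\ref{equationWu2}, I would first note that $V^{G}$ is rational, $C_{2}$-cofinite, of CFT type, and self-dual, hence satisfies the hypotheses of Theorem~\ref{theoremVerlindeHuang}. Carrying out the computation in the proof of Lemma~\ref{lemmaBasicS-Fusion} verbatim with $V^{G}$ in place of $V$, and extending from irreducible to arbitrary modules by linearity through Definition~\ref{definitionExtension}, gives
\[
 S_{\{(M\boxtimes_{V^{G}}N),W_{i}\}}^{V^{G}}=\bigl(S_{V^{G},W_{i}}^{V^{G}}\bigr)^{-1}S_{M,W_{i}}^{V^{G}}S_{N,W_{i}}^{V^{G}},
\]
where $M$ and $N$ are viewed as (in general reducible) $V^{G}$-modules by restriction. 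Then Corollary~\ref{corollaryCircg} evaluates $S_{M,W_{i}}^{V^{G}}=(\dim P_{i})|G_{W}|^{-1}\sum_{g\in G}S_{M,W\circ g}^{V}$ and similarly for $N$; multiplying these and collecting the factor $(\dim P_{i})^{2}/|G_{W}|^{2}$ yields Equation~\ref{equationWu2}.

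For Equation~\ref{equationWu3} the computation is at the level of $V$. Since $\bigoplus_{g\in G}M\boxtimes_{V}(N\circ g)$ is a reducible $V$-module, Definition~\ref{definitionExtension} writes the left-hand side as $\sum_{g\in G}S_{\{M\boxtimes_{V}(N\circ g),W_{i}\}}^{V^{G}}$. To each summand I would apply Lemma~\ref{lemmaOrbits}, rewriting it as $(\dim P_{i})\sum_{N'\in\mathscr{O}(W)}S_{M\boxtimes_{V}(N\circ g),N'}^{V}$, and then Lemma~\ref{lemmaBasicS-Fusion} to factor $S_{M\boxtimes_{V}(N\circ g),N'}^{V}=(S_{V,N'}^{V})^{-1}S_{M,N'}^{V}S_{N\circ g,N'}^{V}$. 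Each $N'\in\mathscr{O}(W)$ has the form $W\circ h$ for some $h\in G$, so Corollary~\ref{corollaryCircCirc} replaces $S_{V,N'}^{V}$ by $S_{V,W}^{V}$, and Lemma~\ref{lemmaCircCirc}, applied with the automorphism $h^{-1}$, rewrites $S_{N\circ g,N'}^{V}=S_{N\circ(gh^{-1}),W}^{V}$. Summing over $g$ and reindexing $g\mapsto gh^{-1}$ (then applying Lemma~\ref{lemmaCircCirc} once more) shows $\sum_{g\in G}S_{N\circ g,N'}^{V}=\sum_{g\in G}S_{N,W\circ g}^{V}$, a quantity independent of the chosen representative $N'$ of the orbit. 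Interchanging the two summations, factoring out this common inner sum, and using that the fibres of $g\mapsto W\circ g$ all have size $|G_{W}|$ (so $\sum_{g\in G}S_{M,W\circ g}^{V}=|G_{W}|\sum_{N'\in\mathscr{O}(W)}S_{M,N'}^{V}$), the whole expression collapses to the right-hand side of Equation~\ref{equationWu3}; the normalizing constant there is obtained from Corollaries~\ref{corollaryCircCirc} and~\ref{ccorollary2}.

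I expect the main obstacle to be the double-sum bookkeeping in the second identity: one must keep track of the transitive $G$-action on $\mathscr{O}(W)$ and the stabilizer order $|G_{W}|$ at the same time, and the crucial point allowing the sums to separate is that $\sum_{g\in G}S_{N\circ g,N'}^{V}$ is independent of $N'\in\mathscr{O}(W)$, which rests on Lemma~\ref{lemmaCircCirc} and transitivity. A smaller technical point, needed only for the first identity, is confirming that $V^{G}$ is self-dual so that Theorem~\ref{theoremVerlindeHuang}, and hence the $V^{G}$-version of Lemma~\ref{lemmaBasicS-Fusion}, applies; this follows from the standing assumptions (V1)--(V4) together with the simplicity of $V$.
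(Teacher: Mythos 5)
Your proposal is correct and takes essentially the same route as the paper: Equation~\ref{equationWu2} by the $V^G$-level Verlinde factorization of Lemma~\ref{lemmaBasicS-Fusion} followed by Corollary~\ref{corollaryCircg}, and Equation~\ref{equationWu3} by linearity, the orbit expansion of Lemma~\ref{lemmaOrbits}/Corollary~\ref{corollaryCircg}, the $V$-level factorization, and Lemma~\ref{lemmaCircCirc} with Corollary~\ref{corollaryCircCirc} to separate the double sum. Note that your derivation produces $S^{V}_{V,W}$ in the denominator of the second identity, which is exactly what the paper's own proof arrives at (the superscript $V^{G}$ in the stated Equation~\ref{equationWu3} appears to be a typo), so no appeal to Corollary~\ref{ccorollary2} is needed there.
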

\begin{proof}
    Lemma \ref{lemmaBasicS-Fusion} and Corollary \ref{corollaryCircg} show that
    \begin{align*}
        S_{\{(M\boxtimes_{V^G}N),W_i\}}^{V^G}=&\frac{1}{S_{V^G,W_i}^{V^G}}{S_{M,W_i}^{V^G} S_{N,W_i}^{V^G}}\\
        =&\frac{1}{S_{V^G,W_i}^{V^G}} (\mathrm{dim}\ P_i)\frac{\sum_{g\in G}S_{M,W\circ g}^V}{|G_W|}(\mathrm{dim}\ P_i)\frac{\sum_{g\in G}S_{N,W\circ g}^V}{|G_W|}\\
        =&\frac{1}{S_{V^G,W_i}^{V^G}}\frac{(\mathrm{dim}\ P_i)^2}{|G_W|^2} (\sum_{g\in G}S_{M,W\circ g}^V)(\sum_{g\in G}S_{N,W\circ g}^V).
    \end{align*}
    Lemmas \ref{lemmaBasicS-Fusion}, \ref{lemmaCircCirc}, and Corollaries \ref{corollaryCircg}, \ref{corollaryCircCirc} show that
    \begin{align*}
        S_{\{(\bigoplus_{g\in G} M\boxtimes_{V}(N\circ g)),W_i\}}^{V^G}=& \sum_{g\in G} S_{\{(M\boxtimes_{V}(N\circ g)),W_i\}}^{V^G}\\
        =&\frac{\mathrm{qdim}_{V^G}W_i}{|G_W|}\sum_{g,h\in G} S_{\{(M\boxtimes_{V}(N\circ g)),W_i\circ h\}}^{V} \\
        =&\frac{\mathrm{dim}P_i}{|G_W|}\sum_{g,h\in G} \frac{1}{S_{V,W\circ h}^V}S_{M,W\circ h}^VS_{M\circ g, W\circ h}^V \\
        =&\frac{\mathrm{dim}P_i}{|G_W|}\sum_{g,h\in G} \frac{1}{S_{V,W}^V}S_{M,W\circ h}^VS_{M, W\circ g^{-1}h}^V\\
        =&\frac{1}{S_{V,W}}\frac{\mathrm{dim}P_i}{|G_W|}(\sum_{g\in G}S_{M,W\circ g}^V)(\sum_{g\in G}S_{N,W\circ g}^V).
    \end{align*}
\end{proof}
\begin{remark}
    \label{remarkTwisted}
    Let \(M,N\in \mathscr{M}_V\).
    Let \(W\in \mathscr{M}_{V^G, \mathrm{II}}\).
    Then, \(S_{M,W}^{V^G}=S_{N,W}^{V^G}=0\).
    This implies \(S_{\{(M\boxtimes_{V^G}N),W\}}^{V^G}=S_{\{(\bigoplus_{g\in G} M\boxtimes_{V}(N\circ g)),W\}}^{V^G}=0\).
\end{remark}
\noindent The formula in the next lemma will give an important property of the unitary space constructed in Section 4.
\begin{lemma}
    \label{lemmaSII=0}
    Let \(M\in \mathcal{S}_V(G)\), and \(F_i\in \mathscr{M}_{V^G,\mathrm{II}}\). Then,
    \begin{equation*}
        S_{M,F_i}^{V^G}=0.
    \end{equation*}
\end{lemma}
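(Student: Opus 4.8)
The plan is to read the vanishing off from the modular invariance of (twisted) trace functions. Write $M\in\mathscr{M}(g)$ for some $g\in G$ (the case $g=1$ is included). Apply Theorem~\ref{theoremModularInvariance} to the commuting pair $(g,1)$ and the matrix $\gamma=S=\left(\begin{smallmatrix}0&-1\\1&0\end{smallmatrix}\right)$, so that $(g,1)\gamma=(1,g^{-1})$; since $V$ is $g$-rational and $C_2$-cofinite, the theorem applies and yields, for every $v\in V$ homogeneous in the second vertex operator algebra,
\[
Z_M\!\left(v,-\tfrac1\tau\right)=\tau^{\mathrm{wt}[v]}\sum_{N\in\mathscr{M}(1,g^{-1})}\gamma_{M,N}(g,1)\,Z_N\!\left(v,(1,g^{-1}),\tau\right),
\]
where $N$ runs over the irreducible untwisted $g^{-1}$-stable $V$-modules. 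The scalar ambiguity in each $Z_N(v,(1,g^{-1}),\tau)$ is harmless, since it can be absorbed into $\gamma_{M,N}(g,1)$.

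Next I would restrict to $v\in V^G$ and rewrite the right-hand side. Fix such an $N$ and consider $Z_N(v,(1,g^{-1}),\tau)=\mathrm{tr}_N\,o(v)\varphi(g^{-1})q^{L(0)-c/24}$. The \emph{key point} is that $N\circ g^{-1}$ and $N$ coincide as $V^G$-modules, because $Y_{N\circ g^{-1}}(v,z)=Y_N(g^{-1}v,z)=Y_N(v,z)$ for $v\in V^G$; hence $\varphi(g^{-1})\colon N\circ g^{-1}\to N$ is a $V^G$-module automorphism of $N$. It therefore commutes with $o(v)$ for all $v\in V^G$ and with $L(0)$, so it respects the isotypic decomposition $N=\bigoplus_W H_W\otimes W$ of $N$ into irreducible $V^G$-modules and acts there as $\bigoplus_W\phi_W\otimes\mathrm{id}_W$ with $\phi_W\in\End(H_W)$. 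Consequently
\[
Z_N\!\left(v,(1,g^{-1}),\tau\right)=\sum_W(\mathrm{tr}_{H_W}\phi_W)\,Z_W(v,\tau)\qquad(v\in V^G),
\]
and since $N$ is an irreducible $V$-module, every $W$ occurring here is of type one by Remark~\ref{remarkType}. Plugging this back, $Z_M(v,-\tfrac1\tau)$, viewed as a function of $v\in V^G$, is a $\mathbb{C}$-linear combination of the characters $Z_W(v,\tau)$ with $W\in\mathscr{M}_{V^G,\mathrm{I}}$ only.

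Finally, by the extended definition of the $S$-matrix (Definition~\ref{definitionExtension}, as in the computation in the proof of Lemma~\ref{lemmaSameModule}) one also has $Z_M(v,-\tfrac1\tau)=\tau^{\mathrm{wt}[v]}\sum_{W'\in\mathscr{M}_{V^G}}S^{V^G}_{M,W'}Z_{W'}(v,\tau)$ for $v\in V^G$. Comparing the two expressions and using the linear independence of $\{Z_{W'}(v,\tau):W'\in\mathscr{M}_{V^G}\}$ as functions on $V^G$ (the input from \cite{zhu1996modular} already invoked in Lemma~\ref{lemmaSameModule}), every coefficient attached to a type-two module must vanish: $S^{V^G}_{M,W'}=0$ for all $W'\in\mathscr{M}_{V^G,\mathrm{II}}$, and in particular $S^{V^G}_{M,F_i}=0$.

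I expect the only genuine subtlety to be the step identifying $\varphi(g^{-1})$ as a $V^G$-module map, which is exactly what allows the twisted trace $Z_N(v,(1,g^{-1}),\tau)$ to be split into ordinary $V^G$-characters of type-one modules; beyond that one only needs the bookkeeping that the automorphy factor $\tau^{\mathrm{wt}[v]}$ is the same on the $V$-side and on the $V^G$-side, which holds because $\omega$ and the central charge $c$ are common to $V$ and $V^G$, so that $\mathrm{wt}[v]$ is unambiguous for $v\in V^G$.
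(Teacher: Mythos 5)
Your proof is correct and follows essentially the same route as the paper: apply modular invariance to the twisted trace of $M$, observe that each $Z_N(v,(1,g^{-1}),\tau)$ with $N\in\mathscr{M}_V$ expands (for $v\in V^G$) into characters of type-one irreducible $V^G$-modules only, and conclude by linear independence of characters that the coefficients of type-two modules vanish. The only difference is cosmetic: where the paper cites \cite{2015arXiv150703306DOrbifold} for the expansion $Z_N(v,(1,g),\tau)=\sum_{\lambda}\lambda(g)Z_{N_\lambda}(v,\tau)$, you justify the analogous decomposition directly by noting that $\varphi(g^{-1})$ is a $V^G$-module automorphism of $N$ and hence acts isotypically.
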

\begin{proof}
    Modular invariance shows that
    \begin{equation*}
        Z_{M}(v,-\frac{1}{\tau})=\tau^{\mathrm{Wt}[v]}\sum_{N\in \mathscr{M}_V}S_{M,N}^{V}Z_N(V,(1,g),\tau).
    \end{equation*}
    Reference \cite{2015arXiv150703306DOrbifold} indicates that
    \begin{equation*}
        Z_N(v,(1,g),\tau)=\sum_{\lambda\in \Lambda_{G_N,\alpha_N}} \lambda(g)Z_{N_{\lambda}}(v,\tau)
    \end{equation*}
    Since \(N\in \mathscr{M}_V\), \(N_{\lambda}\in \mathscr{M}_{V^G}\) is of type one. Thus,
    \begin{align*}
        Z_{M}(v,-\frac{1}{\tau})=&\tau^{\mathrm{Wt}[v]}\sum_{N\in \mathscr{M}_V}S_{M,N}^{V}Z_N(V,(1,g),\tau)\\
        =&\tau^{\mathrm{Wt}[v]}\sum_{N\in \mathscr{M}_V}(\sum_{\lambda\in \Lambda_{G_N,\alpha_N}}\lambda(g)Z_{N_{\lambda}}(\tau))\\
        =&\tau^{\mathrm{Wt}[v]}\sum_{M_i\in \mathscr{M}_{V^G, \mathrm{I}}}a_iZ_{M_i}(v,\tau),
    \end{align*}
    where \(a_i\in \mathbb{C}\). Since none of the modules in \(\mathscr{M}_{V^G,\mathrm{II}}\) appears in the right side of the preceding equation, we have \(S_{M,F_i}^{V^G}=0\), where \(F_i\in \mathscr{M}_{V^G,\mathrm{II}}\).
\end{proof}
\noindent The next corollary follows directly from the preceding lemma.
\begin{corollary}
    \label{corollarySMFi=0}
    Let \(M\in \mathscr{M}_{V^G}\), and \(N \in \mathcal{S}_{V}(G)\), and \(F_i\in \mathscr{M}_{V^G,\mathrm{II}}\). Then,
    \begin{equation*}
        S_{M\boxtimes_{V^G}N,F_i}^{V^G}=0.
    \end{equation*}
\end{corollary}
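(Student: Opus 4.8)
The plan is to deduce this directly from Lemma \ref{lemmaSII=0} via the multiplicative behavior of $S$-entries under the fusion product $\boxtimes_{V^G}$. Since $V^G$ is rational and $C_2$-cofinite of CFT type, it satisfies the hypotheses of Theorem \ref{theoremVerlindeHuang} (with $(V^G)'\cong V^G$), so Lemma \ref{lemmaBasicS-Fusion}, applied to the vertex operator algebra $V^G$ in place of $V$, gives for any irreducible $V^G$-module $F_i$ and any $V^G$-modules $M,N$ the identity
\begin{equation*}
 S_{M\boxtimes_{V^G}N,\,F_i}^{V^G}=\frac{1}{S_{V^G,F_i}^{V^G}}\,S_{M,F_i}^{V^G}\,S_{N,F_i}^{V^G}.
\end{equation*}
Here $S_{V^G,F_i}^{V^G}\neq 0$ by part (4) of Theorem \ref{theoremVerlindeHuang}, so the right-hand side is well defined.

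First I would fix $F_i\in\mathscr{M}_{V^G,\mathrm{II}}$ and apply Lemma \ref{lemmaSII=0} to the module $N\in\mathcal{S}_V(G)$: this yields $S_{N,F_i}^{V^G}=0$. Substituting into the displayed formula immediately forces $S_{M\boxtimes_{V^G}N,F_i}^{V^G}=0$, regardless of the value of $S_{M,F_i}^{V^G}$. One small point to be careful about: $M\boxtimes_{V^G}N$ is in general a direct sum of irreducible $V^G$-modules rather than irreducible itself, but the extended entry $S^{V^G}_{M\boxtimes_{V^G}N,F_i}$ from Definition \ref{definitionExtension} is additive over that decomposition, and Lemma \ref{lemmaBasicS-Fusion} was already stated and proved for arbitrary (not necessarily irreducible) modules $U,W$, so no extra argument is needed there.

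There is essentially no obstacle here; the corollary is a one-line consequence of Lemma \ref{lemmaSII=0} together with the Verlinde-type multiplicativity of $S$-entries. The only thing worth flagging explicitly in the write-up is the justification that Lemma \ref{lemmaBasicS-Fusion} may legitimately be invoked with $V^G$ playing the role of the ambient regular vertex operator algebra — this is exactly what conditions (V1)--(V4) guarantee — and that $S_{V^G,F_i}^{V^G}\neq 0$ so that dividing by it is harmless. I would state the proof in two sentences: invoke Lemma \ref{lemmaBasicS-Fusion} for $V^G$ to write $S_{M\boxtimes_{V^G}N,F_i}^{V^G}=\tfrac{1}{S_{V^G,F_i}^{V^G}}S_{M,F_i}^{V^G}S_{N,F_i}^{V^G}$, then observe $S_{N,F_i}^{V^G}=0$ by Lemma \ref{lemmaSII=0} since $N\in\mathcal{S}_V(G)$, hence the whole expression vanishes.
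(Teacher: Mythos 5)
Your argument is exactly the paper's proof: apply Lemma \ref{lemmaBasicS-Fusion} with $V^G$ as the ambient algebra to write $S_{M\boxtimes_{V^G}N,F_i}^{V^G}=\tfrac{1}{S_{V^G,F_i}^{V^G}}S_{M,F_i}^{V^G}S_{N,F_i}^{V^G}$, and then kill the right-hand side using $S_{N,F_i}^{V^G}=0$ from Lemma \ref{lemmaSII=0}. Your extra remarks about the nonvanishing of $S_{V^G,F_i}^{V^G}$ and the additivity of the extended entries are sound and only make explicit what the paper leaves implicit.
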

\begin{proof}
    Apply Lemmas \ref{lemmaBasicS-Fusion}, \ref{lemmaSMFi=0}. we have
    \begin{align*}
    S_{M\boxtimes_{V^G}N,F_i}^{V^G}=&\frac{1}{S_{V^G,F_i}^{V^G}}{S_{M,F_i}^{V^G}}S_{N,F_i}^{V^G}\\
    =&\frac{1}{S_{V^G,F_i}^{V^G}}{S_{M,F_i}^{V^G}}\times 0\\
    =&0.
    \end{align*}
\end{proof}
\noindent In the next lemma, we show a result for fusion products involving \(V\) over \(V^G\).
\begin{lemma}
    Let \(M\in \mathcal{S}_V(G)\). Then,
    \begin{equation*}
        V\boxtimes_{V^G}M=|G|M.
    \end{equation*}
\end{lemma}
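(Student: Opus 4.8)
The plan is to identify the two $V^G$-modules $V\boxtimes_{V^G}M$ and $|G|M$ by comparing all of their $S$-matrix entries. Because $V^G$ is a simple, rational, $C_2$-cofinite vertex operator algebra of CFT type (and self-contragredient, as throughout the paper), the argument of Lemma \ref{lemmaSameModule} applies verbatim with $V^G$ in place of $V$: two $V^G$-modules are isomorphic if and only if their entries $S^{V^G}_{-,N}$ agree for every $N\in\mathscr{M}_{V^G}$. So it is enough to prove that $S^{V^G}_{V\boxtimes_{V^G}M,\,N}=|G|\,S^{V^G}_{M,N}$ for every $N\in\mathscr{M}_{V^G}$. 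Applying Lemma \ref{lemmaBasicS-Fusion} with the vertex operator algebra taken to be $V^G$ rewrites the left-hand side as $S^{V^G}_{V,N}\,S^{V^G}_{M,N}/S^{V^G}_{V^G,N}$, the denominator being nonzero by Theorem \ref{theoremVerlindeHuang}(4) applied to $V^G$. Hence the desired equality follows once I show, for each $N$, that \emph{either} $S^{V^G}_{M,N}=0$ \emph{or} $S^{V^G}_{V,N}=|G|\,S^{V^G}_{V^G,N}$; I will establish the first alternative when $N$ is of type two and the second when $N$ is of type one, which by Remark \ref{remarkType} exhausts all cases.

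Suppose first $N=F_i\in\mathscr{M}_{V^G,\mathrm{II}}$ is of type two. Since $M\in\mathcal{S}_V(G)$, Lemma \ref{lemmaSII=0} gives $S^{V^G}_{M,F_i}=0$, so the first alternative holds. Suppose next $N$ is of type one; by Remark \ref{remarkType} it occurs in the decomposition of some irreducible $V$-module $W$, and writing $W=\bigoplus_i W_i\otimes U_i$ as in Corollary \ref{corollaryCircg} we may take $N=W_j$ with multiplicity space $U_j$. Corollary \ref{ccorollary2} then gives $S^{V^G}_{V,W_j}=(\dim U_j)\,|G|\,S^V_{V,W}/|G_W|$, while the orbifold $S$-matrix identity $S_{V_\lambda,V^G}=\frac{\dim W_\lambda}{|G_M|}\,S_{M,V}$ of \cite{2015arXiv150703306DOrbifold} (the unlabeled theorem between Theorems \ref{Qd1} and \ref{Qd2}), read off for the untwisted module $W$ and combined with the symmetry of the $S$-matrices of $V$ and $V^G$, gives $S^{V^G}_{V^G,W_j}=S^{V^G}_{W_j,V^G}=(\dim U_j)\,S^V_{V,W}/|G_W|$. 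Since $\dim U_j\geq 1$ and $S^V_{V,W}>0$ (the $S$-matrix of $V$ has positive entries in the row indexed by $V$), these may be divided to yield $S^{V^G}_{V,W_j}=|G|\,S^{V^G}_{V^G,W_j}$, the second alternative. Therefore $S^{V^G}_{V\boxtimes_{V^G}M,N}=|G|\,S^{V^G}_{M,N}$ for all $N\in\mathscr{M}_{V^G}$, and $V\boxtimes_{V^G}M\cong|G|M$.

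The step I expect to require the most care is the type-one computation: one must invoke the formula $S_{V_\lambda,V^G}=\frac{\dim W_\lambda}{|G_M|}S_{M,V}$ of \cite{2015arXiv150703306DOrbifold} for an \emph{untwisted} $V$-module $W$, so that the quantity $S_{M,V}$ appearing there is genuinely the $S$-matrix entry $S^V_{W,V}$ of $V$, and one must match the data in that formula to the decomposition $W=\bigoplus_i W_i\otimes U_i$ used in Corollaries \ref{corollaryCircg} and \ref{ccorollary2}, i.e.\ $M_\lambda\leftrightarrow W_j$, $W_\lambda\leftrightarrow U_j$, $G_M\leftrightarrow G_W$. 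Apart from that, the proof is a direct combination of Lemma \ref{lemmaBasicS-Fusion}, Lemma \ref{lemmaSII=0}, Corollary \ref{ccorollary2}, and the $V^G$-analogue of Lemma \ref{lemmaSameModule}. As an independent sanity check, Theorem \ref{Qd1} gives $\mathrm{qdim}_{V^G}(V\boxtimes_{V^G}M)=\mathrm{qdim}_{V^G}V\cdot\mathrm{qdim}_{V^G}M=|G|\,\mathrm{qdim}_{V^G}M=\mathrm{qdim}_{V^G}(|G|M)$, although this equality of quantum dimensions alone does not give the module isomorphism.
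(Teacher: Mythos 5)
Your proof is correct and its skeleton coincides with the paper's: identify $V\boxtimes_{V^G}M$ with $|G|M$ by comparing the entries $S^{V^G}_{-,N}$ for all $N\in\mathscr{M}_{V^G}$, treating type-one and type-two $N$ separately, and conclude with Lemma \ref{lemmaSameModule} applied to $V^G$; the type-two case via Lemma \ref{lemmaSII=0} (equivalently Corollary \ref{corollarySMFi=0}) is identical to the paper's. The one genuine difference is how the key identity $S^{V^G}_{V,N}=|G|\,S^{V^G}_{V^G,N}$ for type-one $N$ is justified: the paper simply cites Equation \ref{equationWu4}, i.e.\ Corollary \ref{corollary1}, which is only established in Section 5 via Theorem \ref{theoremMainConstant}, so its proof of this lemma carries a forward reference; you instead derive the identity on the spot from Corollary \ref{ccorollary2} together with the orbifold formula $S_{M_\lambda,V^G}=\frac{\dim W_\lambda}{|G_M|}S_{M,V}$ quoted in the preliminaries (the unlabeled theorem between Theorems \ref{Qd1} and \ref{Qd2}), read for an untwisted irreducible $W$ so that $S_{W,V}$ is the honest entry $S^V_{W,V}$, combined with the symmetry of the $S$-matrix of $V^G$. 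This makes the argument self-contained within Sections 2--3 and removes the forward dependence, at the cost of leaning on the Dong--Ren--Xu formula (whose statement in the paper has $S_{V_\lambda,V^G}$ on the left and must be read, as you correctly do, as the general $S_{M_\lambda,V^G}$), whereas the paper's route reuses its own ``evenly distributive'' Theorem \ref{theoremMainConstant}. Minor remarks: the division step in your type-one case is unnecessary (the two expressions already differ by the factor $|G|$), and your closing quantum-dimension check is a consistency observation only, as you say.
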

\begin{proof}
    Assume \(F_i\in \mathscr{M}_{V^G,\mathrm{I}}\).
    Lemma \ref{lemmaBasicS-Fusion} and Equation \ref{equationWu4} show that
    \begin{align*}
        S_{V\boxtimes_{V^G}M,E_i}^{V^G}=&\frac{1}{S_{V^G,F_i}^{V^G}}{S_{V,F_i}^{V^G}}S_{M,F_i}^{V^G}\\
        =&\frac{1}{S_{V^G,F_i}^{V^G}}{|G|S_{V^G,F_i}^{V^G}}S_{M,F_i}^{V^G}\\
        =&|G|S_{M,F_i}^{V^G} \\
        =&S_{|G|M,F_i}^{V^G}.
    \end{align*}
    Let \(F_i\in \mathscr{M}_{V^G,\mathrm{II}}\). Lemma \ref{lemmaSMFi=0} and Corollary \ref{corollarySMFi=0} show that
    \begin{align*}
        S_{V\boxtimes_{V^G}M,F_i}^{V^G}=&0 \\
        =&|G|0\\
        =&|G|S_{M,F_i}^{V^G}\\
        =&S_{|G|M,F_i}^{V^G}.
    \end{align*}
    Thus, \(S_{V\boxtimes_{V^G}M,F_i}^{V^G}=S_{|G|M,F_i}^{V^G}\), for every \(F_i\in \mathscr{M}_{V^G}\). Hence, Lemma \ref{lemmaSameModule} implies \(V\boxtimes_{V^G}M=|G|M\) as \(V^G\)-modules.
\end{proof}

\newpage
\section{The unitary space structure on the set of \(V\)-modules}
\begin{remark}
    \label{remarkVectorSpace}
    In the first place, we introduce some notations used in this section.
    Let \(\mathbf{E}_V\) be the linear space spanned by \(\mathscr{M}_{V}\) over \(\mathbb{C}\).
    In the rest of the paper, write \(\mathbf{E}_{V^G}\) as \(\mathbf{E}\) for convenience.
    Then, \(\mathrm{dim}\mathbf{E}=|\mathscr{M}_{V^G}|\), and \(A(\sum a_iM_i, \sum b_iM_i)=S_{\sum a_iM_i, \sum b_iM_i}^{V^G}\) is a bilinear form on \(\mathbf{E}\).
    Since the \(S\)-matrix of \(V^G\) is unitary, \(A(\cdot,\cdot)\) is nondegenerate.
    Let \(\mathbf{W}\) be the linear subspace of \(\mathbf{E}\) spanned by \(\mathscr{M}_{V^G,\mathrm{II}}\).
    Let \(\mathbf{W}^A=\{x\in \mathbf{E}|A(x,y)=0,\ \forall y\in\mathbf{W}\}\) consisting of the annihilators of \(\mathbf{W}\).
    Then,
    \begin{align*}
        \mathrm{dim}(\mathbf{W}^A)=&\mathrm{dim}\mathbf{E}-\mathrm{dim}\mathbf{W}\\
        =&|\mathscr{M}_{V^G,\mathrm{I}}|.
    \end{align*}
    Let \(\mathcal{M}_V\) be the set of all \(V\)-modules. Define
    \begin{align*}
        \iota: \ \ \ \ \ \ \ \ \ \  \mathcal{M}_{V^G} \rightarrow& \mathbf{E}\\
        \iota(\bigoplus_i a_iM_i)=&\sum_i a_iM_i,
    \end{align*}
    where \(M_i\in \mathscr{M}_{V^G}\), and \(a_i\in \mathbb{C}\).
    Let \(\mathbf{U}\) be the linear subspace of \(\mathbf{E}\) spanned by \(\{\iota(M)|M\in \mathcal{S}_V(G)\}\). Then, \(\mathrm{dim}\mathbf{U}=|\mathscr{S}_V(G)|\).
    \begin{itemize}
    \item (a) If \(M\in \mathcal{S}_{V}(G)\), Lemma \ref{lemmaSMFi=0} shows that \(\iota(M)\in \mathbf{W}^A\).
    \item (b) If \(M\in \mathscr{M}_{V^G}\), and \(N \in \mathcal{S}_{V}(G)\), Corollary \ref{corollarySMFi=0} shows that \(\iota(M\boxtimes_{V^G}N)\in \mathbf{W}^A\).
    \item (c) \(\mathbf{U}=\mathbf{W}^A\).
    \item (d) \(\iota(M\boxtimes_{V^G}N)\in \mathbf{U}\), if and only if \(M\in \mathbf{U}\), or \(N\in \mathbf{U}\).
    \end{itemize}
    Statements (c), (d) will be proved in the following paragraphs.
\end{remark}
\begin{lemma}\label{lemmaUEqualsWA}
    Use the notations introduced in Remark \ref{remarkVectorSpace}. Then,
    \begin{equation*}
    \mathbf{U}=\mathbf{W}^A.
    \end{equation*}
\end{lemma}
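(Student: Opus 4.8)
The plan is to prove $\mathbf{U} = \mathbf{W}^A$ by a dimension count together with one inclusion. From Remark~\ref{remarkVectorSpace}(a) we already have $\mathbf{U} \subseteq \mathbf{W}^A$, since each generator $\iota(M)$ with $M \in \mathcal{S}_V(G)$ lies in $\mathbf{W}^A$ by Lemma~\ref{lemmaSMFi=0}. Also recorded in Remark~\ref{remarkVectorSpace} are the two dimension facts $\dim \mathbf{W}^A = |\mathscr{M}_{V^G,\mathrm{I}}|$ (from nondegeneracy of $A$, which holds because $S^{V^G}$ is unitary) and $\dim \mathbf{U} = |\mathscr{S}_V(G)|$. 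So it suffices to show $\dim\mathbf{U} = |\mathscr{M}_{V^G,\mathrm{I}}|$, i.e. that $|\mathscr{S}_V(G)| = |\mathscr{M}_{V^G,\mathrm{I}}|$, and then invoke $\mathbf{U} \subseteq \mathbf{W}^A$ together with equality of dimensions.

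First I would establish $\dim \mathbf{U} = |\mathscr{S}_V(G)|$ carefully: the set $\{\iota(M) : M \in \mathcal{S}_V(G)\}$ a priori has $|\mathcal{S}_V(G)|$ elements, but $\iota(M\circ g) = \iota(M)$ for all $g\in G$ since $M$ and $M\circ g$ have the same $V^G$-module decomposition (by Theorem~\ref{theoremGeneralizedQuantumGaloisTheory}, elements of one $G$-orbit restrict to the same multiset of irreducible $V^G$-modules), so the image is indexed by $\mathscr{S}_V(G) = \mathcal{S}_V(G)/G$. Linear independence of these vectors $\{\iota(\mathscr{O}(M))\}$ follows because distinct orbits $\mathscr{O}(M) \neq \mathscr{O}(N)$ give $V^G$-module decompositions with \emph{disjoint} sets of irreducible constituents — this is precisely the statement $M_\lambda \ncong N_\mu$ in Theorem~\ref{theoremGeneralizedQuantumGaloisTheory} when $M,N$ are in different $G$-orbits — so the corresponding vectors, being supported on disjoint subsets of the basis $\mathscr{M}_{V^G}$, are linearly independent. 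Hence $\dim\mathbf{U} = |\mathscr{S}_V(G)|$.

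Next I would establish the bijection $\mathscr{S}_V(G) \longleftrightarrow \mathscr{M}_{V^G,\mathrm{I}}$. Actually it is cleaner to count $|\mathscr{M}_{V^G,\mathrm{I}}|$ directly as $\dim\mathbf{U}$. By definition, an irreducible $V^G$-module is of type one iff it occurs in the decomposition of some irreducible $V$-module, i.e. of some $M \in \mathscr{M}_V = \mathscr{M}_V(1) \subseteq \mathcal{S}_V(G)$. The vectors $\{\iota(M) : M \in \mathscr{M}_V\}$ already span a subspace whose support is exactly $\mathscr{M}_{V^G,\mathrm{I}}$. But I must be careful: $\iota$ records multiplicities, and a single orbit vector $\iota(\mathscr{O}(M))$ is a sum $\sum_\lambda (\dim W_\lambda) M_\lambda$ of several distinct basis elements of $\mathbf{E}$ with positive integer coefficients; what the previous paragraph shows is that the \emph{supports} of distinct orbit vectors partition $\mathscr{M}_{V^G,\mathrm{I}}$. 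Since every type-one irreducible $V^G$-module appears in exactly one such support (the last bullet of Theorem~\ref{theoremGeneralizedQuantumGaloisTheory} says every irreducible $V^G$-module comes from \emph{some} twisted module, and the "not in the same orbit" bullets pin down uniqueness), the number of orbits $\mathscr{O}(M)$ with $M$ restricting to type-one modules — which is exactly $|\mathscr{M}_V/G|$, a subset of $\mathscr{S}_V(G)$ — is not obviously all of $\mathscr{S}_V(G)$.

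The main obstacle, and the heart of the argument, is therefore the identity $|\mathscr{S}_V(G)| = |\mathscr{M}_{V^G,\mathrm{I}}|$ in full, i.e. showing that the orbit-partition of \emph{all} of $\mathcal{S}_V(G)$ (including genuinely twisted modules, $g \neq 1$) into $V^G$-constituents lands inside, and exhausts, $\mathscr{M}_{V^G,\mathrm{I}}$ with each orbit contributing a support of size equal to the number of its distinct constituents — and that summing these sizes recovers $|\mathscr{M}_{V^G,\mathrm{I}}|$. Wait: a genuinely twisted irreducible module decomposes into type-\emph{two} $V^G$-modules by definition of type two, so $\iota$ of such a module is supported on $\mathscr{M}_{V^G,\mathrm{II}}$, not $\mathscr{M}_{V^G,\mathrm{I}}$ — contradicting $\iota(M)\in\mathbf{W}^A$ unless that support is empty, which it is not. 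Re-examining: Lemma~\ref{lemmaSMFi=0} gives $S^{V^G}_{M,F_i}=0$ for $M\in\mathcal{S}_V(G)$ and $F_i$ type two, which does \emph{not} say $\iota(M)$ is supported on type-one modules — the pairing $A$ is via the $S$-matrix, not the identity. So the resolution is: $\mathbf{W}^A$ is cut out by pairing against $\mathbf{W} = \operatorname{span}\mathscr{M}_{V^G,\mathrm{II}}$ through $A = S^{V^G}$, and unitarity of $S^{V^G}$ makes $S^{V^G}$ its own (conjugate) inverse up to the charge-conjugation permutation, so $\mathbf{W}^A = (S^{V^G})(\mathbf{W})^\perp$ is spanned by the rows of $S^{V^G}$ indexed by $\mathscr{M}_{V^G,\mathrm{I}}$ — which, by Lemma~\ref{lemmaOrbits}, are precisely linear combinations of the vectors $\iota(\mathscr{O}(M))$. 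Concretely: for $M \in \mathscr{M}_V$, the vector with $W$-entry $S^{V^G}_{M,W}$ is, by Lemma~\ref{lemmaOrbits}, a scalar multiple of a sum over orbit members, exhibiting $\mathbf{W}^A$'s spanning set inside $\mathbf{U}$. So the cleanest path is: (i) $\mathbf{U}\subseteq\mathbf{W}^A$ by (a); (ii) $\dim\mathbf{U} = |\mathscr{S}_V(G)|$ by the independence argument above; (iii) $\dim\mathbf{W}^A = |\mathscr{M}_{V^G,\mathrm{I}}|$; (iv) a duality/spanning argument using Lemma~\ref{lemmaOrbits} and unitarity of $S^{V^G}$ shows $\mathbf{W}^A \subseteq \mathbf{U}$, or equivalently $|\mathscr{S}_V(G)| \geq |\mathscr{M}_{V^G,\mathrm{I}}|$, which combined with (i)–(iii) forces equality. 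Step (iv) — the reverse inclusion via the $S$-matrix duality — is the real obstacle and I expect the author's proof to hinge on Lemma~\ref{lemmaOrbits} rewriting each row of $S^{V^G}$ restricted to type-one indices as an element of $\mathbf{U}$.
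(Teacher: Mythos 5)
Your skeleton (the inclusion $\mathbf{U}\subseteq\mathbf{W}^A$ from statement (a) of Remark \ref{remarkVectorSpace}, the identity $\dim\mathbf{W}^A=|\mathscr{M}_{V^G,\mathrm{I}}|$ from nondegeneracy, and a dimension comparison) matches the paper's, but the step you yourself flag as the obstacle --- the reverse inclusion $\mathbf{W}^A\subseteq\mathbf{U}$, equivalently $|\mathscr{M}_{V^G,\mathrm{I}}|\le\dim\mathbf{U}$ --- is not actually proved, and the mechanism you propose for it cannot do the job. Unitarity of $S^{V^G}$ does show that $\mathbf{W}^A$ is spanned by the $|\mathscr{M}_{V^G,\mathrm{I}}|$ rows of $(S^{V^G})^{-1}$ indexed by the \emph{individual} irreducible type-one modules $M_\lambda$. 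But Lemma \ref{lemmaOrbits} says nothing about those rows: it only computes the aggregate entries $S^{V^G}_{M,W_j}$ for $M$ a whole irreducible untwisted $V$-module (i.e.\ it gives $\sum_\lambda(\dim W_\lambda)S^{V^G}_{M_\lambda,W_j}$, not the individual $S^{V^G}_{M_\lambda,W_j}$), and only for columns $W_j$ of type one, since its hypothesis is $W\in\mathscr{M}_V$. So from Lemma \ref{lemmaOrbits} you obtain at most $|\mathscr{M}_V/G|$ vectors, with no control whatsoever over their entries in the type-two columns; this is not a spanning set for $\mathbf{W}^A$, whose dimension $|\mathscr{M}_{V^G,\mathrm{I}}|$ is in general strictly larger.

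The missing ingredient is twisted-sector modular invariance, and this is exactly what the paper uses. For $M\in\mathscr{M}_V$ and $\lambda\in\Lambda_{G_M,\alpha_M}$ one has $Z_{M_\lambda}(v,-\frac{1}{\tau})=\frac{1}{|G_M|}\sum_{g\in G_M}\overline{\lambda(g)}\,Z_M(v,(1,g),-\frac{1}{\tau})$ (Equation \ref{equationDim3}, from \cite{2015arXiv150703306DOrbifold}), and each $Z_M(v,(1,g),-\frac{1}{\tau})$ is a linear combination of the $Z_W(v,\tau)$ with $W\in\mathscr{M}(g)$ (Equation \ref{equationDim2}, via Theorem \ref{theoremModularInvariance}). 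Comparing with $Z_{M_\lambda}(v,-\frac{1}{\tau})=\tau^{\mathrm{wt}[v]}\sum_{W\in\mathscr{M}_{V^G}}S^{V^G}_{M_\lambda,W}Z_W(v,\tau)$ and using linear independence of irreducible characters, this places the full $S$-row of each single $M_\lambda$ --- including its type-two entries --- inside $\mathbf{U}$; invertibility of $S^{V^G}$ makes these $|\mathscr{M}_{V^G,\mathrm{I}}|$ vectors linearly independent, so $\dim\mathbf{W}^A\le\dim\mathbf{U}$, and with your step (i) equality follows. Without this input (or some equivalent statement tying type-two $V^G$-modules to genuinely twisted $V$-modules), unitarity plus Lemma \ref{lemmaOrbits} alone does not yield the reverse inclusion, so your proposal as written has a genuine gap precisely at step (iv).
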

\begin{proof}
    Let \(v\in V^G\), \(\tau\rightarrow -\frac{1}{\tau}\), \(\gamma=
    \begin{pmatrix}
        0&-1\\
        1&1
    \end{pmatrix}\),
    and \(N\in \mathscr{M}(1,g)\), where \(g\in G\).
    Then,
    \begin{equation}
        \label{equationDim1}
        Z_N(v,(1,g),-\frac{1}{\tau})=\tau^{\mathrm{Wt}[v]}\sum_{M_i\in \mathscr{M}_{V^G}}a_iZ_{M_i}(v,\tau).
    \end{equation}
    Modular invariance shows that
    \begin{equation}
        \label{equationDim2}
        Z_N(v,(1,g),-\frac{1}{\tau})=\tau^{\mathrm{Wt}[v]}\sum_{M\in \mathscr{M}(g)}S_{N,M}Z_{M}(v,\tau).
    \end{equation}
    Let \(M\in \mathscr{M}_V\). Use generalized Galois theory to denote \(M=\sum_{\lambda\in \Lambda_{G_M,\alpha_M}}W_{\lambda}\otimes M_{\lambda}\).
    By Reference \cite{2015arXiv150703306DOrbifold},
    \begin{equation}
        \label{equationDim3}
        Z_{M_\lambda}(v,-\frac{1}{\tau})=\frac{1}{|G_M|}\sum_{g\in G_M}Z_M(v,(1,g),-\frac{1}{\tau})\overline{\lambda(g)}.
    \end{equation}
    Since \(M_{\lambda}\in \mathscr{M}_{V^G,\mathrm{I}}\), Equations \ref{equationDim2}, \ref{equationDim3} show that
    \begin{equation}
        \label{inequalityDim1}
        |\mathscr{M}_{V^G,\mathrm{I}}|\leq |\mathcal{S}_{V}(G)|
    \end{equation}
    It follows that \(\mathrm{dim}\mathbf{W}^A\leq \mathrm{dim}\mathbf{U}\).
    Statement (a) in Remark \ref{remarkVectorSpace} implies \(\mathrm{dim}\mathbf{U}\leq \mathrm{dim}\mathbf{W}^A\), and \(\mathbf{U}\) is a subspace of \(\mathbf{W}^A\).
    Thus, \(\mathrm{dim}\mathbf{U}=\mathrm{dim}\mathbf{W}^A\), and hence \(\mathbf{U}=\mathbf{W}^A\).
\end{proof}
\begin{remark}
    Let \(M\in \mathcal{M}_{V^G}\) The preceding lemma implies \(S_{M,F_i}^{V^G}=0\), for every \(F_i \in \mathscr{M}_{V^G,\mathrm{II}}\),  if and only if \(M\in \mathbf{B}\).
\end{remark}
\begin{remark}\label{rm:length_M}
    Let \(\mathscr{M}_{V^G}=\{M^0,M^1,\ldots, M^d\}\), where \(d\in \mathbb{N}\).  Let \(e_i=\iota(M^i)\). Let \(\langle \cdot , \cdot \rangle\) be the Hermitian inner product on \(\mathbf{E}\), where \(\{e_0,e_1,\ldots,e_d\}\) is an orthonormal basis of \(\mathbf{E}\).
    By References \cite{verlinde1988fusion}, \cite{huang2008vertex}, the \(S\) matrix is unitary.
    This implies \(\{\epsilon_i\}\), where \(\epsilon_i=\sum_{k=0}^dS_{i,k}e_k\) is also an orthonormal basis of \(\mathbf{E}\). Let \(M\in \mathscr{M}_{V}(g)\). The one-to-one map \(\iota\) is omitted, if no confusion follows. By orbifold theory, denote \(M=\bigoplus_{\lambda\in \Lambda_{G_M,\alpha_M}}W_{\lambda}\otimes M_{\lambda}\), where \(M_{\lambda}\in \mathscr{M}_{V^G}\). Then,
    \begin{align*}
        \langle M, M \rangle=&\langle \bigoplus_{\lambda\in \Lambda_{G_M,\alpha_M}}W_{\lambda} \otimes M_{\lambda}, \bigoplus_{\lambda\in \Lambda_{G_M,\alpha_M}}W_{\lambda} \otimes M_{\lambda} \rangle \\
                            =&\langle \sum_{\lambda\in \Lambda_{G_M,\alpha_M}}(\mathrm{dim}W_{\lambda})M_{\lambda}, \sum_{\lambda\in \Lambda_{G_M,\alpha_M}}(\mathrm{dim}W_{\lambda})M_{\lambda} \rangle \\
                            =&\sum_{\lambda\in \Lambda_{G_M,\alpha_M}}(\mathrm{dim}W_{\lambda})^2 \\
                            =&|G_M|.
    \end{align*}
By \cite[Theorem 3.2]{2015arXiv150703306DOrbifold}, with \(N\in \mathscr{M}_{V}(h)\) , we have
\begin{equation}\label{eq:orth_TModule}
 \langle M, N \rangle=\delta_{\mathscr{O}(M),\mathscr{O}(N)}|G_M|.
\end{equation}
\end{remark}
\noindent As an application, we show \(G_M=G\) under some assumptions.
\begin{proposition}\label{propo:GmEqualG}
    Let \(G\) be a cyclic group, and \(V\) a vertex operator algebra.
    Assume \(G_{M_k}=G\), or \(\{e\}\), for \(k=1, 2, \ldots, n\).
    Then, \(G_M=G\), for each \(M\in \mathscr{M}_V(g)\), where \(g\neq e\).
\end{proposition}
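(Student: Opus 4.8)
The plan is to reduce the statement to a counting argument via the orbit-counting (Burnside) lemma, feeding in the cardinality equalities supplied by Theorem~\ref{theoremModularInvariance}. Write $G=\langle a\rangle$ and set
\[
m:=\big|\{M\in\mathscr{M}_V:G_M=G\}\big|.
\]
First I would record the following consequence of the hypothesis: for every $g\in G$ with $g\neq e$ one has the equality of sets $\{M\in\mathscr{M}_V: M\text{ is }g\text{-stable}\}=\{M\in\mathscr{M}_V:G_M=G\}$. Indeed, if $g\in G_M$ then $G_M\neq\{e\}$, so $G_M=G$ by the hypothesis; the reverse inclusion is trivial. Combining this with the last assertion of Theorem~\ref{theoremModularInvariance}(2) --- that the number of inequivalent irreducible $g$-twisted $V$-modules equals the number of $g$-stable irreducible $V$-modules --- yields $|\mathscr{M}_V(g)|=m$ for every $g\neq e$.

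Next, fix $g\neq e$ and let $G$ act on the finite set $\mathscr{M}_V(g)$ by $M\mapsto M\circ h$; this action is well defined because $G$ is abelian, so $M\circ h$ is again a $g$-twisted module. For $h\in G$ the set of fixed points of $h$ is exactly $\mathscr{M}(g,h)$, and the key claim is that $|\mathscr{M}(g,h)|=m$ for every $h\in G$. Since $g\neq e$, the commuting pair $(g,h)$ is not $(e,e)$, so the subgroup $\langle g,h\rangle\leq G$ is cyclic and nontrivial; letting $c$ be a generator of it (so $c\neq e$), a normal-form argument shows that $(g,h)$ lies in the same right $SL(2,\mathbb{Z})$-orbit in $P(G)$ as a pair one of whose entries is $e$ and the other is $c$. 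Since by Theorem~\ref{theoremModularInvariance}(2) the quantity $|\mathscr{M}(g',h')|$ depends only on the $SL(2,\mathbb{Z})$-orbit of $(g',h')$, we get $|\mathscr{M}(g,h)|=|\mathscr{M}(e,c)|=\big|\{M\in\mathscr{M}_V:M\text{ is }c\text{-stable}\}\big|=m$, the last equality again by the first paragraph since $c\neq e$.

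Finally, the orbit-counting lemma applied to the $G$-action on $\mathscr{M}_V(g)$ gives
\[
|\mathscr{M}_V(g)/G|=\frac{1}{|G|}\sum_{h\in G}|\mathscr{M}(g,h)|=\frac{1}{|G|}\cdot|G|\cdot m=m.
\]
Thus the $m$-element set $\mathscr{M}_V(g)$ is partitioned into $m$ nonempty $G$-orbits, so every orbit is a singleton. But a singleton orbit $\{M\}$ means $M\circ h\cong M$ for all $h\in G$, i.e.\ $G_M=G$. Hence $G_M=G$ for every $M\in\mathscr{M}_V(g)$ and every $g\neq e$, as claimed.

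I expect the main obstacle to be the $SL(2,\mathbb{Z})$-normal form used in the middle paragraph: one must check that a commuting pair $(g,h)$ in the \emph{cyclic} group $G$ can be moved, by the right action of $SL(2,\mathbb{Z})$ on $P(G)$, to a pair with a trivial entry whose other entry is still nontrivial. This is exactly where cyclicity is essential --- encoding the pair as a homomorphism $\mathbb{Z}^2\to G$, $(x,y)\mapsto g^x h^y$, its kernel lattice has invariant factors $(1,|\langle g,h\rangle|)$, and choosing a basis of $\mathbb{Z}^2$ adapted to this lattice (adjusting a sign so as to land in $SL(2,\mathbb{Z})$) realizes the equivalence with $(e,c)$; for a non-cyclic abelian $G$ the pair need not reduce to one with a trivial entry. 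Everything else is bookkeeping with Theorem~\ref{theoremModularInvariance} and the hypothesis.
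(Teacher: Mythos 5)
Your proof is correct, but it takes a genuinely different route from the paper. The paper argues globally, over all $g\neq e$ at once: it computes $\sum_{g\neq e}|\mathscr{M}_V(g)|=l|G|-n$ from the last assertion of Theorem \ref{theoremModularInvariance} together with orbit--stabilizer, and separately identifies this with the number of $G$-orbits of twisted modules by invoking Lemma \ref{lemmaUEqualsWA} (so $|\mathscr{S}_V(G)|=|\mathscr{M}_{V^G,\mathrm{I}}|$, i.e.\ the $S$-matrix/unitary-space machinery of Sections 3--4) and then counting $|\mathscr{M}_{V^G,\mathrm{I}}|=p|G|+q/|G|$ via the orbifold decomposition of Theorem \ref{theoremGeneralizedQuantumGaloisTheory} and the fact that projective representations of a cyclic group are ordinary; equality of the two counts forces every orbit to be a singleton. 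You instead fix $g$ and run Burnside's lemma on $\mathscr{M}_V(g)$, showing each $h\in G$ has exactly $m=|\{M\in\mathscr{M}_V: G_M=G\}|$ fixed points by combining the full $SL(2,\mathbb{Z})$-invariance $|\mathscr{M}(g,h)|=|\mathscr{M}((g,h)\gamma)|$ in Theorem \ref{theoremModularInvariance}(2) with the lattice normal form: since $G$ is cyclic, $\langle g,h\rangle$ is cyclic, so the kernel of $(x,y)\mapsto g^xh^y$ has invariant factors $(1,|\langle g,h\rangle|)$ and $(g,h)$ is $\Gamma$-equivalent to $(e,c)$ with $c\neq e$ a generator --- this step checks out, including the sign adjustment to stay in $SL(2,\mathbb{Z})$. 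Your version buys independence from Sections 3--4 (no extended $S$-matrix entries, no Lemma \ref{lemmaUEqualsWA}, no count of $|\mathscr{M}_{V^G,\mathrm{I}}|$, no Schur-multiplier fact) and yields the sharper intermediate statement $|\mathscr{M}(g,h)|=m$ for every commuting pair with $g\neq e$; the price is that you use the orbit-invariance of $|\mathscr{M}(g,h)|$ for genuinely twisted pairs, whereas the paper only needs the untwisted-to-twisted special case and compensates with its unitary-space lemma. Note also that cyclicity enters your argument only through the normal form, while in the paper it enters through $|\mathrm{Irr}(G)|=|G|$ and the triviality of projective representations of cyclic groups.
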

\begin{proof}
    Let \(\mathscr{M}_V=\{M_1, M_2, M_3, \ldots, M_n\}\).
    Let \(\mathscr{S}_V(\{e\})=\mathscr{M}_{V}/G=\{S_1, S_2, \ldots, S_l\}\).
    Then we have
    \begin{align*}
        \sum_{g\neq e}|\mathscr{M}_V(g)|&\geq |(\bigcup_{g\neq e}\mathscr{M}_V(g))/G|\\
        &=|\mathscr{S}_V(G)|-l .\\
    \end{align*}
By lemma \ref{lemmaUEqualsWA}, we have
\begin{equation}
\sum_{g\neq e}|\mathscr{M}_V(g)|\geq |\dim(W^A)|-l=|\mathscr{M}_{V^G,I}|-l.
\end{equation}
By theorem \ref{theoremModularInvariance}, we have
$\mathscr{M}_V(g)=|\{M|g\in G_M\,M\in \mathscr{M}_V\}|$.
It follows that
\begin{align*}
\sum_{g\in G}|\mathscr{M}_V(g)|&=\sum_{g\in G}|\{M|g\in G_M,M\in \mathscr{M}_V\}|\\
&=|\{(g,M)|g\in G,g\in G_M,M\in \mathscr{M}_V\}|\\
&=\sum_{M\in \mathscr{M}_V}|G_M| .
\end{align*}
Note that $\mathscr{O}(M)=|G/G_M|$, we have
\[
\sum_{g\neq e}|\mathscr{M}_V(g)|=\sum_{g\in G}|\mathscr{M}_V(g)|-n=|G||\mathscr{M}_V/G|-n=l|G|-n.
\]
According to a well known fact that any projective representation of a cyclic group is ordinary, we see that each $M\in\mathscr{M}_V$ is
a $G_M$ module.
Let  $p=|\{M|G_M=G,M\in \mathscr{M}_V\}|$,$q=|\{M|G_M={e},M\in \mathscr{M}_V\}|$.
By theorem \ref{theoremGeneralizedQuantumGaloisTheory}, we have
$M=\oplus_{\lambda\in {\rm irr}(G_M)}W_\lambda\otimes M_{\lambda}$, and
$|\mathscr{M}_{V^G,I}|=p|{\rm irr}(G)|+q/|G|=p|G|+q/|G|$.
Then we have
\begin{equation}
|\mathscr{M}_{V^G,I}|-l=p|G|+q/|G|-p-q/|G|=l|G|-p-q=l|G|-n.
\end{equation}
It follows that
\[
\sum_{g\neq e}|\mathscr{M}_V(g)|= |(\bigcup_{g\neq e}\mathscr{M}_V(g))/G|.
\]
Then the proposition follows.
\end{proof}
As an application, we give a corollary.
\begin{corollary}
If $V$ has only two irreducible modules and $G$ is a cyclic group, then for each $M\in \mathscr{M}_V(G)$ we have $G_M=G$.
\end{corollary}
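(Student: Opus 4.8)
The plan is to deduce this corollary directly from Proposition \ref{propo:GmEqualG}, so the only work is to verify that its hypotheses are met. Suppose $V$ has exactly two irreducible modules, so $\mathscr{M}_V=\{M_1,M_2\}$ with (say) $M_1=V$, and let $G$ be cyclic. First I would observe that $G_{M_1}=G_V=G$ automatically, since every $g\in G$ is an automorphism of $V$ and hence fixes the isomorphism class of the adjoint module $V$. So the only module whose stabilizer is in question is $M_2$.

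Next I would argue that $G_{M_2}$ is either all of $G$ or trivial, using the action of $G$ on $\mathscr{M}_V$. The group $G$ permutes the set $\{M_1,M_2\}$ via $M\mapsto M\circ g$; since it fixes $M_1$, it must also fix $M_2$ as a \emph{set element} — i.e. $M_2\circ g\cong M_2$ for all $g\in G$ — because a permutation of a two-element set that fixes one point fixes the other. Wait: this would give $G_{M_2}=G$ immediately, so in fact the hypothesis ``$G_{M_k}=G$ or $\{e\}$'' of the proposition is trivially satisfied (indeed, we directly get $G_{M_2}=G$). The cleaner route, then, is: by Theorem \ref{theoremModularInvariance}(2), the number of irreducible $g$-twisted $V$-modules equals the number of $g$-stable irreducible $V$-modules, and since both $M_1,M_2$ are $g$-stable for every $g$ (as just noted), each $\mathscr{M}_V(g)$ has exactly two elements; and since $G$ is cyclic, every projective representation of every $G_M$ is ordinary, so the orbifold decomposition applies.

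The main step is therefore simply to invoke Proposition \ref{propo:GmEqualG} with $n=2$: its hypothesis is $G_{M_k}\in\{G,\{e\}\}$ for $k=1,2$, which holds since $G_{M_1}=G_{M_2}=G$ (a fortiori). The conclusion is that $G_M=G$ for every $M\in\mathscr{M}_V(g)$ with $g\neq e$; combined with the trivial case $g=e$ (where $\mathscr{M}_V(1)=\mathscr{M}_V$ and we have already seen both stabilizers are $G$), this gives $G_M=G$ for all $M\in\mathscr{M}_V(G)$, as claimed.

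I do not anticipate a genuine obstacle here; the only point requiring a little care is confirming that a cyclic $G$ acting on a two-element set $\{V,M_2\}$ necessarily fixes both elements — which follows because it fixes $V$ — so that $M_2$ is $g$-stable for all $g$ and the hypotheses of Proposition \ref{propo:GmEqualG} are met (in fact, trivially, with no module having trivial stabilizer). If one wished to avoid even quoting the proposition, one could instead note directly that $M_2\circ g\cong M_2$ forces $g\in G_{M_2}$ for all $g$, hence $G_{M_2}=G$, and since $G_V=G$ as well, $G_M=G$ for every $M\in\mathscr{M}_V$; then Proposition \ref{propo:GmEqualG} (or its proof) handles the twisted sectors $g\neq e$.
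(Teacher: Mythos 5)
Your argument is correct and is essentially the paper's own proof: both note that $M_2\circ g$ cannot be $V$, hence $M_2\circ g\cong M_2$, so $G_{M_1}=G_{M_2}=G$, and then invoke Proposition \ref{propo:GmEqualG} for the twisted sectors. The extra remarks about Theorem \ref{theoremModularInvariance}(2) are harmless but not needed.
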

\begin{proof}
Let $\mathscr{M}_V=\{M_0=V,M_1\}$.
Note that $M_1\circ g\neq V$. It follows that $M_1\circ g= M_1$.
Then we have $G_{M_0}=G_{M_1}=G$. 
By proposition \ref{propo:GmEqualG}, we have $G_{M}=G$ for each $M\in\mathscr{M}_V(G)$.
\end{proof}
\newpage
\section{Fusion products of irreducible type-one \(V^G\)-modules}
\noindent In this section, let \(N\) be in \(\mathscr{M}_{V^G,\mathrm{I}}\).
Denote the quantum Galois decomposition of \(V\) by \(V=\bigoplus_{\chi\in \mathrm{Irr}(G)}W_{\chi}\otimes V_{\chi}\), where
\(V_{\chi}\) is an irreducible \(V^G\)-module, \(W_{\chi}\) is an irreducible \(\mathbb{C}G\)-module, and \(\mathrm{qdim}V_{\chi}=\mathrm{dim}W_{\chi}\). Write \(\mathscr{M}_{V^G}\) as \(\{M^0,M^1,M^2,\cdots, M^d\}\), where \(M^0\cong V^G\).
\begin{theorem}
    \label{theoremMainConstant}
    Let \(M\in \mathcal{S}_{V}(G)\), and \(E\in\mathscr{M}_{V^G,\mathrm{I}}\).
    By orbifold theory, denote \(M=\bigoplus_{\lambda\in \Lambda_{G_M,\alpha_M}}W_{\lambda}\otimes M_{\lambda}\), where \(M_{\lambda}\in \mathscr{M}_{V^G}\).
    Then,
    \begin{equation}
    \label{equationMainQuotientConstant}
        \frac{S_{M_{\lambda_1},E}}{\mathrm{qdim}_{V^G}M_{\lambda_1}}= \frac{S_{M_{\lambda_2},E}}{\mathrm{qdim}_{V^G}M_{\lambda_2}},
    \end{equation}
    where \(\lambda_1,\lambda_2 \in \Lambda_{G_M,\alpha_M}\).
\end{theorem}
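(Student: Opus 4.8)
The plan is to first strip the statement of its quantum‑dimension factors and then extract it from the modular transformation of the character of $M_\lambda$. By Equation \eqref{equationOrbifoldProportional}, $\mathrm{qdim}_{V^G}M_\lambda=|G:G_M|(\dim W_\lambda)(\mathrm{qdim}_V M)$, so for the fixed module $M$ the quantity $\mathrm{qdim}_{V^G}M_\lambda/\dim W_\lambda$ is a constant independent of $\lambda$. Hence \eqref{equationMainQuotientConstant} is equivalent to the sharper claim that $S_{M_\lambda,E}$ is proportional to $\dim W_\lambda$, i.e. $S_{M_{\lambda_1},E}/\dim W_{\lambda_1}=S_{M_{\lambda_2},E}/\dim W_{\lambda_2}$, and I would prove this form.

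Write $g$ for the element of $G$ with $M\in\mathscr{M}(g)$. Choose the operators $\varphi(h)$, $h\in G_M$, compatibly with the decomposition $M=\bigoplus_\lambda W_\lambda\otimes M_\lambda$, so that $\varphi(h)$ acts as $\rho_\lambda(h)\otimes\mathrm{id}$ on the $\lambda$-isotypic component, where $\rho_\lambda$ is the irreducible projective representation of $\mathbb{C}^{\alpha_M}[G_M]$ attached to $\lambda$. Then a direct trace computation gives
\begin{equation*}
 Z_M(v,(g,h),\tau)=\sum_{\mu\in\Lambda_{G_M,\alpha_M}}\chi_\mu(h)\,Z_{M_\mu}(v,\tau),\qquad \chi_\mu(h)=\mathrm{tr}_{W_\mu}\rho_\mu(h),
\end{equation*}
and the Schur orthogonality relations for projective characters with a fixed cocycle (valid since $\mathbb{C}^{\alpha_M}[G_M]$ is semisimple and $\sum_\mu(\dim W_\mu)^2=|G_M|$) yield, for every $\tau$,
\begin{equation*}
 Z_{M_\lambda}(v,\tau)=\frac{1}{|G_M|}\sum_{h\in G_M}\overline{\chi_\lambda(h)}\,Z_M(v,(g,h),\tau);
\end{equation*}
in particular this identity holds with $\tau$ replaced by $-1/\tau$.

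The decisive point is that only the term $h=1$ can contribute to the coefficient of a type-one character $Z_E(v,\tau)$. Indeed, for $h\neq1$ the element $h\in G_M$ commutes with $g$, so $(g,h)\in P(G)$, and Theorem \ref{theoremModularInvariance} applied with $\gamma=S=\begin{pmatrix}0&-1\\1&0\end{pmatrix}$ (so that $(g,h)\gamma=(h,g^{-1})$) shows that $\tau^{-\mathrm{Wt}[v]}Z_M(v,(g,h),-1/\tau)$ is a linear combination of the twisted trace functions $Z_N(v,(h,g^{-1}),\tau)$ with $N$ an $h$-twisted, $g^{-1}$-stable irreducible $V$-module. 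Decomposing each such $N$ over $V^{G_N}$ exactly as above, $Z_N(v,(h,g^{-1}),\tau)$ is a combination of characters $Z_{N_\mu}(v,\tau)$ with $N_\mu\in\mathscr{M}_{V^G}$; since $h\neq1$ every such $N$ is a nontrivially twisted module, so by Theorem \ref{theoremGeneralizedQuantumGaloisTheory} and Remark \ref{remarkType} each $N_\mu$ lies in $\mathscr{M}_{V^G,\mathrm{II}}$. Hence the $h\neq1$ terms contribute nothing to the coefficient of $Z_E(v,\tau)$. Using $\chi_\lambda(1)=\dim W_\lambda$ and $Z_M(v,(g,1),-1/\tau)=Z_M(v,-1/\tau)$, and comparing coefficients of $Z_E(v,\tau)$ (legitimate because distinct $V^G$-characters are linearly independent), one gets
\begin{equation*}
 S_{M_\lambda,E}=\frac{\dim W_\lambda}{|G_M|}\,c_{M,E},
\end{equation*}
where $c_{M,E}$ is the coefficient of $Z_E(v,\tau)$ in $\tau^{-\mathrm{Wt}[v]}Z_M(v,-1/\tau)$ and depends only on $M$ and $E$, not on $\lambda$. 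Combining with Equation \eqref{equationOrbifoldProportional} then gives $S_{M_\lambda,E}/\mathrm{qdim}_{V^G}M_\lambda=c_{M,E}/(|G|\,\mathrm{qdim}_V M)$, independent of $\lambda$, which is \eqref{equationMainQuotientConstant}.

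The step I expect to be the main obstacle is the bookkeeping in the middle paragraph: fixing one coherent choice of the operators $\varphi(h)$ on $M$ (and likewise on each $N$) so that the projective-character computation of $Z_M(v,(g,h),\tau)$ is exactly valid, and checking that for $h\neq1$ every irreducible module $N$ occurring on the right-hand side of Theorem \ref{theoremModularInvariance} is genuinely nontrivially twisted, hence contributes only type-two $V^G$-characters. Once this ``orthogonality of the $h\neq1$ sectors to all type-one characters'' is secured, the proportionality of $S_{M_\lambda,E}$ to $\dim W_\lambda$, and therefore the theorem, follows immediately.
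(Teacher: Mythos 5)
Your argument is correct and follows essentially the same route as the paper's proof: both express \(Z_{M_\lambda}(v,-1/\tau)\) through the (projective) character inversion \(Z_{M_\lambda}=\frac{1}{|G_M|}\sum_{h\in G_M}\overline{\lambda(h)}Z_M(v,(g,h),\cdot)\), apply Theorem \ref{theoremModularInvariance} with \(\gamma=S\) so that the \(h\)-sector lands in \(\mathscr{M}(h,g^{-1})\), keep only the \(h=1\) contribution when comparing coefficients of a type-one character, and obtain \(S_{M_\lambda,E}\propto\dim W_\lambda\), which combined with Equation \eqref{equationOrbifoldProportional} gives \eqref{equationMainQuotientConstant}. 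The only difference is cosmetic: you justify explicitly (via the coherent choice of \(\varphi(h)\) and the observation that nontrivially twisted sectors decompose into type-two \(V^G\)-modules) the step the paper leaves implicit when it restricts to \(h=1\).
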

\begin{proof}
    Standard modular invariance formula shows that
    \begin{equation}
        \label{equationMainTheorem1}
        Z_{M_{\lambda}}(v,-\frac{1}{\tau})=\tau^{\mathrm{wt}[v]}\sum_{W\in \mathscr{M}_{V^G}}S_{M_{\lambda},W}^{V^G}Z_W(v,\tau).
    \end{equation}
    Reference \cite{2015arXiv150703306DOrbifold} implies the following relation
    \begin{align}
        \label{equationMainTheorem2}
        Z_{M_\lambda}(v,-\frac{1}{\tau})=&\frac{1}{|G_M|}\tau^{\mathrm{wt}[v]}\sum_{h\in G_M}\sum_{N\in \mathscr{M}(h,g^{-1})}S_{M,N}^VZ_N(v,(h,g^{-1}),\tau)\overline{\lambda(h)}.
    \end{align}
    Orbifold theory implies that
    \begin{equation}
        \label{equationMainTheorem3}
        Z_{N}(v,(h,g^{-1}),\tau)=\sum_{\mu\in \Lambda_{G_N,\alpha_N}}\mu (h)Z_{N_{\mu}}(v,\tau).
    \end{equation}
    Plug Equation \ref{equationMainTheorem3} in Equation \ref{equationMainTheorem2}.
    \begin{align}
        \label{equationMainTheorem4}
        Z_{M_\lambda}(v,-\frac{1}{\tau})=&\frac{1}{|G_M|}\tau^{\mathrm{wt}[v]}\sum_{h\in G_M}\sum_{N\in \mathscr{M}(h,g^{-1})}S_{M,N}^V\sum_{\mu\in \Lambda_{G_N,\alpha_N}}\mu (h)Z_{N_{\mu}}(v,\tau)\overline{\lambda(h)}.
    \end{align}
    Then, for \(h=1\), observe the coefficients of \(Z_{N_{\mu}}(v,\tau)\) in Equations \ref{equationMainTheorem1}, \ref{equationMainTheorem4}. A comparison of these two coefficients yields
    \begin{equation*}
        S_{M_{\lambda},N_{\mu}}^{V^G}=\frac{1}{|G_M|}S_{M,N}^V\mu(1)\overline{\lambda(1)}.
    \end{equation*}
    Thus, by Equation \ref{equationOrbifoldProportional}, we have
    \begin{align*}
        \frac{S_{M_{\lambda_1},N_{\mu}}^{V^G}}{S_{M_{\lambda_2},N_{\mu}}^{V^G}}=& \frac{\overline{\lambda_1(1)}}{\overline{\lambda_2(1)}} \\
        =&\frac{\mathrm{dim}W_{\lambda_1}}{\mathrm{dim}W_{\lambda_2}} \\
        =&\frac{\mathrm{qdim}_{V^G}M_{\lambda_1}}{\mathrm{qdim}_{V^G}M_{\lambda_2}},
    \end{align*}
    where \(\lambda_1, \lambda_2 \in \Lambda_{G_M,\alpha_M}\). The desired result follows.
\end{proof}
\begin{corollary}
    Let \(U\), \(N\) be irreducible \(V\)-modules. Assume that \(U\) is not irreducible as a \(V^{\langle g \rangle}\)-module. Denote \(U=\bigoplus_{i=0}^{T-1}U^{i}\). Then, \(S_{U^{i},N}=S_{U^{j},N}\).
\end{corollary}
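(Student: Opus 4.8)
The plan is to deduce this from Theorem \ref{theoremMainConstant} applied to the untwisted module $M=U$: the key point is that when the group is cyclic all the $V^{\langle g\rangle}$-constituents of $U$ have the same quantum dimension, so the proportionality in Equation \ref{equationMainQuotientConstant} collapses to an equality.

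First I would describe $U$ as a $V^{\langle g\rangle}$-module. Since $\langle g\rangle$ is cyclic, every projective representation of every subgroup is ordinary --- the fact already used in the proof of Proposition \ref{propo:GmEqualG} --- so the relevant cocycle $\alpha_U$ is trivial and every irreducible $\mathbb{C}[G_U]$-module is one-dimensional. Hence the orbifold decomposition of Theorem \ref{theoremGeneralizedQuantumGaloisTheory} takes the form $U=\bigoplus_{\lambda}W_\lambda\otimes U_\lambda$ with $\dim W_\lambda=1$ for every $\lambda$, the $U_\lambda$ being pairwise non-isomorphic irreducible $V^{\langle g\rangle}$-modules, each of type one because $U$ is an untwisted $V$-module. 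As $U$ is assumed reducible over $V^{\langle g\rangle}$ we have $G_U\neq\{e\}$, so this is precisely the decomposition $U=\bigoplus_i U^i$ in the statement (with $|G_U|$ summands, written there as $T$), and each $U^i\in\mathscr{M}_{V^{\langle g\rangle},\mathrm{I}}$. Next, Equation \ref{equationOrbifoldProportional} gives $\mathrm{qdim}_{V^{\langle g\rangle}}U^i=|\langle g\rangle:G_U|\,(\dim W_\lambda)\,\mathrm{qdim}_VU=|\langle g\rangle:G_U|\,\mathrm{qdim}_VU$, which is independent of $i$; hence all the $U^i$ share one quantum dimension.

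Finally I would invoke Theorem \ref{theoremMainConstant}. Since $U\in\mathscr{M}_V\subseteq\mathcal{S}_V(\langle g\rangle)$, that theorem with $M=U$ yields $S_{U^i,E}/\mathrm{qdim}_{V^{\langle g\rangle}}U^i=S_{U^j,E}/\mathrm{qdim}_{V^{\langle g\rangle}}U^j$ for every $E\in\mathscr{M}_{V^{\langle g\rangle},\mathrm{I}}$, and combining this with the equality of quantum dimensions just observed gives $S_{U^i,E}=S_{U^j,E}$ for every type-one irreducible $V^{\langle g\rangle}$-module $E$. To pass from such $E$ to the irreducible $V$-module $N$, I would decompose $N=\bigoplus_k N_k$ as a $V^{\langle g\rangle}$-module; each $N_k$ is of type one since $N$ is untwisted, so the extended definition of the entries (Definition \ref{definitionExtension}) gives $S_{U^i,N}=\sum_k S_{U^i,N_k}=\sum_k S_{U^j,N_k}=S_{U^j,N}$, which is the assertion. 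The only step carrying real content is the equality of the quantum dimensions of the $U^i$, and that rests entirely on the cyclicity of $\langle g\rangle$ forcing $\dim W_\lambda=1$; the rest is bookkeeping with Definition \ref{definitionExtension} and Theorem \ref{theoremMainConstant}, so I do not expect any serious obstacle.
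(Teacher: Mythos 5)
Your proposal is correct and follows essentially the same route as the paper: the paper's proof is exactly the observation that cyclicity of $\langle g\rangle$ forces all the constituents $U^i$ to have equal quantum dimension, after which Theorem \ref{theoremMainConstant} gives the equality of the $S$-entries. Your additional details (trivial cocycle, one-dimensional $W_\lambda$ via Equation \ref{equationOrbifoldProportional}, and decomposing $N$ over $V^{\langle g\rangle}$ using Definition \ref{definitionExtension}) merely fill in steps the paper leaves implicit.
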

\begin{proof}
    Notice that \(\mathrm{qdim}_{V^G}U^i=\mathrm{qdim}_{V^G}U^j\), because \(\langle g \rangle\) is a cyclic group. So, Theorem \ref{theoremMainConstant} implies the desired result.
\end{proof}
\begin{theorem}
Let \(M\in \mathcal{S}_V(G)\), and \(W\in \mathscr{M}_V\).
By orbifold theory, denote \(M=\bigoplus_{\lambda\in \Lambda_{G_M,\alpha_M}}Q_{\lambda}\otimes M_{\lambda}\), and \(W=\bigoplus_{\mu\in \Lambda_{G_W,\alpha_W}}P_{\mu}\otimes W_{\mu}\). Then,
\begin{equation}
    \label{equationComputationMi}
    S_{M_{\lambda},W_\mu}^{V^G}=\frac{\mathrm{dim}\ Q_{\lambda}}{|G_M|}(\mathrm{dim}P_{\mu})\sum_{N\in \mathscr{O}(W)}S_{M,N}^V.
\end{equation}
\end{theorem}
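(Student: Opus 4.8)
The plan is to mirror, as closely as possible, the proof of Theorem~\ref{theoremMainConstant}: one computes the coefficient of $Z_{W_\mu}(v,\tau)$ in the modular transform $Z_{M_\lambda}(v,-\frac{1}{\tau})$ in two different ways and compares. The one genuinely new point is that in Theorem~\ref{theoremMainConstant} only the ratio of two $S$-entries was extracted, whereas here the constant has to be identified exactly; this forces one to keep track of the whole $G$-orbit $\mathscr{O}(W)$ of $W$ and of the multiplicity $\dim P_\mu$ of $W_\mu$ inside $W$, in the spirit of Lemma~\ref{lemmaOrbits} and Corollary~\ref{corollaryCircg}.

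First I would record the standard modular invariance formula
\[
Z_{M_\lambda}(v,-\tfrac{1}{\tau})=\tau^{\mathrm{wt}[v]}\sum_{U\in\mathscr{M}_{V^G}}S_{M_\lambda,U}^{V^G}Z_U(v,\tau),
\]
so that the coefficient of $Z_{W_\mu}(v,\tau)$ is just $S_{M_\lambda,W_\mu}^{V^G}$. On the other side I would use the orbifold relation underlying Equations~\ref{equationMainTheorem2}--\ref{equationMainTheorem4}, namely
\[
Z_{M_\lambda}(v,-\tfrac{1}{\tau})=\frac{\tau^{\mathrm{wt}[v]}}{|G_M|}\sum_{h\in G_M}\overline{\lambda(h)}\sum_{N\in\mathscr{M}(h,g^{-1})}S_{M,N}^V\,Z_N(v,(h,g^{-1}),\tau),
\]
and then expand each $Z_N(v,(h,g^{-1}),\tau)$ into the plain characters of its irreducible $V^G$-constituents via Theorem~\ref{theoremGeneralizedQuantumGaloisTheory}. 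Exactly as in the proof of Theorem~\ref{theoremMainConstant}, only the summand $h=1$ can contribute to the coefficient of $Z_{W_\mu}(v,\tau)$: for $h\neq 1$ the module $N$ is a genuine $h$-twisted module, so by Remark~\ref{remarkType} together with the disjointness clause of Theorem~\ref{theoremGeneralizedQuantumGaloisTheory} every $V^G$-constituent of $N$ is of type two, whereas $W_\mu$ is of type one.

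It then remains to read off the coefficient of $Z_{W_\mu}(v,\tau)$ from the $h=1$ term. There $N$ ranges over untwisted irreducible $V$-modules, and $W_\mu$ occurs among the $V^G$-constituents of $N$ precisely when $N\in\mathscr{O}(W)$, again by Theorem~\ref{theoremGeneralizedQuantumGaloisTheory}; moreover for such $N$ one has $N=W\circ g'$ for some $g'\in G$, and since $g'$ fixes $V^G$ pointwise $N\cong W$ as a $V^G$-module — this is the observation already used in the proof of Lemma~\ref{lemmaOrbits} — so $W_\mu$ occurs in each such $N$ with multiplicity $\dim P_\mu$. Reading off the per-module contribution exactly as in Theorem~\ref{theoremMainConstant} (where it was $\frac{1}{|G_M|}S_{M,N}^V\,\mu(1)\,\overline{\lambda(1)}$, with $\overline{\lambda(1)}=\dim Q_\lambda$ and $\mu(1)$ the multiplicity of the relevant constituent) and summing over all $N\in\mathscr{O}(W)$ gives $\frac{\tau^{\mathrm{wt}[v]}}{|G_M|}(\dim Q_\lambda)(\dim P_\mu)\sum_{N\in\mathscr{O}(W)}S_{M,N}^V$ as the coefficient of $Z_{W_\mu}(v,\tau)$, and comparison with the first expansion yields Equation~\ref{equationComputationMi}. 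As a sanity check, $W=V$ gives $\mathscr{O}(V)=\{V\}$ and $\dim P_\mu=1$, so the formula recovers the relation $S_{M_\lambda,V^G}^{V^G}=\frac{\dim Q_\lambda}{|G_M|}S_{M,V}^V$ of \cite{2015arXiv150703306DOrbifold}. Alternatively, one can bypass the direct bookkeeping: Theorem~\ref{theoremMainConstant} together with Equation~\ref{equationOrbifoldProportional} already gives $S_{M_\lambda,W_\mu}^{V^G}=(\dim Q_\lambda)\,c$ for a constant $c$ independent of $\lambda$, and since $\sum_\lambda(\dim Q_\lambda)^2=|G_M|$ and $\bigoplus_\lambda Q_\lambda\otimes M_\lambda=M$ as a $V^G$-module, the value $\sum_\lambda(\dim Q_\lambda)S_{M_\lambda,W_\mu}^{V^G}$ equals $S_{M,W_\mu}^{V^G}$ in the sense of Definition~\ref{definitionExtension}, which one evaluates by the twisted analogue of Lemma~\ref{lemmaOrbits}.

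I expect the main obstacle to lie in the $h=1$ bookkeeping of the third step: one must check that every $N$ in the orbit $\mathscr{O}(W)$ contributes the same $V^G$-module $W_\mu$ with the same multiplicity $\dim P_\mu$, and that the orbit is summed with the correct normalization $1/|G_W|$ as in Corollary~\ref{corollaryCircg}; a secondary technical point is to make sure the twisted $S$-matrix entries $S_{M,N}^V$ (the coefficients $\gamma_{M,N}(g,1)$ of Theorem~\ref{theoremModularInvariance}) are normalized consistently with those used in the proof of Theorem~\ref{theoremMainConstant}. Once these points are settled, the comparison of coefficients is routine.
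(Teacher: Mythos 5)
Your closing ``alternative'' is, in fact, exactly the paper's proof of this theorem: by Definition \ref{definitionExtension} one writes $S^{V^G}_{M,W_\mu}=\sum_{\lambda}(\dim Q_\lambda)\,S^{V^G}_{M_\lambda,W_\mu}$, evaluates the left-hand side by Lemma \ref{lemmaOrbits} (applied with the twisted module $M$ in the first slot) as $(\dim P_\mu)\sum_{N\in\mathscr{O}(W)}S^{V}_{M,N}$, and then uses the proportionality of Theorem \ref{theoremMainConstant} together with $\sum_{\lambda}(\dim Q_\lambda)^2=|G_M|$ to solve the resulting linear system for $S^{V^G}_{M_\lambda,W_\mu}$. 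Your primary route --- extracting the coefficient of $Z_{W_\mu}(v,\tau)$ in $Z_{M_\lambda}(v,-\frac{1}{\tau})$ directly --- is just an unrolled version of the same argument (it re-proves Lemma \ref{lemmaOrbits} and Theorem \ref{theoremMainConstant} inside the computation instead of quoting them), so in substance you have the paper's argument, and your identification of the $h\neq 1$ terms as contributing only type-two characters is correct and matches the paper.

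The one point you flag as ``the main obstacle'' is, however, a genuine gap in your primary route as written, and it is not settled by your sketch (nor, to be fair, by the paper) when $M$ is $g$-twisted with $g\neq 1$. In the $h=1$ term the functions that appear are $Z_N(v,(1,g^{-1}),\tau)$ with $N$ ranging only over $\mathscr{M}(1,g^{-1})$, i.e.\ over the $g^{-1}$-stable members of $\mathscr{O}(W)$, and their expansion into irreducible $V^G$-characters is
\begin{equation*}
Z_N(v,(1,g^{-1}),\tau)=\sum_{\mu}\tr_{P_\mu}\bigl(\varphi_N(g^{-1})\bigr)\,Z_{N_\mu}(v,\tau),
\end{equation*}
so the coefficients are (projective) character values at $g^{-1}$, depending on the normalization of $\varphi_N(g^{-1})$, and not the plain multiplicities $\dim P_\mu$; for instance in a $\mathbb{Z}_2$-orbifold with $N=V=V^+\oplus V^-$ these coefficients are $+1$ and $-1$ on the two constituents. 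Your step ``expand into the plain characters \dots\ $W_\mu$ occurs with multiplicity $\dim P_\mu$,'' together with summing over the full orbit $\mathscr{O}(W)$, silently replaces $Z_N(v,(1,g^{-1}),\tau)$ by $Z_N(v,\tau)$; this is precisely the simplification the paper itself makes (Equation \ref{equationMainTheorem3} with $\mu(h)$ evaluated at $h=1$, and the application of Lemma \ref{lemmaOrbits}, which is stated and proved only for $M\in\mathscr{M}_V$, to $M\in\mathcal{S}_V(G)$). So your proposal is exactly as strong as the paper's proof: for untwisted $M$ (where $g=1$, the insertion $\varphi_N(g^{-1})$ is trivial, and the full orbit genuinely appears) both arguments are complete, but in the twisted case both need an additional justification for trading $\tr_{P_\mu}\varphi_N(g^{-1})$ and the restricted sum over $\mathscr{O}(W)\cap\mathscr{M}(1,g^{-1})$ for $\dim P_\mu$ and the full orbit sum, and this is where the real content (and possible failure) of the stated formula lies.
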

\begin{proof}
Since \(M=\bigoplus_{\lambda\in \Lambda_{G_M,\alpha_M}}Q_{\lambda}\otimes M_{\lambda}\), we have
\begin{align*}
    S_{M,W_{\mu}}^{V^G}=\sum_{\lambda \in \Lambda_{G_M,\alpha_M}}(\mathrm{dim} Q_{\lambda})S_{M_{\lambda},W_\mu}^{V^G}.
\end{align*}
On the other hand, by \(W=\bigoplus_{\mu\in \Lambda_{G_W,\alpha_W}}P_{\mu}\otimes W_{\mu}\) and Lemma \ref{lemmaOrbits}, we have
\begin{equation*}
    S_{M,W_{\mu}}^{V^G}=(\mathrm{dim}P_{\mu})\sum_{N\in \mathscr{O}(W)}S_{M,N}^V.
\end{equation*}
It follows that
\begin{equation*}
    \sum_{\lambda \in \Lambda_{G_M,\alpha_M}}(\mathrm{dim} Q_{\lambda})S_{M_{\lambda},W_\mu}^{V^G}=(\mathrm{dim}P_{\mu})\sum_{N\in \mathscr{O}(W)}S_{M,N}^V.
\end{equation*}
Let \(\lambda\) run over \(\Lambda_{G_M,\alpha_M}\). The preceding equation becomes a system of linear equations.
By Theorem \ref{theoremMainConstant}, one could solve this system of linear equations, and obtain
\begin{align*}
    S_{M_{\lambda},W_{\mu}}^{V^G}=&\frac{\mathrm{dim} Q_{\lambda}}{\sum_{\lambda \in \Lambda_{G_M,\alpha_M}}(\mathrm{dim} Q_{\lambda})^2}(\mathrm{dim} P_{\mu})\sum_{N\in \mathscr{O}(W)}S_{M,N}^V\\
    =&\frac{\mathrm{dim} Q_{\lambda}}{|G_M|}(\mathrm{dim} P_{\mu})\sum_{N\in \mathscr{O}(W)}S_{M,N}^V.
\end{align*}
\end{proof}
\begin{remark}
    In Equation \ref{equationComputationMi}, substitute \(V\) for \(M\). Recall the fact \(G_V=G\). By quantum Galois theory, denote \(V=\bigoplus_{\chi\in \mathrm{Irr}(G)}(Q_{\chi}\otimes V_{\chi})\). Equation \ref{equationComputationMi} becomes
    \begin{equation}
        \label{equationComputationVi}
        S_{V_{\chi},W_{\mu}}^{V^G}=\frac{(\mathrm{dim} Q_{\chi})}{|G|}(\mathrm{dim}P_{\mu})\sum_{N\in \mathscr{O}(W)}S_{V,N}^V.
    \end{equation}
\end{remark}
\begin{corollary} Let \(W\in \mathscr{M}_{V^G}\).
    \label{corollary1}
    \begin{equation}
        \label{equationWu4}
        S_{V,W}^{V^G}=|G|S_{V^G,W}^{V^G}.
    \end{equation}
\end{corollary}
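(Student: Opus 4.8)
The plan is to evaluate both sides of the asserted identity in terms of the $S$-entries of $V$ together with a single orbit sum over $G$, and then to observe that the only discrepancy between them is the factor $1/|G|$ already visible in Equation \ref{equationComputationVi}. Since $W$ is an irreducible $V^{G}$-module occurring in the decomposition of an irreducible $V$-module (the type-one situation, which is exactly the setting of the preceding remark and of Equation \ref{equationComputationVi}), I would fix an irreducible $V$-module $\widetilde W\in\mathscr{M}_{V}$ with orbifold decomposition $\widetilde W=\bigoplus_{\mu}P_{\mu}\otimes W_{\mu}$ and a single index $\mu$ with $W=W_{\mu}$.

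For the left-hand side I would apply Lemma \ref{lemmaOrbits} with $M=V$, which is legitimate because $V$ is simple and hence an irreducible $V$-module; this gives
\[
S_{V,W}^{V^G}=(\dim P_{\mu})\sum_{N\in\mathscr{O}(\widetilde W)}S_{V,N}^{V}.
\]
For the right-hand side I would use the quantum Galois decomposition $V=\bigoplus_{\chi\in\mathrm{Irr}(G)}Q_{\chi}\otimes V_{\chi}$, in which $V^{G}$ is precisely the trivial-character summand $V_{\mathrm{triv}}$, with one-dimensional multiplicity space, so $\dim Q_{\mathrm{triv}}=1$. Specializing Equation \ref{equationComputationVi} to $\chi$ the trivial character, and to the same $\widetilde W$ and $\mu$, then gives
\[
S_{V^{G},W}^{V^G}=S_{V_{\mathrm{triv}},W_{\mu}}^{V^G}=\frac{\dim Q_{\mathrm{triv}}}{|G|}(\dim P_{\mu})\sum_{N\in\mathscr{O}(\widetilde W)}S_{V,N}^{V}=\frac{1}{|G|}(\dim P_{\mu})\sum_{N\in\mathscr{O}(\widetilde W)}S_{V,N}^{V}.
\]
Comparing the two displays yields $S_{V,W}^{V^G}=|G|\,S_{V^{G},W}^{V^G}$.

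There is no genuine obstacle here: the argument is bookkeeping built on results already established. The only points requiring attention are (i) identifying $V^{G}$ with the trivial-character component of the quantum Galois decomposition, so that the coefficient $\dim Q_{\chi}$ in Equation \ref{equationComputationVi} equals $1$, and (ii) invoking Lemma \ref{lemmaOrbits} and Equation \ref{equationComputationVi} for the \emph{same} pair $(\widetilde W,\mu)$, which is arranged simply by fixing $\widetilde W$ and $\mu$ at the outset. As a variant that bypasses Lemma \ref{lemmaOrbits}, one could instead combine Corollary \ref{ccorollary2}, which already exhibits an explicit factor $|G|$, with Corollary \ref{corollaryCircCirc} to write both $S_{V,W}^{V^G}$ and $S_{V^{G},W}^{V^G}$ in terms of $S_{V,\widetilde W}^{V}/|G_{\widetilde W}|$; the factor $|G|$ then arises for the same reason.
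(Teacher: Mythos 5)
Your main argument is correct and non-circular: Lemma \ref{lemmaOrbits} applied with $M=V$, and Equation \ref{equationComputationVi} specialized to the trivial character (so $V_\chi\cong V^G$ and $\dim Q_\chi=1$), both precede Corollary \ref{corollary1} in the paper and do not depend on it, and comparing the two orbit-sum expressions for the same pair $(\widetilde W,\mu)$ indeed produces the factor $|G|$. The paper's own proof is shorter and slightly different in packaging: it invokes Theorem \ref{theoremMainConstant} directly — the entries $S_{V_\chi,W}$ are proportional to $\qdim_{V^G}V_\chi$ — together with $\qdim_{V^G}V=|G|=|G|\qdim_{V^G}V^G$ (Theorem \ref{Qd1}), so it never writes either entry as an explicit sum over $\mathscr{O}(\widetilde W)$. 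Since Equation \ref{equationComputationVi} is itself derived from Theorem \ref{theoremMainConstant} and Lemma \ref{lemmaOrbits}, your route is the same mechanism once removed; what it buys is closed formulas for both sides, at the cost of fixing an auxiliary $\widetilde W\in\mathscr{M}_V$ above $W$. Two small caveats. First, like the paper's proof, yours tacitly assumes $W$ is of type one (this is the intended reading: for $W\in\mathscr{M}_{V^G,\mathrm{II}}$ one has $S_{V,W}^{V^G}=0$ by Lemma \ref{lemmaSII=0} while $S_{V^G,W}^{V^G}\neq 0$ by Theorem \ref{theoremVerlindeHuang}, so the identity only makes sense for type-one $W$). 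Second, your proposed variant via Corollary \ref{ccorollary2} and Corollary \ref{corollaryCircCirc} only yields the left-hand entry $S_{V,W}^{V^G}$ in terms of $S_{V,\widetilde W}^{V}/|G_{\widetilde W}|$; to express $S_{V^G,W}^{V^G}$ in the same terms you would need Equation \ref{equationJunFen}, which the paper derives \emph{from} this corollary, so that shortcut would be circular — stick with your main argument.
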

\begin{proof}
    Notice that \(\mathrm{qdim}(V)=|G|\mathrm{qdim}(V^G)\). Theorem \ref{theoremMainConstant} implies the  desired result.
\end{proof}
\begin{theorem}
    \label{theoremMainFusion}
    Let \(M,N\in \mathscr{M}_V\), and \(g\in G\). Then,
    \begin{equation}
        M\boxtimes_{V^G} N=\bigoplus_{g\in G} M\boxtimes_{V}(N\circ g).
    \end{equation}
\end{theorem}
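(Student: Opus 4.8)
The plan is to show that the two $V^G$-modules $M\boxtimes_{V^G}N$ and $\bigoplus_{g\in G}M\boxtimes_V(N\circ g)$ have the same $S^{V^G}$-matrix entry against every irreducible $V^G$-module, and then to invoke Lemma \ref{lemmaSameModule} applied to $V^G$. First observe that both sides genuinely are $V^G$-modules: for $g\in G$ the module $N\circ g$ is again an ordinary irreducible $V$-module, so $M\boxtimes_V(N\circ g)$ is a $V$-module and hence a $V^G$-module by restriction, while $M\boxtimes_{V^G}N$ is defined because $V^G$ is rational and $M,N$ restrict (by orbifold theory) to finite direct sums of type-one irreducible $V^G$-modules. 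Thus it suffices to prove
\[
S^{V^G}_{\{M\boxtimes_{V^G}N,\,F\}}=S^{V^G}_{\{\bigoplus_{g\in G}M\boxtimes_V(N\circ g),\,F\}}
\]
for every $F\in\mathscr{M}_{V^G}$, and I would split this according to the type of $F$.

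If $F\in\mathscr{M}_{V^G,\mathrm{II}}$, then Remark \ref{remarkTwisted} already records that both of these entries vanish, so there is nothing to check. If $F\in\mathscr{M}_{V^G,\mathrm{I}}$, pick $W\in\mathscr{M}_V$ with orbifold decomposition $W=\bigoplus_i W_i\otimes P_i$ such that $F\cong W_i$ for some $i$; the case where $W$ is already irreducible over $V^G$ is included, with the trivial decomposition and $\mathrm{dim}\,P_i=1$, since the derivations of Equations \ref{equationWu2} and \ref{equationWu3} use only Lemmas \ref{lemmaBasicS-Fusion}, \ref{lemmaCircCirc} and Corollaries \ref{corollaryCircg}, \ref{corollaryCircCirc}, none of which requires $W$ to be reducible over $V^G$. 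Writing $A=(\sum_{g\in G}S^V_{M,W\circ g})(\sum_{g\in G}S^V_{N,W\circ g})$ for the common factor, Equations \ref{equationWu2} and \ref{equationWu3} (the latter with $S^V_{V,W}$ in the denominator, as its proof shows) give
\[
S^{V^G}_{\{M\boxtimes_{V^G}N,\,W_i\}}=\frac{(\mathrm{dim}\,P_i)^2}{|G_W|^2\,S^{V^G}_{V^G,W_i}}\,A,
\qquad
S^{V^G}_{\{\bigoplus_{g\in G}M\boxtimes_V(N\circ g),\,W_i\}}=\frac{\mathrm{dim}\,P_i}{|G_W|\,S^V_{V,W}}\,A,
\]
so equality of the two entries reduces to the single scalar identity $(\mathrm{dim}\,P_i)\,S^V_{V,W}=|G_W|\,S^{V^G}_{V^G,W_i}$.

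It remains to prove this scalar identity, and the plan is to read it off by comparing two expressions for $S^{V^G}_{V,W_i}$: Corollary \ref{ccorollary2} gives $S^{V^G}_{V,W_i}=(\mathrm{dim}\,P_i)\,|G|\,S^V_{V,W}/|G_W|$, while Corollary \ref{corollary1} gives $S^{V^G}_{V,W_i}=|G|\,S^{V^G}_{V^G,W_i}$; equating these and cancelling $|G|$ yields exactly $(\mathrm{dim}\,P_i)\,S^V_{V,W}=|G_W|\,S^{V^G}_{V^G,W_i}$. Substituting this back into the two displayed formulas makes them literally equal, so the required coincidence of $S^{V^G}$-entries holds for all $F\in\mathscr{M}_{V^G}$, and Lemma \ref{lemmaSameModule} (for $V^G$) delivers $M\boxtimes_{V^G}N\cong\bigoplus_{g\in G}M\boxtimes_V(N\circ g)$. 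The only real work beyond routine substitution is the bookkeeping that assembles Corollaries \ref{ccorollary2} and \ref{corollary1} into the scalar identity, together with the verification that the structural formulas \ref{equationWu2} and \ref{equationWu3} remain valid when $W$ is irreducible over $V^G$; I expect that last point to be the place where care is actually needed.
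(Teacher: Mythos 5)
Your proposal matches the paper's own argument essentially step for step: the scalar identity $(\dim P_i)\,S^V_{V,W}=|G_W|\,S^{V^G}_{V^G,W_i}$ obtained by combining Corollaries \ref{corollary1} and \ref{ccorollary2} is exactly the paper's Equation \ref{equationJunFen}, used in the same way to equate the right sides of Equations \ref{equationWu2} and \ref{equationWu3}, with type-two modules handled by Remark \ref{remarkTwisted} and the conclusion drawn from Lemma \ref{lemmaSameModule}. Your extra care about the typo in the denominator of \ref{equationWu3} and about the case where $W$ is already irreducible over $V^G$ is a welcome refinement of the same proof, not a different route.
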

\begin{proof}
    By Corollary \ref{corollary1}, \ref{ccorollary2}, we have
    \begin{align*}
        |G|S_{V^G,W_i}^{V^G}=&S_{V,W_i}^{V^G}\\
        =&(\mathrm{dim}P_i)|G|\frac{S_{M,W}^V}{|G_W|}.
    \end{align*}
    That is,
    \begin{align}
        \label{equationJunFen}
        S_{V^G,W_i}^{V^G}=\frac{(\mathrm{dim}P_i)}{|G_W|}S_{V,W}^V.
    \end{align}
    This implies the right side of Equation \ref{equationWu2} equals the right side of Equation \ref{equationWu3}.
    It follows that \(S_{\{(M\boxtimes_{V^G}N),I\}}^{V^G}=S_{\{(\bigoplus_{g\in G} M\boxtimes_{V}(N\circ g)),I\}}^{V^G}=0\) for \(I\in \mathscr{M}_{V^G}\) of type one (untwisted).
    Remark \ref{remarkTwisted} shows that \(S_{\{(M\boxtimes_{V^G}N),I\}}^{V^G}=S_{\{(\bigoplus_{g\in G} M\boxtimes_{V}(N\circ g)),I\}}^{V^G}=0\) for \(I\in \mathscr{M}_{V^G}\) of type two (twisted).
    Therefore, \(S_{\{(M\boxtimes_{V^G}N),I\}}^{V^G}=S_{\{(\bigoplus_{g\in G} M\boxtimes_{V}(N\circ g)),I\}}^{V^G}=0\) for every \(I\in \mathscr{M}_{V^G}\). By Lemma \ref{lemmaSameModule}, \(M\boxtimes_{V^G} N=\bigoplus_{g\in G} M\boxtimes_{V}(N\circ g)\).
\end{proof}
\begin{lemma}
    \begin{equation}
        \label{equationWu1}
        V\boxtimes_{V^G}V=|G|V.
    \end{equation}
\end{lemma}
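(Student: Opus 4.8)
The plan is to obtain this as an immediate specialization of Theorem \ref{theoremMainFusion} with $M=N=V$. That theorem gives $V\boxtimes_{V^G}V=\bigoplus_{g\in G}V\boxtimes_V(V\circ g)$, so the only additional facts needed are that $V\circ g\cong V$ as a $V$-module for every $g\in G$ (since $g$ is an automorphism of $V$, exactly as used in the proof of Corollary \ref{corollaryCircCirc}) and that the vacuum module is the unit for the fusion product on the rational vertex operator algebra $V$, i.e. $V\boxtimes_V V\cong V$. Substituting these, every summand collapses to $V$, and summing over the $|G|$ elements of $G$ yields $\bigoplus_{g\in G}V=|G|V$.

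An alternative, equally short route is to invoke the earlier lemma asserting $V\boxtimes_{V^G}M=|G|M$ for all $M\in\mathcal{S}_V(G)$: since $V$ is an irreducible untwisted $V$-module we have $V\in\mathscr{M}_V(1)\subseteq\mathcal{S}_V(G)$, and the claim is precisely the case $M=V$. If one prefers a self-contained argument, one can mimic the proof of that lemma directly: using Lemma \ref{lemmaBasicS-Fusion} together with Corollary \ref{corollary1} (which gives $S_{V,W}^{V^G}=|G|S_{V^G,W}^{V^G}$) one checks $S_{V\boxtimes_{V^G}V,W}^{V^G}=|G|S_{V,W}^{V^G}=S_{|G|V,W}^{V^G}$ for every type-one irreducible $V^G$-module $W$, while Lemma \ref{lemmaSMFi=0} and Corollary \ref{corollarySMFi=0} make both sides vanish for type-two $W$; then Lemma \ref{lemmaSameModule} finishes the proof.

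I do not expect any real obstacle here: the statement is a corollary of machinery already assembled. The only point meriting a word of care is confirming that $V\boxtimes_V V\cong V$ and $V\circ g\cong V$ are legitimately available; the former is the unit axiom for the braided tensor category of $V$-modules, and the latter is immediate because $g\in\mathrm{Aut}(V)$. Once these are noted, the conclusion is a one-line direct-sum count.
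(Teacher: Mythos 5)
Your main argument is exactly the paper's proof: the lemma is stated there as an immediate special case of Theorem \ref{theoremMainFusion} with $M=N=V$, using $V\circ g\cong V$ and $V\boxtimes_V V\cong V$, just as you do. The alternative routes you sketch are fine but unnecessary; your first paragraph already reproduces the intended argument.
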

\begin{proof}
     This is a special case of Theorem \ref{theoremMainFusion}.
\end{proof}
\section{Fusion rules for \(\mathscr{M}_{V^G}\), and their quantum dimensions}

\begin{theorem}
    \label{theoremMainProportional}
    Let \(M,N,F\in \mathcal{S}_V(G)\). By orbifold theory, write
    \begin{align*}
        M=&\bigoplus_{\lambda\in\Lambda_{G_M,\alpha_M}}U_{\lambda}\otimes M_{\lambda} \\
        N=&\bigoplus_{\chi\in\Lambda_{G_N,\alpha_N}}W_{\chi}\otimes N_{\chi} \\
        F=&\bigoplus_{\xi\in\Lambda_{G_F,\alpha_F}}V_{\xi}\otimes F_{\xi}.
    \end{align*}
    Then,
    \begin{align*}
        \frac{\langle (M_{\lambda_1}\boxtimes_{V^G}N_{\chi_1}), F \rangle}{\langle (M_{\lambda_2}\boxtimes_{V^G}N_{\chi_2}), F \rangle}=\frac{(\mathrm{qdim}_{V^G}M_{\lambda_1})(\mathrm{qdim}_{V^G}N_{\chi_1})} {(\mathrm{qdim}_{V^G}M_{\lambda_2})(\mathrm{qdim}_{V^G}N_{\chi_2})},
    \end{align*}
    where \(\lambda_1,\lambda_2\in \Lambda_{G_M,\alpha_M}\), and \(\chi_1,\chi_2\in \Lambda_{G_N,\alpha_N}\).
\end{theorem}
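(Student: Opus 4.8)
The plan is to turn the Hermitian pairing into a sum, over irreducible $V^G$-modules, of products of $S$-matrix entries of $V^G$, observe that only the type-one entries survive, and then pull out the quantum dimensions via Theorem~\ref{theoremMainConstant}.

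First, since $V^G$ is rational, $M_{\lambda_1}\boxtimes_{V^G}N_{\chi_1}=\bigoplus_{k}N^{M^k}_{M_{\lambda_1},N_{\chi_1}}M^k$, while as an element of $\mathbf{E}$ we have, exactly as in Remark~\ref{rm:length_M}, $F=\bigoplus_{\xi}(\dim V_\xi)F_\xi$ with the $F_\xi$ pairwise inequivalent; since $\{\iota(M^i)\}$ is an orthonormal basis of $\mathbf{E}$, this gives $\langle M_{\lambda_1}\boxtimes_{V^G}N_{\chi_1},F\rangle=\sum_{\xi}(\dim V_\xi)\,N^{F_\xi}_{M_{\lambda_1},N_{\chi_1}}$. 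Next I would apply the Verlinde formula of Theorem~\ref{theoremVerlindeHuang}(3) for $V^G$ to each fusion number, interchange the resulting double sum, and invoke the extended Definition~\ref{definitionExtension} together with the observation that the contragredient module $F'$ lies again in $\mathcal{S}_V(G)$ and satisfies $\iota(F')=\bigoplus_\xi(\dim V_\xi)(F_\xi)'$, so that $\sum_{\xi}(\dim V_\xi)S^{V^G}_{(F_\xi)',W}=S^{V^G}_{F',W}$. This yields
\begin{equation*}
\langle M_{\lambda_1}\boxtimes_{V^G}N_{\chi_1},F\rangle=\sum_{W\in\mathscr{M}_{V^G}}\frac{S^{V^G}_{M_{\lambda_1},W}\,S^{V^G}_{N_{\chi_1},W}}{S^{V^G}_{V^G,W}}\,S^{V^G}_{F',W}.
\end{equation*}
(Equivalently, one may combine Lemma~\ref{lemmaBasicS-Fusion} with the Parseval expansion of $\langle\cdot,\cdot\rangle$ in the orthonormal basis $\{\epsilon_i\}$ of Remark~\ref{rm:length_M}.)

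The key point comes next: since $F'\in\mathcal{S}_V(G)$, Lemma~\ref{lemmaSII=0} (equivalently $\iota(F')\in\mathbf{W}^A$, Lemma~\ref{lemmaUEqualsWA}) forces $S^{V^G}_{F',W}=0$ for every $W\in\mathscr{M}_{V^G,\mathrm{II}}$, so the sum above is supported on $W\in\mathscr{M}_{V^G,\mathrm{I}}$. For such $W$, Theorem~\ref{theoremMainConstant} tells us that $S^{V^G}_{M_{\lambda_1},W}/\mathrm{qdim}_{V^G}M_{\lambda_1}$ is independent of $\lambda_1$ and that $S^{V^G}_{N_{\chi_1},W}/\mathrm{qdim}_{V^G}N_{\chi_1}$ is independent of $\chi_1$. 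Factoring these (real, positive) quantities out of every summand gives $\langle M_{\lambda_1}\boxtimes_{V^G}N_{\chi_1},F\rangle=(\mathrm{qdim}_{V^G}M_{\lambda_1})(\mathrm{qdim}_{V^G}N_{\chi_1})\,C$ for a constant $C$ depending only on $M,N,F$, and dividing this identity for $(\lambda_1,\chi_1)$ by the one for $(\lambda_2,\chi_2)$ proves the theorem; if $C=0$, both sides vanish and the assertion is to be read as the cross-multiplied identity.

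The nonroutine step is the passage from a Verlinde sum over all of $\mathscr{M}_{V^G}$ to one over $\mathscr{M}_{V^G,\mathrm{I}}$: Theorem~\ref{theoremMainConstant} controls only the type-one entries, so the whole argument hinges on exploiting the hypothesis $F\in\mathcal{S}_V(G)$ (hence $F'\in\mathcal{S}_V(G)$) to annihilate the type-two part through Lemma~\ref{lemmaSII=0}. Everything else is the bookkeeping of the orbifold decompositions and of the Verlinde formula.
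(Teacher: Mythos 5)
Your proposal is correct and is essentially the paper's own argument: both reduce \(\langle M_{\lambda_1}\boxtimes_{V^G}N_{\chi_1},F\rangle\) to a sum over type-one irreducible \(V^G\)-modules of products of \(S^{V^G}\)-entries, kill the type-two contributions with Lemma \ref{lemmaSII=0} (using that \(F\), resp.\ \(F'\), lies in \(\mathcal{S}_V(G)\)), and then factor out the quantum dimensions via Theorem \ref{theoremMainConstant}. The only cosmetic difference is that you arrive at the type-one-supported sum through the Verlinde formula applied to the fusion coefficients and the contragredient \(F'\), whereas the paper uses the unitary (Parseval) expansion \(\langle X,F\rangle=\sum_{E_i}S^{V^G}_{X,E_i}\overline{S^{V^G}_{F,E_i}}\) together with Lemma \ref{lemmaBasicS-Fusion} and the explicit constants \(S^{V^G}_{M_{\lambda},E_i}=\frac{\dim U_{\lambda}}{|G_M|}S^{V^G}_{M,E_i}\) -- a variant you yourself note is equivalent.
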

\begin{proof}
    Note that the \(S\)-matrix is unitary. By Lemma \ref{lemmaSII=0} \(S_{F,E_i}^{V^G}=0\), for \(E_i\in \mathscr{M}_{V^G,II}\). Thus,
    \begin{align*}
        \langle M, F\rangle_{V^G}=&\sum_{E_i\in \mathscr{M}_{V^G}}S_{M,E_i}^{V^G}\overline{S_{F,E_i}^{V^G}} \\
        =&\sum_{E_i\in \mathscr{M}_{V^G,\mathrm{I}}}S_{M,E_i}^{V^G}\overline{S_{F,E_i}^{V^G}}.
    \end{align*}
    Apply Corollary \ref{corollarySMFi=0}, Lemma \ref{lemmaBasicS-Fusion}, and Equation \ref{equationJunFen}
    \begin{align*}
        \langle (M_{\lambda_1}\boxtimes_{V^G}N_{\chi_1}), F \rangle
        =&\sum_{E_i\in \mathscr{M}_{V^G,\mathrm{I}}}S_{(M_{\lambda_1}\boxtimes_{V^G}N_{\chi_1}),E_i}^{V^G} \overline{S_{F,E_i}^{V^G}}\\
        =&\sum_{E_i\in \mathscr{M}_{V^G,\mathrm{I}}}\frac{1}{S_{V^G,E_i}^{V^G}}S_{M_{\lambda_1},E_i}^{V^G} S_{N_{\chi_1},E_i}^{V^G} \overline{S_{F,E_i}^{V^G}}\\
        =&\sum_{E_i\in \mathscr{M}_{V^G,\mathrm{I}}}\frac{1}{S_{V^G,E_i}^{V^G}}\frac{\mathrm{dim}U_{\lambda_1}}{|G_M|} S_{M,E_i}^{V^G} \frac{\mathrm{dim}W_{\chi_1}}{|G_N|}S_{N,E_i}^{V^G} \overline{S_{F,E_i}^{V^G}}.
    \end{align*}
    Notice that the value, \(\frac{1}{S_{V^G,E_i}^{V^G}}\frac{1}{|G_M|} S_{M,E_i}^{V^G} \frac{1}{|G_N|}S_{N,E_i}^{V^G} \overline{S_{F,E_i}^{V^G}}\), is independent of choices of \(\lambda_1\) and \(\chi_1\). By Equation \ref{equationOrbifoldProportional}, the desired result follows
        \begin{align*}
        \frac{\langle (M_{\lambda_1}\boxtimes_{V^G}N_{\chi_1}), F \rangle}{\langle (M_{\lambda_2}\boxtimes_{V^G}N_{\chi_2}), F \rangle}=&\frac{(\mathrm{dim}U_{\lambda_1})(\mathrm{dim}W_{\chi_1})} {(\mathrm{dim}U_{\lambda_2})(\mathrm{dim}W_{\chi_2})} \\
        =&\frac{(\mathrm{qdim}_{V^G}M_{\lambda_1})(\mathrm{qdim}_{V^G}N_{\chi_1})} {(\mathrm{qdim}_{V^G}M_{\lambda_2})(\mathrm{qdim}_{V^G}N_{\chi_2})}.
    \end{align*}
\end{proof}
\begin{remark}
    Theorem \ref{theoremMainProportional} shows that
    \begin{align*}
        \langle (M_{\lambda_1}\boxtimes_{V^G}N_{\chi_1}), F \rangle=& \langle \sum_{\xi\in \Lambda_{G_F,\alpha_F}}N_{M_{\lambda_1},N_{\chi_1}}^{F_{\xi}}F_{\xi}, \sum_{\xi\in \Lambda_{G_F,\alpha_F}}(\mathrm{dim}V_{\xi})F_{\xi}\rangle \\
        =&\sum_{\xi\in \Lambda_{G_F,\alpha_F}}N_{M_{\lambda_1},N_{\chi_1}}^{F_{\xi}} (\mathrm{dim}V_{\xi}).
    \end{align*}
    By Equation \ref{equationOrbifoldProportional} and Theorem \ref{theoremMainProportional}, it follows that
    \begin{align*}
        \frac{\langle (M_{\lambda_1}\boxtimes_{V^G}N_{\chi_1}), F \rangle}{\langle (M_{\lambda_2}\boxtimes_{V^G}N_{\chi_2}), F \rangle}=& \frac{\sum_{\xi\in \Lambda_{G_F,\alpha_F}}N_{M_{\lambda_1},N_{\chi_1}}^{F_{\xi}} (\mathrm{dim}V_{\xi})} {\sum_{\xi\in \Lambda_{G_F,\alpha_F}}N_{M_{\lambda_2},N_{\chi_2}}^{F_{\xi}} (\mathrm{dim}V_{\xi})} \\
        =&\frac{\sum_{\xi\in \Lambda_{G_F,\alpha_F}}N_{M_{\lambda_1},N_{\chi_1}}^{F_{\xi}} (\mathrm{qdim}F_{\xi})} {\sum_{\xi\in \Lambda_{G_F,\alpha_F}}N_{M_{\lambda_2},N_{\chi_2}}^{F_{\xi}} (\mathrm{qdim}F_{\xi})} \\
        =&\frac{(\mathrm{qdim}_{V^G}M_{\lambda_1})(\mathrm{qdim}_{V^G}N_{\chi_1})} {(\mathrm{qdim}_{V^G}M_{\lambda_2})(\mathrm{qdim}_{V^G}N_{\chi_2})},
    \end{align*}
    This means, inside every irreducible \(g\)-twisted \(V\)-module \(F\), fusion rules of irreducible \(V^G\)-modules distribute "proportionally" to the product of their quantum dimensions.
\end{remark}

\begin{corollary}\label{coro:qdim_partial}
Use the same notations in Theorem \ref{theoremMainProportional}. We have
\begin{equation*}
  \qdim_{V^G}(\bigoplus_{\xi\in \Lambda_{G_F,\alpha_F}}N_{M_{\lambda_1},N_{\chi_1}}^{F_{\xi}}(F_\xi))
  =\frac{\dim(W_{\chi_1})\dim(U_{\lambda_1})}{|G_M||G_N|}\frac{\qdim_V(F)}{|G_F|}\sum_{g,h,l\in G}N^{F\circ h}_{M\circ l,N\circ g}
\end{equation*}
\end{corollary}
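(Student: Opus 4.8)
\noindent The plan is to turn the left-hand side into a value of the Hermitian form \(\langle\,\cdot\,,F\rangle\) on \(\mathbf{E}\), and then evaluate that form in two moves: first pass from the components \(M_{\lambda_1},N_{\chi_1}\) to the whole twisted modules \(M,N\) using the proportionality in Theorem \ref{theoremMainProportional}, and then rewrite \(\langle M\boxtimes_{V^G}N,F\rangle\) in terms of the \(V\)-fusion numbers via Theorem \ref{theoremMainFusion} together with the orthogonality relation \eqref{eq:orth_TModule}.

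\noindent First, since quantum dimension is additive over direct sums, the left-hand side equals \(\sum_{\xi}N^{F_{\xi}}_{M_{\lambda_1},N_{\chi_1}}\qdim_{V^G}F_{\xi}\). Applying \eqref{equationOrbifoldProportional} to \(F=\bigoplus_{\xi}V_{\xi}\otimes F_{\xi}\) gives \(\qdim_{V^G}F_{\xi}=\frac{|G|}{|G_F|}(\dim V_{\xi})(\qdim_V F)\), so \(\qdim_{V^G}F_{\xi}\) is a fixed scalar times \(\dim V_{\xi}\); combined with the identity \(\langle M_{\lambda_1}\boxtimes_{V^G}N_{\chi_1},F\rangle=\sum_{\xi}N^{F_{\xi}}_{M_{\lambda_1},N_{\chi_1}}\dim V_{\xi}\) from the remark after Theorem \ref{theoremMainProportional} this yields
\[
\qdim_{V^G}\Bigl(\bigoplus_{\xi}N^{F_{\xi}}_{M_{\lambda_1},N_{\chi_1}}F_{\xi}\Bigr)=\frac{|G|\,\qdim_V F}{|G_F|}\,\langle M_{\lambda_1}\boxtimes_{V^G}N_{\chi_1},F\rangle .
\]
Next, viewing \(M\cong\bigoplus_{\lambda}(\dim U_{\lambda})M_{\lambda}\) and \(N\cong\bigoplus_{\chi}(\dim W_{\chi})N_{\chi}\) as \(V^G\)-modules, using bilinearity of \(\boxtimes_{V^G}\), Theorem \ref{theoremMainProportional}, and the identities \(\sum_{\lambda}(\dim U_{\lambda})^2=|G_M|\), \(\sum_{\chi}(\dim W_{\chi})^2=|G_N|\) from Remark \ref{rm:length_M},
\begin{align*}
\langle M\boxtimes_{V^G}N,F\rangle&=\sum_{\lambda,\chi}(\dim U_{\lambda})(\dim W_{\chi})\,\langle M_{\lambda}\boxtimes_{V^G}N_{\chi},F\rangle\\
&=\frac{\langle M_{\lambda_1}\boxtimes_{V^G}N_{\chi_1},F\rangle}{(\dim U_{\lambda_1})(\dim W_{\chi_1})}\Bigl(\sum_{\lambda}(\dim U_{\lambda})^{2}\Bigr)\Bigl(\sum_{\chi}(\dim W_{\chi})^{2}\Bigr)\\
&=\frac{|G_M|\,|G_N|}{(\dim U_{\lambda_1})(\dim W_{\chi_1})}\,\langle M_{\lambda_1}\boxtimes_{V^G}N_{\chi_1},F\rangle ,
\end{align*}
so \(\langle M_{\lambda_1}\boxtimes_{V^G}N_{\chi_1},F\rangle=\frac{(\dim U_{\lambda_1})(\dim W_{\chi_1})}{|G_M|\,|G_N|}\langle M\boxtimes_{V^G}N,F\rangle\).

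\noindent It then remains to evaluate \(\langle M\boxtimes_{V^G}N,F\rangle\). By Theorem \ref{theoremMainFusion}, \(M\boxtimes_{V^G}N=\bigoplus_{g\in G}M\boxtimes_{V}(N\circ g)\); expanding \(M\boxtimes_{V}(N\circ g)=\sum_{W\in\mathscr{M}_V}N^{W}_{M,N\circ g}W\) and using \eqref{eq:orth_TModule} (so \(\langle W,F\rangle=|G_F|\) for \(W\in\mathscr{O}(F)\) and \(0\) otherwise) gives \(\langle M\boxtimes_{V^G}N,F\rangle=|G_F|\sum_{g\in G}\sum_{W\in\mathscr{O}(F)}N^{W}_{M,N\circ g}\). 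By orbit--stabilizer, \(F\circ h\) runs over \(\mathscr{O}(F)\) covering each element \(|G_F|\) times, so \(\sum_{W\in\mathscr{O}(F)}N^{W}_{M,N\circ g}=\frac{1}{|G_F|}\sum_{h\in G}N^{F\circ h}_{M,N\circ g}\), whence \(\langle M\boxtimes_{V^G}N,F\rangle=\sum_{g,h\in G}N^{F\circ h}_{M,N\circ g}\). Finally, since simultaneous twisting by any \(l\in G\) is an equivalence of intertwining-operator spaces, \(N^{F\circ h}_{M,N\circ g}=N^{F\circ hl}_{M\circ l,N\circ gl}\); averaging over \(l\) and reindexing \((h,g)\mapsto(hl,gl)\) gives \(\langle M\boxtimes_{V^G}N,F\rangle=\frac{1}{|G|}\sum_{g,h,l\in G}N^{F\circ h}_{M\circ l,N\circ g}\). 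Chaining the three boxed relations produces exactly the asserted formula.

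\noindent The one delicate point is that Theorem \ref{theoremMainFusion} is stated for untwisted modules, while here \(M,N,F\in\mathcal{S}_V(G)\) may be genuinely twisted, so one must check that \(M\boxtimes_{V^G}N=\bigoplus_{g}M\boxtimes_{V}(N\circ g)\) and the \(S\)-matrix computations supporting it remain valid in the twisted setting; this is where I expect the real work to lie. Everything else — the orbit count and the reindexing that turns the double sum into \(|G|\) copies inside the triple sum — is routine bookkeeping.
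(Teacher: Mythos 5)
Your proposal is correct and follows essentially the same route as the paper's own proof: reduce the left-hand side to $\langle M_{\lambda_1}\boxtimes_{V^G}N_{\chi_1},F\rangle$ via the quantum-dimension formula, pass to $\langle M\boxtimes_{V^G}N,F\rangle$ using Theorem \ref{theoremMainProportional} and $\sum_\lambda(\dim U_\lambda)^2=|G_M|$, and evaluate that pairing with Theorem \ref{theoremMainFusion}, the orthogonality relation \eqref{eq:orth_TModule}, and the invariance $N^{F}_{M,N}=N^{F\circ l}_{M\circ l,N\circ l}$ to produce the triple sum. The caveat you raise about applying Theorem \ref{theoremMainFusion} to genuinely twisted $M,N,F\in\mathcal{S}_V(G)$ is a fair observation, but the paper's proof makes exactly the same silent extension, so it is not a gap relative to the paper's argument.
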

\begin{proof}
Note that
\begin{equation*}
 \sum_{\chi\in \Lambda_{G_N,\alpha_N},\lambda\in \Lambda_{G_M,\alpha_M}} \dim(W_{\chi})^2\dim(U_{\lambda})^2=|G_M||G_N|.
\end{equation*}
Recall that
\begin{equation*}
 M\boxtimes_{V^G}N=\sum_{\chi\in \Lambda_{G_N,\alpha_N},\lambda\in \Lambda_{G_M,\alpha_M}}\dim(W_{\chi})\dim(U_{\lambda}) M_\lambda\boxtimes_{V^G} N_\chi.
\end{equation*}
By Theorem \ref{theoremMainProportional}, it follows that
 \begin{equation*}
  \sum_{\xi\in \Lambda_{G_F,\alpha_F}}N_{M_{\lambda_1},N_{\chi_1}}^{F_{\xi}}(\dim(V_\xi))
  =\frac{\dim(W_{\chi_1})\dim(U_{\lambda_1})\langle M\boxtimes_{V^G}N, F \rangle}{|G_M||G_N|}\sum_{g,h\in G}N^{F\circ h}_{M,N\circ g}.
\end{equation*}
Note that
\begin{equation*}
 \qdim_{V^G}(F_\xi)=\dim V_\xi \qdim_V(F)\frac{|G|}{|G_F|},
\end{equation*}
and hence
\begin{align*}
  \qdim_{V^G}(\bigoplus_{\xi\in \Lambda_{G_F,\alpha_F}}N_{M_{\lambda_1},N_{\chi_1}}^{F_{\xi}}(F_\xi))
  &=\sum_{\xi\in \Lambda_{G_F,\alpha_F}}N_{M_{\lambda_1},N_{\chi_1}}^{F_{\xi}}(\dim(V_\xi))\qdim_V(F)\frac{|G|}{|G_F|}\\
  &=\frac{\dim(W_{\chi_1})\dim(U_{\lambda_1})\langle M\boxtimes_{V^G}N, F \rangle}{|G_M||G_N|}\frac{\qdim_V(F)|G|}{|G_F|}.
\end{align*}
So, by Theorem \ref{theoremMainFusion} and Equation \eqref{eq:orth_TModule}, we have
\begin{align*}
 \langle M\boxtimes_{V^G}N, F \rangle&=\sum_{g\in G}\langle M\boxtimes_{V}N\circ g, F \rangle\\
 &=\sum_{g\in G}\langle \sum_{E\in \mathscr{M}_{V}}N^{E}_{M,N\circ g}E,F\rangle \\
 &=\sum_{g\in G}\langle \sum_{E\in \mathscr{O}(F)}N^{E}_{M,N\circ g}E,F\rangle \\
 &=\sum_{g,h\in G}\langle N^{F\circ h}_{M,N\circ g}F\circ h,F\rangle \frac{|\mathscr{O}(F)|}{|G|}.
\end{align*}
Note that $\iota (F\circ h)= \iota (F)$ as vectors in $\mathbf{E}_{V^G}$.
By Remark \ref{rm:length_M}, we have
\begin{align*}
 \langle M\boxtimes_{V^G}N, F \rangle=&\sum_{g,h\in G}N^{F\circ h}_{M,N\circ g}\langle F,F\rangle \frac{|\mathscr{O}(F)|}{|G|}\\
 =&\sum_{g,h\in G}N^{F\circ h}_{M,N\circ g}|G_F|\frac{|\mathscr{O}(F)|}{|G|}\\
 =&\sum_{g,h\in G}N^{F\circ h}_{M,N\circ g}.
\end{align*}
Lemma \ref{lemmaCircCirc}, and Verlinde formula implies $N_{M,N}^F=N^{F\circ g}_{M\circ g,N\circ g}$. So, it follows that
\begin{equation*}
 |G|\sum_{g,h\in G}N^{F\circ h}_{M,N\circ g}=\sum_{g,h,l\in G}N^{F\circ h}_{M\circ l,N\circ g}.
\end{equation*}
Then, we have the desired result.
\end{proof}
\begin{remark}
\label{remarkInequality}
Use the same notations in Reference \cite{tanabe2005intertwining}, assume that $M,N,F\in \mathscr{M}_V$.
Set
\[
 \mathcal{I}=\bigoplus_{(L^1,L^2,L^3)\in \mathscr{O}(M)\times\mathscr{O}(N)\times\mathscr{O}(F)}I_V\begin{pmatrix}
                                                                                                    L^3\\
                                                                                                    \begin{matrix}
                                                                                                    L^1&L^2
                                                                                                    \end{matrix}
                                                                                                    \end{pmatrix}\otimes L^1\otimes L^2.
\]
Let $X^1=\Ind^{D(M)}_{S(M)}U_{\lambda_1}$, and $X^2=\Ind^{D(N)}_{S(N)}W_{\chi_1}$.
Set
\[
 \mathcal{I}_{\lambda_1,\chi_1}(v^1,v^2)
 =\Span_{\mathbb{C}}\{f\otimes(w^1\otimes v^1)\otimes(w^2\otimes v^2)\in \mathcal{I}|w^1\in X^1,w^2\in X^2\}
\]
with $v^1\in M_{\lambda_1}$,$v^2\in N_{\chi_1}$.
Then by Reference \cite[Theorem 2]{tanabe2005intertwining}, we have
\begin{equation}\label{eq:inequality_fusion}
 N^{F_{\xi}}_{M_{\lambda_1},N_{\chi_1}}\geq
 \dim_{\mathbb{C}}\Hom_{\mathcal{A}_{\alpha_{\mathscr{O}(F)}}(G,\mathscr{O}(F))}(\Ind^{D(F)}_{S(F)}V_{\xi},\mathcal{I}_{\lambda_1,\chi_1}(v^1,v^2)).
\end{equation}
Under certain assumption, this inequality is an equation.
\end{remark}
\begin{theorem}
Let $M,N,F\in\mathscr{M}_V$. Use the same notations in Remark \ref{remarkInequality}.
We have
\[
 N^{F_{\xi}}_{M_{\lambda_1},N_{\chi_1}}=
 \dim_{\mathbb{C}}\Hom_{\mathcal{A}_{\alpha_{\mathscr{O}(F)}}(G,\mathscr{O}(F))}(\mathrm{Ind}^{D(F)}_{S(F)}V_{\xi},\mathcal{I}_{\lambda_1,\chi_1}(v^1,v^2)).
\]
 \end{theorem}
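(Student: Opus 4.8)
The plan is to upgrade the Tanabe inequality \eqref{eq:inequality_fusion} to an equality by a global counting argument. For the fixed module $F$, write $t_\xi$ for the right-hand side of \eqref{eq:inequality_fusion}, so that \eqref{eq:inequality_fusion} reads $N^{F_\xi}_{M_{\lambda_1},N_{\chi_1}}-t_\xi\ge 0$ for every $\xi\in\Lambda_{G_F,\alpha_F}$. Weight each index $\xi$ by $\dim_{\mathbb{C}}V_\xi$, which is strictly positive (and, up to the $F$-dependent positive constant $|G:G_F|\,\qdim_V F$, agrees with weighting by $\qdim_{V^G}F_\xi$ via the formula $\qdim_{V^G}F_\xi=|G:G_F|(\dim V_\xi)(\qdim_V F)$). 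It then suffices to prove $\sum_{\xi}(\dim V_\xi)\,N^{F_\xi}_{M_{\lambda_1},N_{\chi_1}}=\sum_{\xi}(\dim V_\xi)\,t_\xi$; for then $\sum_\xi(\dim V_\xi)(N^{F_\xi}_{M_{\lambda_1},N_{\chi_1}}-t_\xi)=0$ is a sum of nonnegative terms, forcing $N^{F_\xi}_{M_{\lambda_1},N_{\chi_1}}=t_\xi$ for every $\xi$.

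First I would compute the left-hand total. Pairing $M_{\lambda_1}\boxtimes_{V^G}N_{\chi_1}$ against $F=\bigoplus_\xi V_\xi\otimes F_\xi$ in the orthonormal basis of Remark~\ref{rm:length_M} gives $\sum_{\xi}(\dim V_\xi)\,N^{F_\xi}_{M_{\lambda_1},N_{\chi_1}}=\langle M_{\lambda_1}\boxtimes_{V^G}N_{\chi_1},F\rangle$. Combining the decomposition $M\boxtimes_{V^G}N=\sum_{\lambda,\chi}(\dim U_\lambda)(\dim W_\chi)\,M_\lambda\boxtimes_{V^G}N_\chi$ with the proportionality of Theorem~\ref{theoremMainProportional} rewrites this as $\frac{(\dim U_{\lambda_1})(\dim W_{\chi_1})}{|G_M||G_N|}\langle M\boxtimes_{V^G}N,F\rangle$, and, exactly as in the proof of Corollary~\ref{coro:qdim_partial} (via Theorem~\ref{theoremMainFusion}, \eqref{eq:orth_TModule}, and $N^{F}_{M,N}=N^{F\circ g}_{M\circ g,N\circ g}$ from Lemma~\ref{lemmaCircCirc} and the Verlinde formula), $\langle M\boxtimes_{V^G}N,F\rangle=\sum_{g,h\in G}N^{F\circ h}_{M,N\circ g}=\frac{1}{|G|}\sum_{g,h,l\in G}N^{F\circ h}_{M\circ l,N\circ g}$. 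An orbit-stabilizer count (each element of $\mathscr{O}(M)$ is hit $|G_M|$ times as $l$ ranges over $G$, and similarly for $\mathscr{O}(N)$ and $\mathscr{O}(F)$) rewrites the triple sum as $|G_M||G_N||G_F|\sum_{(L^1,L^2,L^3)}N^{L^3}_{L^1,L^2}$, the sum over $\mathscr{O}(M)\times\mathscr{O}(N)\times\mathscr{O}(F)$. Hence the left-hand total equals $\frac{(\dim U_{\lambda_1})(\dim W_{\chi_1})}{|\mathscr{O}(F)|}\sum_{(L^1,L^2,L^3)}N^{L^3}_{L^1,L^2}$, using $|\mathscr{O}(F)|=|G|/|G_F|$.

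Next I would compute the right-hand total. The algebra $\mathcal{A}:=\mathcal{A}_{\alpha_{\mathscr{O}(F)}}(G,\mathscr{O}(F))$ of \cite{tanabe2005intertwining} is finite dimensional and semisimple, with $\{\mathrm{Ind}^{D(F)}_{S(F)}V_\xi\}_{\xi\in\Lambda_{G_F,\alpha_F}}$ a complete set of pairwise inequivalent irreducible modules; so the finite-dimensional $\mathcal{A}$-module $\mathcal{I}_{\lambda_1,\chi_1}(v^1,v^2)$ decomposes as $\bigoplus_\xi(\mathrm{Ind}^{D(F)}_{S(F)}V_\xi)^{\oplus t_\xi}$ and $\sum_\xi t_\xi\dim_{\mathbb{C}}\mathrm{Ind}^{D(F)}_{S(F)}V_\xi=\dim_{\mathbb{C}}\mathcal{I}_{\lambda_1,\chi_1}(v^1,v^2)$. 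Since $\dim_{\mathbb{C}}\mathrm{Ind}^{D(F)}_{S(F)}V_\xi=[D(F):S(F)]\dim_{\mathbb{C}}V_\xi=|\mathscr{O}(F)|\dim_{\mathbb{C}}V_\xi$, this gives $\sum_\xi(\dim V_\xi)t_\xi=|\mathscr{O}(F)|^{-1}\dim_{\mathbb{C}}\mathcal{I}_{\lambda_1,\chi_1}(v^1,v^2)$. It remains to evaluate the last dimension from the definition in Remark~\ref{remarkInequality}: for fixed nonzero $v^1\in M_{\lambda_1}$ and $v^2\in N_{\chi_1}$, the maps $w^1\mapsto w^1\otimes v^1$ and $w^2\mapsto w^2\otimes v^2$ are injective on $X^1=\mathrm{Ind}^{D(M)}_{S(M)}U_{\lambda_1}$ and $X^2=\mathrm{Ind}^{D(N)}_{S(N)}W_{\chi_1}$, the part of $X^1$ landing in a fixed $L^1\in\mathscr{O}(M)$ has dimension $\dim U_{\lambda_1}$ (and likewise $\dim W_{\chi_1}$ for $X^2$ and $L^2\in\mathscr{O}(N)$, noting $L^i\cong$ the relevant $V^G$-module in its orbit), and the resulting vectors are linearly independent across the summands of $\mathcal{I}$; thus $\dim_{\mathbb{C}}\mathcal{I}_{\lambda_1,\chi_1}(v^1,v^2)=(\dim U_{\lambda_1})(\dim W_{\chi_1})\sum_{(L^1,L^2,L^3)}N^{L^3}_{L^1,L^2}$. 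Dividing by $|\mathscr{O}(F)|$ reproduces the left-hand total, and the theorem follows.

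I expect the main obstacle to be this last dimension computation: one must unwind Tanabe's induced-module constructions carefully enough to confirm that the obvious spanning set of $\mathcal{I}_{\lambda_1,\chi_1}(v^1,v^2)$ is in fact a basis (the linear-independence part is the delicate point, and it is also what makes the count independent of the choice of nonzero $v^1,v^2$), and one must record from \cite{tanabe2005intertwining} the semisimplicity of $\mathcal{A}_{\alpha_{\mathscr{O}(F)}}(G,\mathscr{O}(F))$, the classification of its irreducibles as the $\mathrm{Ind}^{D(F)}_{S(F)}V_\xi$, and the index identity $[D(F):S(F)]=|\mathscr{O}(F)|$. Everything else is orbit-stabilizer bookkeeping together with the formulas already established in Sections~3--5.
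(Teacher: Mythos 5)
Your proposal is correct and follows essentially the same route as the paper: both arguments weight Tanabe's termwise inequality by the strictly positive quantities $\dim V_\xi$ (equivalently $\qdim_{V^G}F_\xi$), compute the $N$-side total via $\langle M_{\lambda_1}\boxtimes_{V^G}N_{\chi_1},F\rangle$ and orbit--stabilizer sums (the paper packages this as Corollary \ref{coro:qdim_partial}), compute the $H$-side total from the semisimplicity of $\mathcal{A}_{\alpha_{\mathscr{O}(F)}}(G,\mathscr{O}(F))$ together with $\dim\mathcal{I}_{\lambda_1,\chi_1}(v^1,v^2)$, and conclude equality termwise from positivity. The only difference is bookkeeping: you re-derive the content of Corollary \ref{coro:qdim_partial} inline and flag the linear-independence step behind $\dim\mathcal{I}_{\lambda_1,\chi_1}(v^1,v^2)$, which the paper asserts as an isomorphism.
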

\begin{proof}
Let $H^{F_\xi}_{M_{\lambda_1},N_{\chi_1}}
=\dim_{\mathbb{C}}\Hom_{\mathcal{A}_{\alpha_{\mathscr{O}(F)}}(G,\mathscr{O}(F))}(\Ind^{D(F)}_{S(F)}V_{\xi},\mathcal{I}_{\lambda_1,\chi_1}(v^1,v^2))$.
Note that \[\qdim_{V^G}(F_\xi)=\dim V_\xi \qdim_V(F)\frac{|G|}{|G_F|}.\]
We have
 \begin{align*}
  \qdim_{V^G}(\bigoplus_{\xi\in \Lambda_{G_F,\alpha_F}}H_{M_{\lambda_1},N_{\chi_1}}^{F_{\xi}}(F_\xi))=
  \sum_{\xi\in \Lambda_{G_F,\alpha_F}}H_{M_{\lambda_1},N_{\chi_1}}^{F_{\xi}}\dim(\Ind^{D(F)}_{S(F)}V_\xi)\qdim_V(F).
 \end{align*}
Recall that \( \mathcal{A}_{\alpha_{\mathscr{O}(F)}}(G,\mathscr{O}(F)) \) is semisimple, and its simple modules are precisely $\Ind^{D(F)}_{S(F)}V_\xi$.
It follows that
\begin{equation*}
 \sum_{\xi\in \Lambda_{G_F,\alpha_F}}H_{M_{\lambda_1},N_{\chi_1}}^{F_{\xi}}\dim(\Ind^{D(F)}_{S(F)}V_\xi)\qdim_V(F)
 =\qdim_V(F)\dim(\mathcal{I}_{\lambda_1,\chi_1}(v^1,v^2)).
\end{equation*}
Note that
\begin{equation*}
\mathcal{I}_{\lambda_1,\chi_1}(v^1,v^2)\cong \bigoplus_{(L^1,L^2,L^3)\in \mathscr{O}(M)\times\mathscr{O}(N)\times\mathscr{O}(F)}I_V\begin{pmatrix}
                                                                                                    L^3\\
                                                                                                    \begin{matrix}
                                                                                                    L^1&L^2
                                                                                                    \end{matrix}
                                                                                                    \end{pmatrix}\otimes U_{\lambda_1}\otimes W_{\chi_1}.
\end{equation*}
So, we have
\begin{align*}
 \dim(\mathcal{I}_{\lambda_1,\chi_1}(v^1,v^2))&=\sum_{(L^1,L^2,L^3)\in \mathscr{O}(M)\times\mathscr{O}(N)\times\mathscr{O}(F)}\dim I_V\begin{pmatrix}
                                                                                                    L^3\\
                                                                                                    \begin{matrix}
                                                                                                    L^1&L^2
                                                                                                    \end{matrix}
                                                                                                    \end{pmatrix}\dim U_{\lambda_1}\dim W_{\chi_1}\\
 &=\sum_{(L^1,L^2,L^3)\in \mathscr{O}(M)\times\mathscr{O}(N)\times\mathscr{O}(F)}F^{L^3}_{L_1,L_2}\dim U_{\lambda_1}\dim W_{\chi_1}\\
 &=\frac{\dim(W_{\chi_1})\dim(U_{\lambda_1})}{|G_M||G_N||G_F|}\sum_{g,h,l\in G}N^{F\circ h}_{M\circ l,N\circ g}.
\end{align*}
It follows that
\begin{equation*}
 \sum_{\xi\in \Lambda_{G_F,\alpha_F}}H_{M_{\lambda_1},N_{\chi_1}}^{F_{\xi}}\qdim_{V^G}F_{\xi}
  =\frac{\dim(W_{\chi_1})\dim(U_{\lambda_1})}{|G_M||G_N|}\frac{\qdim_V(F)}{|G_F|}\sum_{g,h,l\in G}N^{F\circ h}_{M\circ l,N\circ g}.
\end{equation*}
By Corollary \ref{coro:qdim_partial}, we have
\begin{equation*}
 \sum_{\xi\in \Lambda_{G_F,\alpha_F}}N_{M_{\lambda_1},N_{\chi_1}}^{F_{\xi}}\qdim_{V^G}F_{\xi}
 =\sum_{\xi\in \Lambda_{G_F,\alpha_F}}H_{M_{\lambda_1},N_{\chi_1}}^{F_{\xi}}\qdim_{V^G}F_{\xi}.
\end{equation*}
Since we have $\qdim_{V^G}(F_\xi)>0$ and \(N_{M_{\lambda_1},N_{\chi_1}}^{F_{\xi}}\geq H_{M_{\lambda_1},N_{\chi_1}}^{F_{\xi}}\), the desired result follows.
\end{proof}
\bibliographystyle{plain}
\bibliography{LiuyiZhangRef}
\end{document}